\documentclass[10pt]{article}
 
\usepackage[a4paper,left=17mm,top=25mm,right=17mm,bottom=25mm, marginparsep=.1in]{geometry}

\usepackage{authblk}
\usepackage[numbers, sort&compress]{natbib}
\usepackage[margin=30pt,font=small,labelfont=bf]{caption}
\usepackage{amsmath,amsopn,amsthm,amssymb}

\usepackage{tikz}
\usetikzlibrary{positioning,arrows.meta}

\usepackage{graphicx}
\usepackage{mathptmx} 
\usepackage[mathscr]{euscript}
\usepackage{mathtools}
\usepackage{float}
\usepackage{xspace}
\usepackage{scalerel}
\usepackage{marginnote}
\usepackage{enumitem}
\setlist[enumerate,1]{label={(\arabic*)}}

\usepackage[
 	colorlinks=true,
	urlcolor=black,
	linkcolor=blue
]{hyperref}

\newtheorem{theorem}{Theorem}[section]
\newtheorem{proposition}[theorem]{Proposition}

\newtheorem{lemma}[theorem]{Lemma} 
\newtheorem{corollary}[theorem]{Corollary}

\newtheorem{definition}[theorem]{Definition}
\newtheorem{observation}[theorem]{Observation}

\newtheorem*{remark}{Remark}

\newcommand{\lca}{\ensuremath{\operatorname{lca}}}
\newcommand{\Hasse}{\mathscr{H}}
\newcommand{\srel}{\blacktriangleleft}
\newcommand{\rel}{\trianglelefteq}
\newcommand{\axiom}[1]{\textnormal{\textbf{(#1)}}}
\newcommand{\pairs}{\mathcal{P}_2}
\newcommand{\support}{\ensuremath{\operatorname{supp}}}
\DeclareMathOperator{\parent}{par}
\DeclareMathOperator{\indeg}{indeg}
\DeclareMathOperator{\outdeg}{outdeg}
\DeclareMathOperator{\CC}{\mathtt{C}}

\newcommand{\cG}{\mathscr{G}}

\newcommand{\lcaV}{\mathfrak{lca}_2}
\newcommand{\notlcaV}{\overline{\mathfrak{lca}}_2}
\newcommand{\reg}{\textsc{Reg}_2}

    {\begin{description}[leftmargin = 0.2cm, labelsep = 0.2cm]}
    {\end{description}}

\providecommand{\keywords}[1]{\textbf{\textit{Keywords: }} #1}

\title{Encoding Phylogenetic Networks with Least Common Ancestor Constraints}
\small

\author[1,*]{Marc Hellmuth} 

\author[1]{Anna Lindeberg} 

\author[2]{Vincent Moulton} 

\affil[1]{Department of Mathematics, Faculty of Science,
  Stockholm University, SE-10691 Stockholm, Sweden} 

\affil[2]{School of Computing Sciences, 
University of East Anglia, Norwich, NR4 7TJ, United Kingdom}

\affil[*]{corresponding author}

\date{\ }

\begin{document}
\sloppy

\maketitle

\abstract{ 
Encoding phylogenetic networks by suitable substructures is a central problem in phylogenetic
combinatorics. In this paper, we study a type of encoding based on least common ancestor (LCA)
constraints. For a directed acyclic graph $G$ with leaf set $X$, we consider the relation $\rel_G$
on pairs of leaves, where $(ab,xy)\in\rel_G$ records that the least common ancestors $\lca_G(ab)$
and $\lca_G(xy)$ for the leaves $a,b,x,y$ in $X$ are well-defined and that $\lca_G(ab)$ is a descendant of $\lca_G(xy)$. Thus,
$\rel_G$ captures the ancestor order among all well-defined pairwise LCAs of $G$.

We first identify precisely which part of a directed acyclic graph (DAG) or network $G$ 
is determined by the relation $\rel_G$. To this end, we
compare the so-called canonical DAG $\cG_{\rel_G}$ that is constructed from $\rel_G$ with the 2-regularization
$\reg(G)$ that is obtained by removing vertices from $G$ that are not LCAs of one or two leaves and then deleting
shortcut edges. More specifically, we prove that $\cG_{\rel_G}$ and $\reg(G)$ are isomorphic. Hence, the obstruction to
encoding a DAG by $\rel_G$ is exactly the information lost under 2-regularization.

This yields a general reconstruction principle, which we apply to several natural classes of
phylogenetic networks. In particular, we show, among others, that shortcut-free 2-lca-relevant DAGs, phylogenetic
trees, regular level-1 networks, regular networks with binary clustering systems, regular networks
whose clustering systems are closed weak hierarchies, strong-phylogenetic normal networks, separated
phylogenetic normal networks, and binary normal networks are all encoded by $\rel$. 

We also introduce the sparse triple-like restriction $\rel_G^3$ of $\rel_G$, consisting only of comparisons of the form $(ab,ac)$ with
$a,b,c$ in $X$ pairwise distinct. For DAGs with the 2-lca-property, we show that $\rel_G^3$, together with
the leaf set, determines $\rel_G$ after a natural closure operation. Consequently, several of the
above classes can be reconstructed, up to isomorphism, from $\rel_G^3$ in polynomial time.
}

\smallskip
\noindent
\keywords{DAG; least common ancestors; encoding; closure; regular networks; normal networks; level-1 networks; 2-regularization}


\section{Introduction}

Encoding phylogenetic networks by the substructures they display is a central theme in phylogenetic
combinatorics \cite{dress2011basic}. 
Given a class $\mathcal C$ of networks and an associated collection $\mathscr S_N$ of
substructures of a network $N$, one asks if any two networks $N,N'$ in this
class are isomorphic if and only if they have the same substructures, i.e. whether or not the following holds:
\[
    \mathscr S_N= \mathscr S_{N'} \quad\Longleftrightarrow\quad N\simeq N'
    \qquad\text{for all } N,N'\in\mathcal C .
\]
If this holds, then $\mathcal C$ is said to be \emph{encoded by} $\mathscr S$. Classical examples
include the encoding of phylogenetic trees by their displayed rooted triples
\cite{Aho:81,sem-ste-03a}, the encoding of level-1 networks by trinets \cite{HM:2013}, 
the encoding of several classes of networks by their clustering systems \cite{Hellmuth2023},
the
encoding of regular networks by their underlying trees \cite{willson2010regular}
and the encoding of binary normal networks by  their underlying caterpillars 
or trees \cite{linz2020caterpillars,BSS:26}. 
Other types of
substructures have also been considered, including subnetworks obtained by deleting reticulate
edges, and more general restrictions to bounded-level or tree-child
network classes \cite{Gambette2012,vIM:14,Murakami:19,Semple2021,francis:25b, 
Pardi:2015,Pardi:2019corr}. These
results show that the type of substructure needed to recover a network depends strongly on the class
of networks under consideration.

In this paper, we consider when a different type of information encodes a network or not, 
namely constraints induced by least
common ancestors. To be more precise, for vertices $v$ and $w$ of a DAG or network $G$, we write
$w\preceq_G v$ if $w$ is a descendant of $v$ in $G$. A least common ancestor, or LCA, of two leaves
$x$ and $y$ is a $\preceq_G$-minimal common ancestor of $x$ and $y$. If this vertex is unique, we
denote it by $\lca_G(xy)$ and say that $\lca_G(xy)$ is well-defined. For a DAG $G$ with leaf set
$X$, we associate a relation $\mathscr S_G=\rel_G$ by putting \[ \rel_G \coloneqq \{(ab,xy) \mid
\lca_G(ab),\lca_G(xy) \text{ are well-defined and } \lca_G(ab)\preceq_G \lca_G(xy)\}. \] 
Thus, $\rel_G$ records the relative ancestor order of all well-defined LCAs of pairs of leaves. We ask
when this information encodes the underlying DAG or phylogenetic network.

At first sight, one might hope that $\rel_G$ encodes $G$. This is not true in general. For example,
adding shortcuts, redundant biconnected components above the root, or vertices that never occur as
LCAs of leaf pairs may leave $\rel_G$ unchanged while changing the graph; see
Figure~\ref{fig:noEncode} for illustrative examples. Hence, if one wants to recover a graph from its
LCA-constraints, one has to restrict to suitable classes of DAGs or networks.

\begin{figure}[t]
  \centering
  \includegraphics[width=0.8\textwidth]{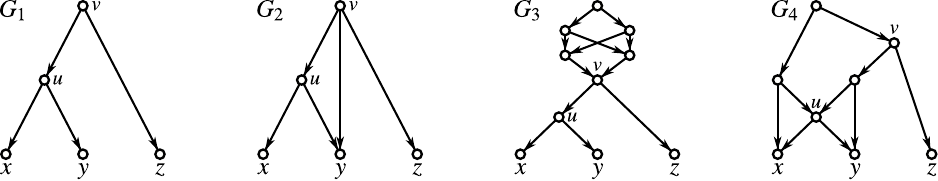}
  \caption{Shown are four non-isomorphic DAGs $G_1$, $G_2$, $G_3$ and $G_4$ with leaf set
  $X=\{x,y,z\}$ for which $\rel_{G_1} = \rel_{G_2} = \rel_{G_3}=\rel_{G_4}$. 
  For each $i\in\{1,2,3,4\}$, it holds that 
  $u=\lca_{G_i}(xy)$ and $v=\lca_{G_i}(xz)=\lca_{G_i}(yz)$. 
  Moreover, all vertices in $V(G_i)\setminus (X\cup \{u,v\})$, if there are any, 
  do not serve as LCA for any $a,b\in X$.
  The DAGs $G_1$ and $G_2$ are 2-lca-relevant, but $G_3$ and $G_4$ are not. All DAGs have the 2-lca-property. 
  Moreover, $G_1$ and $G_3$ are shortcut-free, but $G_2$ and $G_4$ are not.
  }
  \label{fig:noEncode}
\end{figure}

The first main idea of this paper is to separate the information contained in $\rel_G$ from the
parts of $G$ that are invisible to pairwise LCA-comparisons. To this end, we compare two graphs
associated with a DAG $G$. The first is the canonical DAG $\cG_{\rel_G}$ constructed directly from
the relation $\rel_G$. The second is the 2-regularization $\reg(G)$, obtained from $G$ by removing
all vertices that are not LCAs of one or two leaves and then deleting shortcuts. 
In Section~\ref{sec:CanonG} we
show that these
two constructions coincide up to isomorphism, i.e. that $\reg(G)\simeq \cG_{\rel_G}$ holds. Consequently, the
obstruction to being encoded by $\rel$ lies precisely in the information lost when passing from $G$
to $\reg(G)$. This yields a simple reconstruction principle: a class of DAGs is encoded by $\rel$
whenever no information is lost during 2-regularization, that is, $G=\reg(G)$ for all $G$ in the class, or
the information lost in passing from $G$ to $\reg(G)$ can be recovered canonically within the class.

We apply this principle to several natural classes of phylogenetic networks. In particular, in
Section~\ref{sec:encoding-principle}, we show that shortcut-free and 2-lca-relevant DAGs are encoded
by $\rel$. This immediately recovers the tree case and yields encoding results for several types of
level-1 networks. We then also consider cluster-theoretic conditions and prove that regular networks
with binary clustering systems, as well as regular networks whose clustering systems are closed weak
hierarchies, are encoded by their LCA-constraints. In Section~\ref{sec:normal} we focus on the
well-studied class of normal networks \cite{Willson2010,BSS:26,Francis:25}. Although normal networks
are not encoded by $\rel$ in general, we show that strong-phylogenetic (i.e. those with no vertices of out-degree one) 
normal networks $N$ are
encoded since for such networks since $N=\reg(N)$ holds. Moreover, even though $N$ is not isomorphic
to $\reg(N)$ for $N$ in the class of separated phylogenetic normal networks, we show that this class
is still encoded by $\rel$, since we can still recover the information that is lost in
2-regularization. Our results also imply that binary normal networks are encoded by $\rel$.

For a DAG $G$ with leaf set $X$ the relation $\rel_G$ is defined on every pair of elements in $X$.
Thus it is of interest to consider if we can still encode $G$ if we have less information. In
Section~\ref{sec:triple} we study a sparse version of the LCA-relation, namely the relation
$\rel_G^3$ obtained by restricting $\rel_G$ to those $(ab,ac)\in\rel_G$ with $a,b,c\in X$ pairwise
distinct. This relation is close in spirit to displayed rooted triples in a network (see e.g.
\cite{francis:25b}), but records LCA-comparisons rather than topological subtrees that are
displayed. For DAGs with the 2-lca-property, we show that $\rel_G^3$, together with the fixed leaf
set $X$, determines the full relation $\rel_G$. This, in turn, yields polynomial-time methods to
reconstruct DAGs and networks, up to isomorphism, from $\rel_G^3$ within several of the graph
classes considered here.

The rest of this paper is organized as follows. Section~\ref{sec:prelim} recalls the necessary terminology
concerning DAGs, phylogenetic networks, clusters, least common ancestors, shortcuts, regular
networks, and the $\ominus$-operator which is used to define 
2-regularization. In Section~\ref{sec:lca-rel-basics} we 
give the definition and basic properties of LCA-constraints, and then continue in 
in Section~\ref{sec:CanonG} by introducing the canonical DAG $\cG_{\rel_G}$
associated with $\rel_G$ for a DAG $G$ and comparing it with  the 2-regularization $\reg(G)$,
in particular proving that $\cG_{\rel_G}$ and $\reg(G)$ are isomorphic.
We then use this latter 
result in Section~\ref{sec:encoding-principle}
to derive a general encoding principle and show that, amongst other classes, shortcut-free 2-lca-relevant DAGs,
phylogenetic trees, regular level-1 networks, and regular networks defined by binary clustering
systems are encoded by $\rel$. In Section~\ref{sec:normal}
we consider normal networks and show that, although normal
networks are not encoded by $\rel$ in general, several important subclasses are. 
This includes, amongst others, binary normal networks.
In Section~\ref{sec:triple} we consider the sparse triple-like relation $\rel_G^3$ and prove 
encoding and polynomial-time reconstruction results from $\rel_G^3$ for 
network classes satisfying the 2-lca-property. 
We conclude in Section~\ref{sec:outlook} with a summary and some open problems and directions for future research.
In particular, we have summarized the main results
established in this paper in Table~\ref{tab:lca-encoding-summary}.


\section{Basics}
\label{sec:prelim}

\paragraph{Sets, Relations and Clustering Systems.}
In what follows, $X$ will always denote a finite non-empty set.
We let  $\pairs(X)\coloneqq\{\{a,b\} \mid a,b\in X\}$ denote the  set 
system consisting of all 1- and 2-element subsets of $X$.
We will often write $ab$ and $aa$ for the elements $\{a,b\}$ and
$\{a\}$ in $\pairs(X)$, respectively. Thus, $ab=ba$ always holds.

Given a set $A$, a \emph{(binary) relation $R$ (on $A$)} is a subset $R\subseteq A\times A$.
Furthermore, we define the \emph{support } of a relation $R$ on $A$ as $\support_R \coloneqq \{p\in
A \mid \text{ there is some } q\in A \text{ with } (p,q)\in R \text{ or } (q,p)\in R \}$, that is,
the subset of $A$ that contains precisely those $p\in A$ that are in $R$-relation with some $q\in
A$. Note that a \emph{poset} $(A, \sqsubseteq)$ is a set $A$ equipped with a \emph{partial order}
$\sqsubseteq$, i.e., a binary relation $\sqsubseteq$ on $A \times A$ that is reflexive, transitive,
and anti-symmetric.

A \emph{clustering system} $\mathfrak{C}$ on $X$ is a set of subsets of $X$
such that $\{x\}\in \mathfrak{C}$ for all $x\in X$, $\emptyset\notin \mathfrak{C}$ and $X\in \mathfrak{C}$. 
Let $\mathfrak C$ be a clustering system on $X$. Then, 
$\mathfrak{C}$ is \emph{closed}, if 
for all $C,C' \in \mathfrak{C}$ with $C\cap C'\ne\emptyset$ we have
                    $C\cap C'\in \mathfrak{C}$.
Moreover, $\mathfrak C$ is
\emph{binary} if the following two conditions hold:
\begin{enumerate}[noitemsep]
    \item For every $x,y\in X$,  there is a unique inclusion-minimal
    cluster $C\in\mathfrak C$ with $\{x,y\}\subseteq C$.
    \item For every $C\in\mathfrak C$, there exist $x,y\in C$ such that
    $C$ is the unique inclusion-minimal cluster in $\mathfrak C$ containing
    $\{x,y\}$.
\end{enumerate}
Two sets $A$ and $B$ \emph{overlap} if $A\cap B\notin \{\emptyset, A,B\}$; a
clustering system is a \emph{hierarchy} if it does not contain 
overlapping sets. Furthermore, $\mathscr{C}$ is a \emph{weak hierarchy} if $C_1\cap C_2\cap C_3 \in\{C_1\cap
    C_2,C_1\cap C_3, C_2\cap C_3\}$ for all $C_1,C_2,C_3\in\mathscr{C}$. Note that a
    hierarchy is always a weak hierarchy, but not conversely \cite{Bandelt:89}.

\paragraph{DAGs and networks}
A \emph{directed graph $G=(V,E)$} is a tuple with non-empty vertex set $V(G)\coloneqq V$ and arc set
$E(G)\subseteq V\times V$. We let $\outdeg_G(v)\coloneqq\left|\left\{u\in V \colon (v,u)\in
E\right\}\right|$ and $\indeg_G(v)\coloneqq\left|\left\{u\in V \colon (u,v)\in E\right\}\right|$
denote the \emph{out-degree} and \emph{in-degree} of a vertex $v$ in $V$, respectively. A vertex $v$
in $G$ is a \emph{leaf} of $G$ if $\outdeg_G(v)=0$, it is a \emph{root} of $G$ if $\indeg_G(v)=0$,
and it is a \emph{hybrid} if $\indeg_G(v)\geq 2$. Note that leaves might also be roots or hybrids.
For vertices $u,v$ in $G$, we write $v\preceq_G u$ if and only if there is a directed $uv$-path in
$G$. If $v\preceq_G u$ and $v\neq u$, we write $v\prec_G u$. If $u\to v$ is an arc in $G$, then $v$
is a \emph{child} of $u$ and $u$ a \emph{parent} of $v$. Here $\parent_G(v)$ denotes the set of
parents of $v$ in $G$. Moreover, if $u\preceq_G v$, we call $u$ a \emph{descendant} of $v$ and $v$
an \emph{ancestor} of $u$. 
	
Directed graphs without directed cycles are called \emph{directed acyclic graphs (DAGs)}. A
\emph{network} is a DAG with a unique root. A \emph{(rooted) tree} is a network that does not
contain hybrid vertices. A \emph{biconnected component} of a network is an inclusion-maximal
subgraph that remains connected after the removal of any single vertex. In a network $N$, every
biconnected component has a unique $\preceq_N$-maximal vertex \cite[Lem.~8]{Hellmuth2023}. Following
\cite{Hellmuth2023}, a network $N$ is \emph{level-1} if each inclusion-maximal biconnected subgraph
of $N$ contains at most one hybrid vertex distinct from its $\preceq_N$-maximal vertex, which itself
may or may not be hybrid.

If $G$ is a DAG with leaf set $X$, then $G$ is a \emph{DAG on $X$}. Given such a DAG $G$, for leaves
$x,y\in X$, a vertex $v\in V(G)$ that is an ancestor of both $x$ and $y$, is a \emph{common ancestor
of $x$ and $y$}. Moreover, $v$ is a \emph{least common ancestor} (LCA) of $x$ and $y$ if $v$ is a
$\preceq_G$-minimal common ancestor of $x$ and $y$. In case $v$ is the unique least common ancestor
of $x$ and $y$ we write $v = \lca_G(xy)$ and say that \emph{$\lca_G(xy)$ is well-defined};
otherwise, we leave $\lca_G(xy)$ undefined. We now list some further properties of DAGs on $X$ used
throughout this paper: 

\begin{itemize}
\item[] A DAG $G$ on $X$ \dots
\begin{itemize}
	\item[\dots] is \emph{phylogenetic} if it does not contain a vertex $v$ such that $\outdeg_G(v)=1$ and
		$\indeg_G(v)\leq 1$.
    \item[\dots] is \emph{strong-phylogenetic} if it does not contain a vertex $v$ such that $\outdeg_G(v)=1$
	\item[\dots] is \emph{separated} if all hybrids $v$ in $G$ have $\outdeg_G(v)=1$ \cite{Hellmuth2023}. 
	
	\item[\dots] is \emph{binary} if every tree vertex $v$ is
								either a leaf or has $\outdeg_G (v) = 2$, and every hybrid vertex
								$v$ satisfies $\indeg_G (v) = 2$ and $\outdeg_G(v) = 1$.
     
	\item[\dots] is  \emph{2-lca-relevant} if, for all $v\in V(G)$, 
						    there are 
						    $x,y\in X$ such that $v=\lca_G(xy)$ \cite{HL:24}. 
	\item[\dots] has the \emph{2-lca-property} if, for all $x,y\in X$, 
					      the LCA $\lca_G(xy)$ is well-defined \cite{HL:26}.
	\item[\dots] is  \emph{shortcut-free} if it does not contain \emph{shortcuts}, i.e., 
							 arcs $e=(u,w)$  and a directed $uw$-path that does not contain the arc $e$
							\cite{linz2020caterpillars, DOCKER2019129}.
	\item[\dots] is \emph{tree-child} if every vertex $v \in  V(G) \setminus X$ has a 
						  	child $u$ of $v$ with $\indeg_G(u) = 1$  \cite{Cardona:09}. 
	\item[\dots] is \emph{normal} if it is tree-child and shortcut-free \cite{Willson2010}. 			
\end{itemize}
\end{itemize}
    
Throughout the paper, we will use the \emph{shortcut-free version} $G^-$ of a DAG $G$
that is obtained from $G$ by removal of all of its shortcuts.
\begin{lemma}[{\cite[L.~2.5]{HL:24}}]\label{lem:properties-SF-G_NEW}
    Let $G$ be a DAG on $X$. Then, $G^-$ is a shortcut-free DAG on $X$. Moreover, $V(G)=V(G^-)$ and,
    for all $u, v \in V(G)$, we have $u \preceq_{G} v$ if and only if $u \preceq_{G^-} v$. 
\end{lemma}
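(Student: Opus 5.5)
The plan is to compare the reachability relations of $G$ and $G^-$ directly; everything else follows from that. Since $G^-$ arises from $G$ only by deleting arcs, we have $V(G^-)=V(G)$ and $E(G^-)\subseteq E(G)$. Every directed path in $G^-$ is therefore a path in $G$, and so $u\preceq_{G^-}v$ implies $u\preceq_G v$. Acyclicity of $G^-$ is inherited from $G$ for the same reason.

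For the converse, suppose $u\preceq_G v$ with $u\ne v$. Among all directed $vu$-paths in $G$, choose one, $P$, of maximum length; this is possible because $G$ is finite and acyclic, so path lengths are bounded. I claim that no arc of $P$ is a shortcut of $G$. Suppose some arc $(a,b)$ of $P$ were a shortcut. Then there is an $ab$-path $Q$ in $G$ avoiding $(a,b)$, and $Q$ has length at least $2$, since the only $ab$-path of length $1$ is the arc itself. Replace $(a,b)$ in $P$ by $Q$. The resulting walk is a path: if it repeated a vertex, $G$ would contain a directed cycle. But it is strictly longer than $P$, contradicting maximality. Hence every arc of $P$ survives in $G^-$, and $u\preceq_{G^-}v$. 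This is the step I expect to carry the main content; the maximal-length path trick is the key idea.

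Leaves are preserved: a vertex $w$ has out-degree $0$ in $G$ if and only if it has out-degree $0$ in $G^-$. Indeed, if $w$ has a child in $G$, then there is some vertex $z\prec_G w$, hence $z\prec_{G^-}w$ by the reachability equivalence, so $w$ still has a child in $G^-$. The converse holds because deleting arcs cannot create out-arcs. Thus $G^-$ is a DAG on $X$.

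Finally, $G^-$ is shortcut-free. Suppose an arc $(u,w)$ of $G^-$ had an alternative $uw$-path in $G^-$. That path also lies in $G$ and avoids $(u,w)$, so $(u,w)$ would be a shortcut of $G$ and would have been removed. Note that here we take shortcuts of $G$ to be removed simultaneously, all relative to $G$; the claim just established guarantees that this simultaneous removal does not destroy reachability.
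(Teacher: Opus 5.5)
Your proof is correct and complete. The paper gives no argument of its own for this lemma; it imports it from \cite[L.~2.5]{HL:24}, so there is no in-paper proof to compare against, and your write-up works as a self-contained justification.

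Choosing a maximum-length $vu$-path is what makes the argument short. Every arc on such a path is automatically a non-shortcut of $G$, so you get preservation of $\preceq_G$ under the \emph{simultaneous} deletion of all shortcuts in one step. An arc-by-arc induction would instead need an auxiliary fact: for distinct shortcuts $e,f$ of $G$, the arc $f$ is still a shortcut of $G-e$. This holds by acyclicity, since if the detour for $e$ used $f$ and the detour for $f$ used $e$, the endpoints of $e$ and $f$ would be mutually reachable, forcing $e=f$. Your argument avoids needing this.

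The remaining steps are fine as written. Leaves are preserved by the reachability equivalence, and $G^-$ is shortcut-free because a detour in $G^-$ is also a detour in $G$.

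Your appeal to finiteness is legitimate in the paper's (implicitly finite) setting, and some such hypothesis is genuinely needed. On the infinite DAG with vertex set $\mathbb{Q}\cap[0,1]$ and arcs $p\to q$ for $p>q$, every arc is a shortcut, so deleting all of them destroys reachability.
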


Two DAGs $G$ and $H $ are \emph{graph isomorphic} if there is a graph isomorphism between $G$ and
$H$, i.e., a bijective map $\varphi\colon V(G)\to V(H)$ such that $(u,v)\in E(G)$ if and only if
$(\varphi(u),\varphi(v))\in E(H)$. In this case, we write $G\approx H$. Moreover, if $G$ and $H$ are
DAGs that also have the same leaf set $X$, then we say that $G$ and $H$ are \emph{isomorphic}, in
symbols $G\simeq H$, if there is a graph isomorphism $\varphi$ between $G$ and $H$ that satisfies
$\varphi(x) = x$ for all $x\in X$.

\paragraph{Hasse diagram and regular networks.}

For a poset $({Q},\sqsubseteq)$, the \emph{Hasse diagram} $\Hasse(Q,\sqsubseteq)$ is the DAG with
vertex set $Q$ and arcs $(A,B)$ if (i) $B\sqsubseteq A$ and $A\neq B$ and (ii) there is no $C\in Q$
with $B\sqsubseteq C\sqsubseteq A$ and $C\neq A,B$. Since the partial order $\sqsubseteq$ is
transitive and anti-symmetric, $\Hasse(Q,\sqsubseteq)$ is always a DAG. Moreover, by definition,
$\Hasse(Q,\sqsubseteq)$ is shortcut-free.

We now relate DAGs to set systems. For a DAG $G$ on $X$ and a vertex $v\in V(G)$ we put
$\CC_G(v)=\{x\in X\mid x\preceq_G v\}$, i.e., the set of leaves that are descendants of $v$. This
set is called the \emph{cluster} of $v$, and we let $\mathfrak{C}(G)=\{\CC_G(v)\mid v\in V(G)\}$
denote the set system on $X$ comprising all clusters of $G$. Note that in case $G$ is a network,
$\mathfrak{C}(N)$ is a clustering system. Conversely, the Hasse diagram
$\Hasse(\mathfrak{C},\subseteq)$ with $\subseteq$ being the usual subset-relation, provides a
natural way to associate a DAG to a set system $\mathfrak{C}$. This, in turn, gives rise to the
following definition. 
\begin{definition}[{\cite{Baroni:05}}]
  \label{def:regular-N}
  A DAG $G=(V,E)$  is \emph{regular} if the map
  $\varphi\colon V\to V(\Hasse(\mathfrak{C}(G)))$ defined by  $v\mapsto \CC_G(v)$ is a graph
 isomorphism between $G$ and $\Hasse(\mathfrak{C}(G))$, i.e., in symbols, $G \approx \Hasse(\mathfrak{C}(G))$.
\end{definition}

Recall that we write $G\simeq H$ for DAGs $G$ and $H$ on the same leaf set $X$ if there is a graph
isomorphism between them that fixes every leaf in $X$. This differs from the graph-isomorphism
relation $\approx$ used in the definition of regular DAGs, which does not require a common leaf set.
In fact, if $G$ is a DAG on $X$, then the leaves of $\Hasse(\mathfrak C(G),\subseteq)$ are the
singleton clusters $\{x\}$ with $x\in X$, rather than the leaves $x$ themselves. Thus, $G$ and
$\Hasse(\mathfrak C(G),\subseteq)$ cannot be compared by a leaf-fixing isomorphism in the sense of
$\simeq$; in particular, $G\not\simeq \Hasse(\mathfrak C(G),\subseteq)$. Nevertheless, equality of
clustering systems does imply a leaf-fixing isomorphism between regular DAGs on the same leaf set.
To be more precise, if $G$ and $H$ are regular DAGs such that $\mathfrak{C}(G)=\mathfrak{C}(H)$,
then the maps $v\mapsto \CC_G(v)$ and $w\mapsto \CC_H(w)$ identify both DAGs with the Hasse diagram
$\Hasse(\mathfrak C(G),\subseteq)=\Hasse(\mathfrak C(H),\subseteq)$. Hence, there is an isomorphism
from $G$ to $H$ that maps each vertex $v\in V(G)$ to the unique vertex $w\in V(H)$ with
$\CC_H(w)=\CC_G(v)$. For every leaf $x\in X$, we have $\CC_G(x)=\{x\}=\CC_H(x)$, and therefore this
isomorphism fixes $x$. Thus $G\simeq H$. We summarize the latter discussion in:

\begin{observation}\label{obs:regular-same-clusters=>iso}
    If $G$ and $H$ are regular DAGs such that $\mathfrak{C}(G)=\mathfrak{C}(H)$, then $G\simeq H$.
\end{observation}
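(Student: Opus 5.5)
The plan is to compose the two isomorphisms supplied by regularity and then check that the composite fixes every leaf.

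First, since $G$ is regular, Definition~\ref{def:regular-N} gives a graph isomorphism $\varphi_G\colon V(G)\to V(\Hasse(\mathfrak C(G)))$ with $v\mapsto \CC_G(v)$. Likewise, $H$ is regular, so $\varphi_H\colon w\mapsto \CC_H(w)$ is a graph isomorphism $H\approx\Hasse(\mathfrak C(H))$. The hypothesis $\mathfrak C(G)=\mathfrak C(H)$ says that the two Hasse diagrams are literally the same DAG, namely $\Hasse(\mathfrak C(G),\subseteq)$. We therefore define
\[
\psi\coloneqq\varphi_H^{-1}\circ\varphi_G\colon V(G)\to V(H).
\]
As a composition of graph isomorphisms, $\psi$ is again a graph isomorphism. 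Concretely, $\psi(v)$ is the unique $w\in V(H)$ with $\CC_H(w)=\CC_G(v)$. Such a $w$ exists because $\CC_G(v)\in\mathfrak C(G)=\mathfrak C(H)$, and it is unique because $\varphi_H$ is injective.

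Second, we verify that $\psi$ fixes the leaves. Both $G$ and $H$ are DAGs on the same leaf set $X$, which is implicit in the statement, since $\simeq$ only applies to DAGs on a common leaf set. For a leaf $x\in X$ we have $\CC_G(x)=\{x\}$, since a leaf has no proper descendants. Similarly $\CC_H(x)=\{x\}$, so $\varphi_H(x)=\{x\}=\varphi_G(x)$. It follows that $\psi(x)=x$, and hence $G\simeq H$.

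The only point requiring any care is the same-leaf-set assumption. It is needed both for $\simeq$ to make sense and for $x$ to be a vertex of $H$. We take it from the standing convention that $\simeq$ compares DAGs on the same $X$. Everything else is direct from the definitions.
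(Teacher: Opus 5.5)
Your proposal is correct and follows the paper's argument: the paper also composes the regularity isomorphisms $v\mapsto \CC_G(v)$ and $w\mapsto \CC_H(w)$ through the common Hasse diagram $\Hasse(\mathfrak C(G),\subseteq)$, and uses $\CC_G(x)=\{x\}=\CC_H(x)$ to see that the composite fixes every leaf. One small improvement: you do not need a convention to get a common leaf set, because the leaf set of a DAG is the union of its clusters, so $\mathfrak C(G)=\mathfrak C(H)$ already forces $G$ and $H$ to have the same leaves.
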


For the convenience of the reader, in Figure~\ref{fig:class-inclusions}
we summarize some of the interrelationships between the different types of networks
that we have introduced so far.

\begin{figure}[h]
\centering
\includegraphics[width=0.8\textwidth]{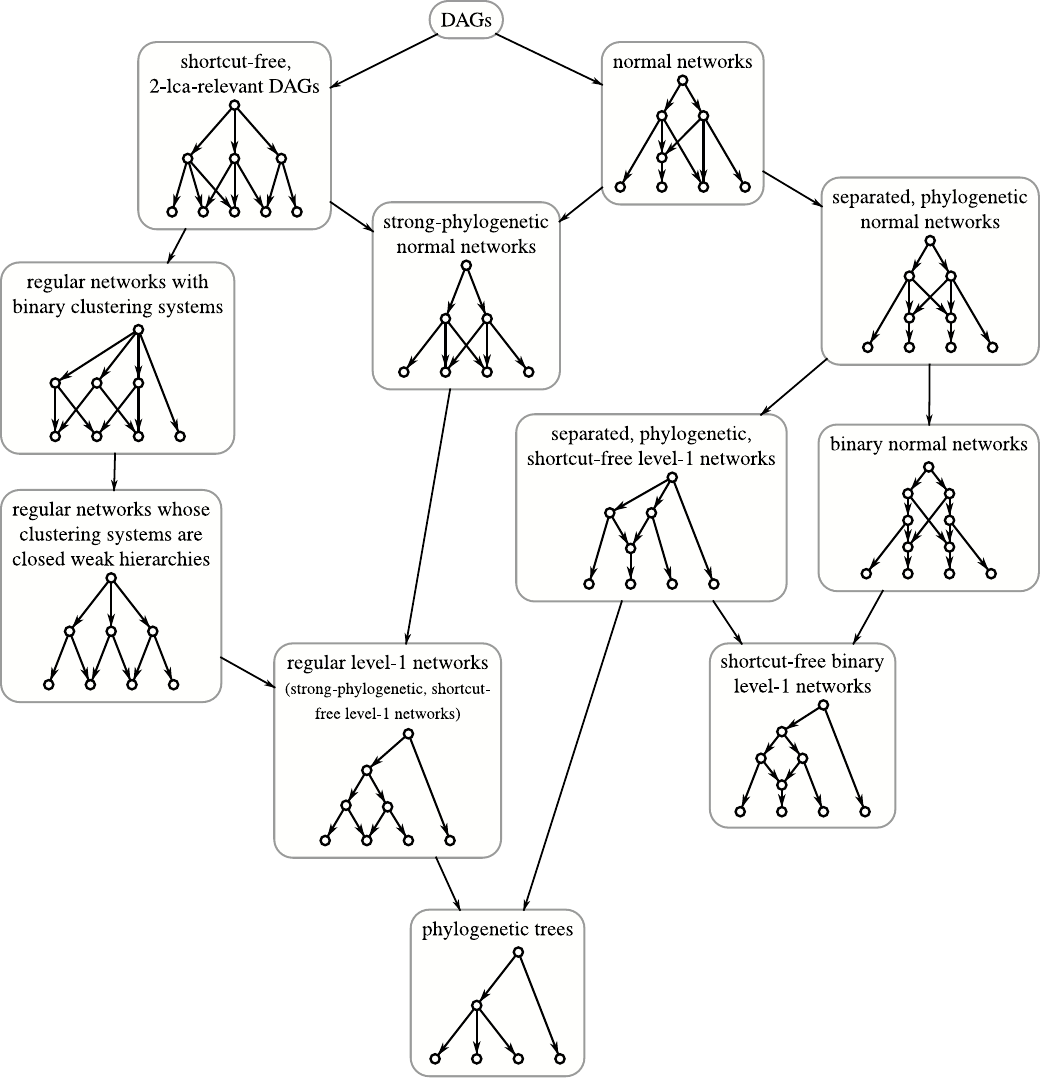}
\caption{Schematic inclusion relations among some of the classes of DAGs and networks
considered in this paper, where solid arrows indicate subclass inclusions.
The inclusion relations follow mainly from Theorem~\ref{thm:classes-poly-recover-from-rel3}. 
The single outgoing arc from regular networks whose clustering systems are closed weak hierarchies is a consequence of 
\cite[Cor~28 \& 29]{Hellmuth2023}. For each class, a representative example network is shown that is not contained in the corresponding subclasses, thereby showing that the indicated inclusions are proper.
}
\label{fig:class-inclusions}
\end{figure}

\paragraph{The $\ominus$-operator.}  
We recall the $\ominus$-operator, introduced and studied in \cite{HL:24,SCHS:24}. Intuitively,
applying $G\ominus v$ removes the vertex $v$ and reconnects each parent of $v$ directly to each
child of $v$

\begin{definition}[{\cite{SCHS:24}}]\label{def:ominus}
  Let $G=(V,E)$ be a DAG and $v\in V$. Then $G\ominus v=(V',E')$ is the directed graph with vertex
 set $V'=V\setminus\{v\}$ and arcs $(p,q)\in E'$ precisely if $v\ne p$, $v\ne q$ and $(p,q)\in E$,
 or if $(p,v)\in E$ and $(v,q)\in E$. 
\end{definition} 

The $\ominus$-operation is order-independent on sets of vertices, i.e., 
for distinct vertices $v,w\in V(G)$,
$   (G\ominus v)\ominus w=(G\ominus w)\ominus v$ (cf.\ \cite[L~2.3]{HLM:26-RegNormArxiv}).
Hence, for a given set non-empty subset $W = \{w_1,\dots,w_\ell\} \subsetneq V(G)$, 
we can, without loss of generality, define 
  \[G\ominus W \coloneqq (\dots ((G \ominus w_1) \ominus w_2) \dots)\ominus w_\ell.\]
For technical reasons, put $G\ominus\emptyset=G$.

The following basic property will be used repeatedly. Applying the
$\ominus$-operation to  DAGs on $X$ preserves the ancestor relation among the vertices that
remain; in particular, the clusters of the remaining vertices are unchanged.
\begin{lemma}[{\cite[Obs~5.3]{HL:24}}]\label{lem:ominus-basics}
    Let $G$ be a DAG on $X$ and $W \subseteq V (G) \setminus X$. 
    Then, $G \ominus W$ is a DAG on $X$ that satisfies  
    $u\preceq_{G} v\iff u\preceq_{G\ominus W} v$ for all $u,v \in V (G \ominus W)=V(G)\setminus W$. In particular, 
    $\CC_G (u) = \CC_{G\ominus W} (u)$ for all $u \in V (G \ominus W)=V(G)\setminus W$.
\end{lemma}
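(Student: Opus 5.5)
The plan is to prove the statement for a single vertex $w\in V(G)\setminus X$ and then pass to arbitrary $W\subseteq V(G)\setminus X$ by induction on $|W|$. The definition of $G\ominus W$ as an iterated application of $\ominus$, together with its order-independence, makes this induction legitimate.

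\emph{Leaf set and acyclicity for one vertex.} Put $G'=G\ominus w$. Every arc of $G'$ is either an arc of $G$, or an arc $(p,q)$ with $p\to w\to q$ in $G$. In both cases there is a directed $pq$-path in $G$. Hence every directed walk in $G'$ lifts to a directed walk in $G$ with the same endpoints, so a directed cycle in $G'$ would give a closed walk, and thus a directed cycle, in $G$. So $G'$ is a DAG. For the leaves, first take $x\in X$. Since $w\notin X$, the vertex $x$ remains in $G'$. Any out-arc of $x$ in $G'$ would have to be an arc $(x,q)$ of $G$, since the alternative $x\to w$ in $G$ is impossible because $x$ has out-degree $0$. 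So $x$ is still a leaf. Now take $v\neq w$ with $v\notin X$. Then $v$ has a child $c$ in $G$. If $c\neq w$, the arc $(v,c)$ survives in $G'$. If $c=w$, note that $w$ is not a leaf, so it has a child $q$, and the arc $(v,q)$ lies in $G'$. Thus the leaf set of $G'$ is exactly $X$.

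\emph{Ancestor relation for one vertex.} Let $u,v\neq w$. The lifting argument above already gives $u\preceq_{G'}v\Rightarrow u\preceq_G v$. For the converse, take a directed $vu$-path $P$ in $G$. If $P$ avoids $w$, it is a path in $G'$. Otherwise, since $u,v\neq w$, the vertex $w$ is an interior vertex of $P$ with predecessor $p$ and successor $q$, and we replace the segment $p\to w\to q$ by the arc $(p,q)$ of $G'$. The result is a $vu$-path in $G'$. So $u\preceq_G v\iff u\preceq_{G'}v$. Restricting this to $u\in X$ gives $\CC_G(v)=\CC_{G'}(v)$.

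\emph{Induction and main obstacle.} For $|W|\ge 2$, write $W=W'\cup\{w_\ell\}$ and apply the single-vertex statement to the DAG $G\ominus W'$ on $X$, which is available by the induction hypothesis, and to the vertex $w_\ell$. This is valid because $w_\ell\notin X$ and $X$ is the leaf set of $G\ominus W'$. Equivalences of $\preceq$ compose along the induction, and the vertex set drops by exactly $W$ at each step. The only delicate point is the case where the last vertex $w$ is the only child of some $v$. Here one must check that $v$ does not become a leaf, and the argument above relies precisely on $w\notin X$ to guarantee this. Apart from that, everything is routine path surgery.
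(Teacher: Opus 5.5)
The paper gives no proof of this lemma and simply imports it from \cite[Obs~5.3]{HL:24}. Your argument is correct and complete, and it is the natural route for this statement: lift arcs of $G\ominus w$ to paths of length one or two for acyclicity and for the direction $\preceq_{G\ominus w}\Rightarrow\preceq_G$; for the converse, splice out the single occurrence of $w$ on a path; use $w\notin X$ to rule out new leaves; then induct along the iterated definition of $G\ominus W$.
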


\section{LCA-Relations and Encoding}
\label{sec:lca-rel-basics}

In this section, we begin our investigation of encoding DAGs and networks by LCA relations.
Recall from the introduction
that, for a DAG $G$ with leaf set $X$, we associate a relation
the relation $\rel_G$ on $\pairs(X)$ given by
\[ \rel_G \coloneqq \{(ab,xy) \mid \lca_G(ab),\lca_G(xy) \text{ are well-defined and } \lca_G(ab)\preceq_G \lca_G(xy)\}. \] 
Thus, $\rel_G$ records LCA-comparisons with respect to the ancestor order $\preceq_G$.
A class $\mathcal C$ of DAGs is
\emph{encoded by $\rel$} if, for all $G,H\in\mathcal C$,
\[
    G\simeq H \iff \rel_G=\rel_H.
\]
In what follows, we shall use the following simple observation without
explicit mention.

\begin{observation}\label{obs:subclass-inherits-encoding}
Let $\mathcal C$ be a class of DAGs that is encoded by $\rel$.
Then every subclass $\mathcal C'\subseteq \mathcal C$ is encoded by
$\rel$.
\end{observation}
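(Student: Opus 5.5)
The statement is immediate from the definition of encoding, so the plan is to unfold that definition for the subclass.

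Fix arbitrary $G,H\in\mathcal C'$. I need to show $G\simeq H \iff \rel_G=\rel_H$. Since $\mathcal C'\subseteq\mathcal C$, both $G$ and $H$ also lie in $\mathcal C$. The hypothesis that $\mathcal C$ is encoded by $\rel$ says this biconditional holds for every pair of DAGs in $\mathcal C$, so in particular it holds for the pair $G,H$. As $G,H\in\mathcal C'$ were arbitrary, $\mathcal C'$ is encoded by $\rel$.

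One remark on why no extra hypothesis is needed. The definition of encoding quantifies universally over pairs from the class, and both the relation $\simeq$ and the assignment $G\mapsto\rel_G$ are defined for individual DAGs, independently of any ambient class. Restricting to a smaller class therefore only removes instances of the universal statement.

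There is no real obstacle; the only point to check is that nothing in the definition of $\rel_G$ or $\simeq$ depends on the class. Both are intrinsic to the DAGs and their common leaf set $X$.
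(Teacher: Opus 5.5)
Your argument is correct and is exactly the reasoning the paper relies on. The paper states this as an observation without proof: encoding is a condition quantified over all pairs in the class, and $\simeq$ and $\rel_G$ depend only on the individual DAGs, so the condition restricts to any subclass.
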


Note that for encoding we are not required to  restrict to classes on a fixed leaf set. Indeed, for every $x\in X$, we
have $x=\lca_G(xx)$ and hence $(xx,xx)\in\rel_G$. Conversely, every
relation of the form $(xx,xx)$ in $\rel_G$ identifies the leaf $x$.
Therefore, the leaf set can be recovered directly from $\rel_G$ as
$X=\{x\mid (xx,xx)\in\rel_G\}$.  

Moreover, $(ab,xy), (xy,ab)\in \rel_G$
implies, by definition, that $\lca_G(ab)\preceq_G \lca_G(xy)$
and $\lca_G(xy)\preceq_G \lca_G(ab)$ and, therefore, 
$\lca_G(ab)= \lca_G(xy)$. 
Thus, $\rel_G$ also records equality between LCAs. Interestingly,
this equality information can be omitted for encoding questions on a fixed leaf set. Indeed, 
one may equivalently work with the strict relation $\srel_G$ as defined 
in the following result.

\begin{proposition}[{\cite[Prop~46]{LAMSH:25}}]\label{prop:rel=srel}
    For all DAGs $G$ and $H$ on $X$ we have $\rel_G=\rel_H$ if and only if $\srel_G=\srel_H$, 
    where 
    \[\srel_G \coloneqq \{(ab,xy) \mid \lca_G(ab), \lca_G(xy) \text{ are well-defined and } \lca_G(ab)\prec_G \lca_G(xy)\}.\]
\end{proposition}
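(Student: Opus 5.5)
The forward direction is the easy one. Assume $\rel_G=\rel_H$. By definition, $\srel_G$ consists exactly of the pairs $(ab,xy)\in\rel_G$ with $(xy,ab)\notin\rel_G$. Indeed, if both $(ab,xy)$ and $(xy,ab)$ lie in $\rel_G$, then $\lca_G(ab)=\lca_G(xy)$, as noted just before the statement. Conversely, if $\lca_G(ab)\prec_G\lca_G(xy)$, then antisymmetry of $\preceq_G$ in a DAG gives $(xy,ab)\notin\rel_G$. So $\srel_G$ is definable from $\rel_G$ alone, and $\rel_G=\rel_H$ implies $\srel_G=\srel_H$.

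For the converse, assume $\srel_G=\srel_H$. It suffices to recover $\rel_G$ from $\srel_G$ together with the fixed leaf set $X$. Since $\rel_G=\srel_G\cup\{(ab,xy)\mid \lca_G(ab),\lca_G(xy)\text{ well-defined and equal}\}$, this amounts to recovering two things from $\srel_G$ and $X$: which pairs $ab$ have a well-defined LCA, and which of them share the same LCA. The leaves take care of themselves: every $xx$ has $\lca_G(xx)=x$, and $x=\lca_G(xy)$ holds exactly when $y\preceq_G x$, i.e.\ $y=x$. So the equality classes containing leaves are just $\{xx\}$.

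The plan for well-definedness is as follows. If $a\ne b$ and $\lca_G(ab)$ is well-defined, then $\lca_G(ab)$ is a strict ancestor of $a$, since it is a non-leaf vertex above $a$ ($a$ has out-degree $0$, so $a\neq \lca_G(ab)$ unless $b\preceq a$, i.e.\ $b=a$). Hence $(aa,ab)\in\srel_G$. Conversely, $(aa,ab)\in\srel_G$ forces $\lca_G(ab)$ to be well-defined. Therefore $\support_{\srel_G}$ together with $\{xx\mid x\in X\}$ is exactly the set of pairs with a well-defined LCA, and this set is the same for $H$.

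The main obstacle is recovering equality between two well-defined non-leaf LCAs $\lca_G(ab)$ and $\lca_G(xy)$ from strict comparisons alone. My first attempt is to show that $\lca_G(ab)=\lca_G(xy)$ if and only if $ab$ and $xy$ have the same strict down-sets and up-sets in $\srel_G$, i.e.\ for all $pq$ we have $(pq,ab)\in\srel_G\iff(pq,xy)\in\srel_G$ and $(ab,pq)\in\srel_G\iff(xy,pq)\in\srel_G$. Equal LCAs clearly give equal neighbourhoods. For the converse, let $u=\lca_G(ab)$ and $v=\lca_G(xy)$. Since $a\prec_G u$, we get $(aa,ab)\in\srel_G$, so $(aa,xy)\in\srel_G$, i.e.\ $a\preceq_G v$; similarly $b\preceq_G v$. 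Thus $v$ is a common ancestor of $a$ and $b$. Since $u$ is the unique LCA of $a$ and $b$, every common ancestor lies above some minimal one, so $u\preceq_G v$; symmetrically $v\preceq_G u$, hence $u=v$. This uses only the leaf comparisons $(aa,\cdot)$, which are available because the leaf set is fixed.

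Hence $\rel_G$ is determined by $\srel_G$ and $X$, so $\srel_G=\srel_H$ implies $\rel_G=\rel_H$. I would double-check the uniqueness step (unique minimal common ancestor implies it lies below every common ancestor, by finiteness and acyclicity) and the trivial case $ab=aa$.
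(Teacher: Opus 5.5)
Your proof is correct. The paper gives no proof of this statement: it imports it from \cite[Prop~46]{LAMSH:25}, so there is no in-paper argument to match against.

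Your argument is direct and elementary. For the forward direction you use $\srel_G=\{(p,q)\in\rel_G\mid (q,p)\notin\rel_G\}$. For the converse you rebuild $\rel_G$ from $\srel_G$ and $X$ in two steps:
\begin{itemize}
\item You recover the pairs with a well-defined LCA as $\support_{\srel_G}\cup\{xx\mid x\in X\}$, using $(aa,ab)\in\srel_G$.
\item You recover the LCA-equalities from the leaf comparisons $(aa,xy),(bb,xy)$ together with uniqueness of $\lca_G(ab)$.
\end{itemize}
This isolates exactly where the fixed leaf set is needed: the only extra input is the list of pairs $xx$, $x\in X$. That fits the counterexample $G=(\{x\},\emptyset)$, $H=(\{x,y\},\emptyset)$ given right after the statement.

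In the language of Section~\ref{sec:triple}, your key step is precisely the soundness of \axiom{R3}. Your argument in fact shows $\rel_G=(\srel_G)^+$, with the closure taken over $\pairs(X)$:
\begin{itemize}
\item \axiom{R1} supplies every $(p,p)$. This is because $ab\in\support_{\srel_G}$ whenever $a\neq b$ and $\lca_G(ab)$ is well-defined, and $xx\in\support^+_{\srel_G}$ by definition.
\item \axiom{R3} with $c=a$, $d=b$ supplies $(ab,xy)$ whenever $\lca_G(ab)=\lca_G(xy)$ and $ab\neq xy$. Here both pairs are non-leaf pairs, as you showed, so $(aa,xy),(bb,xy)\in\srel_G$.
\item The reverse inclusion $(\srel_G)^+\subseteq\rel_G^+=\rel_G$ follows from monotonicity and \cite[Prop~21]{LAMSH:25}.
\end{itemize}
This is the same pattern as the proof of Proposition~\ref{prop:rel3X-closure-recovers-rel}, and it gives an explicit polynomial-time recovery of $\rel_G$ from $\srel_G$. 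Your direct version has the advantage of not relying on the cited identity $\rel_G^+=\rel_G$.
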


The assumption that $G$ and $H$ are defined on the same leaf set in Proposition~\ref{prop:rel=srel}
is essential for the strict
relation. Indeed, $\srel_G$ contains no reflexive information $(xx,xx)$ 
and therefore does not, by itself, determine the leaf set. 
 For example, let $G=(\{x\},\emptyset)$ and
$H=(\{x,y\},\emptyset)$. Then
$\rel_G=\{(xx,xx)\}   \neq\{(xx,xx),(yy,yy)\}=\rel_H$,
and $G\not\simeq H$, so the class $\mathcal C=\{G,H\}$ is encoded by
$\rel$. However, $\srel_G=\srel_H=\emptyset$ holds,
so $\srel$ cannot distinguish $G$ and $H$. Thus $\mathcal C$ is not
encoded by $\srel$.

Proposition~\ref{prop:rel=srel} now immediately implies the following.
\begin{corollary}\label{cor:relIFFsrel}
    A class $\mathcal{C}$ of DAGs on the same leaf set $X$
    is encoded by $\rel$ if and only if $\mathcal{C}$ is encoded by $\srel$,
    i.e., for all $G,H\in \mathcal{C}$, we have
    $G\simeq H \iff \srel_G=\srel_H$.
\end{corollary}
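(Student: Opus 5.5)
The corollary follows directly from Proposition~\ref{prop:rel=srel}. The plan is to fix a class $\mathcal C$ of DAGs all on the same leaf set $X$ and show that, for every pair $G,H\in\mathcal C$, the two biconditionals $G\simeq H\iff\rel_G=\rel_H$ and $G\simeq H\iff\srel_G=\srel_H$ are equivalent. Encoding is a statement quantified over all pairs in $\mathcal C$, so it suffices to prove this equivalence pairwise.

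For the main step, let $G,H\in\mathcal C$. Both are DAGs on $X$, so Proposition~\ref{prop:rel=srel} applies and gives $\rel_G=\rel_H\iff\srel_G=\srel_H$. Hence the right-hand sides of the two biconditionals are logically equivalent statements about the same pair. It follows that $G\simeq H\iff\rel_G=\rel_H$ holds exactly when $G\simeq H\iff\srel_G=\srel_H$ holds. Quantifying over all $G,H\in\mathcal C$ yields that $\mathcal C$ is encoded by $\rel$ if and only if it is encoded by $\srel$.

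The only point needing care is the hypothesis that all members of $\mathcal C$ share the leaf set $X$. This is exactly what lets us invoke Proposition~\ref{prop:rel=srel}, and the preceding example with $G=(\{x\},\emptyset)$ and $H=(\{x,y\},\emptyset)$ shows the assumption cannot be dropped. I therefore expect no real obstacle: the argument is a direct substitution of equivalent conditions, and the content lies entirely in the cited proposition.
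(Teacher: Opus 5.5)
Your proposal is correct and matches the paper, which states this corollary as an immediate consequence of Proposition~\ref{prop:rel=srel}. For each pair $G,H\in\mathcal C$ on the common leaf set $X$, the conditions $\rel_G=\rel_H$ and $\srel_G=\srel_H$ are equivalent, so the two encoding biconditionals coincide pairwise, and the shared leaf set is exactly what licenses the proposition, as you note.
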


In the following, we mainly work with $\rel_G$. By
Corollary~\ref{cor:relIFFsrel}, this entails no loss of generality for
encoding questions on a fixed leaf set. At the same time, 
the leaf set $X$ of a DAG $G$ can be recovered from $\rel_G$
and so we do not need to restrict ourselves to
classes of DAGs or networks with the same leaf set.

\section{The canonical DAG and 2-regularization}
\label{sec:CanonG}

Our general aim is to identify the part of a DAG that is determined by its
LCA-relation $\rel_G$. To this end, we consider two natural constructions.
First, starting from $\rel_G$, one can construct a canonical DAG
$\cG_{\rel_G}$ directly from this relation. Second, one can simplify $G$
itself by removing vertices that are invisible to pairwise LCA-comparisons
and by deleting shortcuts; this gives the 2-regularization $\reg(G)$. The
main result of this section shows that these two constructions coincide up
to isomorphism. As we shall see later, $\reg(G)$ captures precisely the part of 
$G$ that is ``visible'' from $\rel_G$, and separates it from the information 
that is not encoded by pairwise $\rel_G$.

Note that $(ab,xy), (xy,ab) \in \rel_G$ implies $\lca_G(ab)=\lca_G(xy)$.
This ``equality information'' contained in $\rel_G$ allows us to identify leaf pairs with the
same LCA. This is an essential 
observation for defining a canonical DAG directly from $\rel_G$ as follows.

\begin{definition}[{The canonical DAG}]\label{def:qset}
    Let $G$ be a DAG and put $\rel\coloneqq \rel_G$. 
    We define the equivalence relation $\doteqdot$ on $\support_\rel$
    by putting, for all $p,q \in \support_\rel$
    $$p \doteqdot  q \iff (p,q) \in \rel \text{ and } (q,p) \in \rel.$$
    Let $[p]$ denote the equivalence class of $\doteqdot$ that contains $p\in\support_\rel$, 
    and let $Q$ denote the set of all such equivalence classes.
    Define the partial order $\sqsubseteq $ on $Q$ by putting, for all classes
    $[p]$ and $[q]$ in $Q$: $[p] \sqsubseteq  [q] \iff (p,q) \in \rel$.
    
     The \emph{canonical DAG} $\cG_\rel$ is defined 
     as the DAG obtained from the Hasse diagram $\Hasse(Q,\sqsubseteq )$ by relabeling all vertices $[aa]\in Q$ with $a$.        
    \end{definition}

We note that the canonical DAG was introduced formally in
\cite{LAMSH:25} in a more general setting. There, for a relation $R$, the
canonical DAG is defined as $\cG_R$ using the so-called $+$-closure of $R$
as discussed in some more detail in Section~\ref{sec:triple}. Since,
as shown in \cite{LAMSH:25}, $\rel_G$ and its $+$-closure coincide for every DAG $G$,
the DAG $\cG_{\rel}$ considered here is a special case of that construction.
This restricted formulation is sufficient for our
purposes and keeps the presentation simpler. 

We now compare the canonical DAG with a second construction associated
directly with $G$, namely its 2-regularization. This construction removes
the parts of $G$ that are irrelevant for the pairwise LCA-order recorded by
$\rel_G$, that is, vertices that do not occur as unique LCAs of one or two leaves,
and shortcuts.

Given a DAG $G$ on $X$, let \[\lcaV(G)\coloneqq \big\{v\in V(G)\mid  v=\lca_G(xy) \text{ for some } x,y\in X\big\},\]
that is, the set consisting precisely of those vertices of $G$ that occur as
the unique LCA of one or two leaves. In particular, $X\subseteq\lcaV(G)$,
since $x=\lca_G(xx)$ for every $x\in X$. The 2-regularization $\reg(G)$ is
obtained from $G$ by applying the $\ominus$-operator to all vertices not in
$\lcaV(G)$ and then deleting all shortcuts. More specifically:

\begin{definition}[2-regularization {\cite{HLM:26-RegNormArxiv}}]
For a DAG $G$ on $X$, the \emph{2-regularization} of $G$ is
\[\reg(G)\coloneqq \bigl(G\ominus \notlcaV(G)\bigr)^-, \]
where
$\notlcaV(G)\coloneqq V(G)\setminus \lcaV(G)$.
\end{definition}

A simple characterization for when $G$ or $G^-$ coincides with $\reg(G)$ is as follows. 
\begin{lemma}\label{lem:scfree-2lcarel}
For every DAG $G$, the following statements hold:
    \begin{enumerate}
        \item $G$ is 2-lca-relevant if and only if  $G^- = \reg(G)$.
        \item $G$ is 2-lca-relevant and shortcut-free if and only if $G = \reg(G)$.
    \end{enumerate}    
\end{lemma}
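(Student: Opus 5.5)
Both statements follow directly from the definition $\reg(G)=(G\ominus \notlcaV(G))^-$, together with Lemma~\ref{lem:properties-SF-G_NEW} and Lemma~\ref{lem:ominus-basics}. The key observation is that $G$ is 2-lca-relevant precisely when $\notlcaV(G)=\emptyset$. Indeed, 2-lca-relevance says that every vertex equals $\lca_G(xy)$ for some $x,y\in X$, which is exactly $V(G)=\lcaV(G)$.

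For (1), first suppose $G$ is 2-lca-relevant. Then $\notlcaV(G)=\emptyset$. Since $G\ominus\emptyset=G$ by convention, we get $\reg(G)=G^-$.

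Conversely, suppose $G^-=\reg(G)$. By Lemma~\ref{lem:properties-SF-G_NEW} we have $V(G^-)=V(G)$. Also, $V(\reg(G))=V(G\ominus\notlcaV(G))=V(G)\setminus\notlcaV(G)$, since taking the shortcut-free version preserves the vertex set and $\ominus$ removes exactly the vertices in $W$. Comparing vertex sets gives $\notlcaV(G)=\emptyset$, so $G$ is 2-lca-relevant.

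One technicality needs care here. The $\ominus$-notation was set up for $W\subsetneq V(G)$, and Lemma~\ref{lem:ominus-basics} requires $W\subseteq V(G)\setminus X$. Since $X\subseteq\lcaV(G)$, we always have $\notlcaV(G)\subseteq V(G)\setminus X$, and this set is proper because $X\neq\emptyset$. So the vertex-set identity is legitimate.

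For (2), suppose $G$ is 2-lca-relevant and shortcut-free. By (1), $\reg(G)=G^-$. Since $G$ has no shortcuts, $G^-=G$, and hence $G=\reg(G)$.

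Conversely, suppose $G=\reg(G)$. Then $G$ is the shortcut-free version of some DAG, so $G$ is shortcut-free by Lemma~\ref{lem:properties-SF-G_NEW}. Consequently $G^-=G=\reg(G)$, and (1) gives that $G$ is 2-lca-relevant.

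I expect no real obstacle in this argument. The only point worth stating explicitly is the vertex-set bookkeeping in (1), that is, that the passage $H\mapsto H^-$ does not change vertices.
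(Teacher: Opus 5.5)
Your proof is correct and follows the paper's argument. Both reduce (1) to the observation that $G$ is 2-lca-relevant exactly when $\notlcaV(G)=\emptyset$, and derive (2) from (1) using that $\reg(G)$ is shortcut-free. Your comparison of vertex sets, $V(G^-)=V(G)$ against $V(\reg(G))=V(G)\setminus\notlcaV(G)$, justifies the reverse implication in (1), which the paper's proof asserts without argument.
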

\begin{proof}
	  Let $G$ be a DAG. Observe that $G$ is 2-lca-relevant precisely if $\notlcaV(G)=\emptyset$
    which is, if and only if $\reg(G) = (G\ominus \emptyset)^- = G^-$. 
    Hence, Statement (1) holds. 
    
    Assume that $G$ is 2-lca-relevant and shortcut-free. Then 
    $G= G^- = \reg(G)$ by Statement (1). 
    Conversely,  if $G = \reg(G)$ then $G$ must be 
    shortcut-free since $\reg(G)$ is. 
    Hence, $G=G^-$ and Statement (1) implies that $G$ is 
    2-lca-relevant. Thus, Statement (2) holds.    
\end{proof}

 $\reg(G)$ and $\cG_{\rel_G}$ have several structural properties in common,
 as shown next.
\begin{proposition}[{\cite{HL:24,LAMSH:25}}]\label{prop:some-properties}
    Let $G$ be a DAG on $X$. Then, $\reg(G)$ and $\cG_{\rel_G}$ 
    are phylogenetic, 2-lca-relevant and regular DAGs on $X$. 
    Moreover, $\reg(G)$ can be constructed from $G$ in polynomial time in $|V(G)|+|E(G)|$ 
    and $\cG_{\rel_G}$  from $\rel_G$ in polynomial time in $|X|$. 
\end{proposition}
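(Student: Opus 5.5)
We prove the properties for $\reg(G)$ and for $\cG_{\rel_G}$ separately and then address the complexity claims. Throughout, write $H\coloneqq G\ominus\notlcaV(G)$, so that $\reg(G)=H^-$.

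\emph{The DAG $\reg(G)$.} Since $X\subseteq\lcaV(G)$, we have $\notlcaV(G)\subseteq V(G)\setminus X$. Lemma~\ref{lem:ominus-basics} then shows that $H$ is a DAG on $X$ whose ancestor order is the restriction of $\preceq_G$ to $\lcaV(G)$. Lemma~\ref{lem:properties-SF-G_NEW} shows that $H^-$ has the same vertices and the same ancestor order as $H$. Hence $\reg(G)$ has vertex set $\lcaV(G)$ and order $\preceq_G$ restricted to it.

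The key step is to show that for $x,y\in X$ with $\lca_G(xy)=v$, also $v=\lca_{\reg(G)}(xy)$. First, $v$ remains a common ancestor. Now let $w$ be any common ancestor of $x,y$ in $\reg(G)$. Then $w$ is a common ancestor of $x,y$ in $G$. In a finite DAG, every common ancestor lies above some LCA, and here the LCA is unique, so $v\preceq_G w$. Therefore $v\preceq_{\reg(G)} w$, and $v$ is the unique minimal common ancestor. It follows that every vertex of $\reg(G)$ is a 2-lca of $\reg(G)$, i.e., $\reg(G)$ is 2-lca-relevant.

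\emph{Regularity of $\reg(G)$.} First, the map $v\mapsto\CC(v)$ is injective. Suppose $\CC(v)=\CC(w)$ with $v=\lca(xy)$ and $w=\lca(x'y')$. Then $w$ is a common ancestor of $x,y$, so $v\preceq w$; symmetrically $w\preceq v$, hence $v=w$. Second, $u\preceq v$ if and only if $\CC(u)\subseteq\CC(v)$. The forward direction is clear. For the converse, write $v=\lca(xy)$ and $u=\lca(x'y')$. If $\CC(u)\subseteq\CC(v)$, then $v$ is a common ancestor of $x',y'$, so $u\preceq v$. Thus the poset $(V,\preceq)$ is isomorphic to $(\mathfrak C,\subseteq)$. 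A shortcut-free DAG is exactly the Hasse diagram of its reachability order, so $\reg(G)\approx\Hasse(\mathfrak C(\reg(G)))$.

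\emph{Phylogenetic.} Suppose $v$ had out-degree $1$ with unique child $c$. Then every leaf below $v$ lies below $c$, so $\CC(c)=\CC(v)$, contradicting injectivity. Hence no vertex has out-degree $1$, which gives the (strong) phylogenetic property.

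\emph{The DAG $\cG_{\rel_G}$.} The poset $(Q,\sqsubseteq)$ is by definition isomorphic to $(\lcaV(G),\preceq_G)$ via $[xy]\mapsto\lca_G(xy)$: it is well defined and injective precisely because of $\doteqdot$. Since $\reg(G)$ is shortcut-free, it is the Hasse diagram of this same order. Consequently $\cG_{\rel_G}$ and $\reg(G)$ are isomorphic, with leaves mapped by $[aa]\mapsto a$, and all the properties transfer. This also anticipates the main theorem; if one prefers, the canonical-DAG properties may simply be cited from \cite{LAMSH:25}.

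\emph{Complexity.} We can compute all pairwise LCAs in $G$ by reachability in polynomial time, then apply $\ominus$ and remove shortcuts via transitive reduction. For $\cG_{\rel_G}$, the relation has $O(|X|^4)$ entries, and computing its classes and Hasse diagram is polynomial.

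The main obstacle is the step showing that LCAs are preserved under $\ominus$ and shortcut deletion. It relies on the fact that the order is preserved and that every common ancestor lies above the unique LCA.
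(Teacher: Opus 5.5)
Your proof is correct, but it takes a genuinely different route from the paper. The paper's proof works entirely by citation:
\begin{itemize}
\item 2-lca-relevance of $\reg(G)$ is taken from \cite[Thm~5.5]{HL:24}.
\item Regularity of $\reg(G)$ comes from ``lca-relevant and shortcut-free implies regular'' \cite[Thm~4.10]{HL:24}.
\item The properties of $\cG_{\rel_G}$ and both complexity bounds are imported from \cite{LAMSH:25}. There the canonical DAG is built for general relations via the $+$-closure.
\item The isomorphism $\reg(G)\simeq\cG_{\rel_G}$ is proved only afterwards, in Theorem~\ref{thm:cluster-iso-of-canG_NEW}. That proof compares the cluster systems of two regular DAGs and applies Observation~\ref{obs:regular-same-clusters=>iso}.
\end{itemize}

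You argue from first principles instead:
\begin{itemize}
\item LCAs survive $\ominus$ and shortcut deletion, because the ancestor order is preserved and every common ancestor lies above the unique LCA.
\item 2-lca-relevance then makes $v\mapsto\CC(v)$ an order embedding. Combined with the fact that a shortcut-free DAG is the Hasse diagram of its reachability order, this gives regularity.
\item Injectivity of the cluster map excludes out-degree-one vertices. This yields even strong-phylogeneticity, and it makes explicit the phylogenetic claim for $\reg(G)$ that the paper's citation-based proof never spells out.
\item For $\cG_{\rel_G}$, you note that $[xy]\mapsto\lca_G(xy)$ is a poset isomorphism $(Q,\sqsubseteq)\cong(\lcaV(G),\preceq_G)$. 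Since $\reg(G)$ is the Hasse diagram of the latter order, you get $\cG_{\rel_G}\simeq\reg(G)$ directly, and all properties transfer.
\end{itemize}
Your ordering is not circular, since you never invoke Theorem~\ref{thm:cluster-iso-of-canG_NEW}.

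Two small points deserve a sentence each. First, $[aa]=\{aa\}$, because a leaf is the LCA only of itself; this is what makes the relabeling well defined. Second, a leaf-fixing isomorphism preserves LCAs and clusters, and that is exactly why 2-lca-relevance and regularity transfer.

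Your route buys self-containment, and it makes the isomorphism part of Theorem~\ref{thm:cluster-iso-of-canG_NEW} almost immediate, with no cluster or regularity argument needed. The paper's route is shorter and ties into the general canonical-DAG and closure machinery of \cite{LAMSH:25}, which it needs again later for $(\rel_G^{3,X})^+$.
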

\begin{proof}
    Let $G$ be a DAG on $X$. This proof is based on a collection of several previously established
    results. Theorem~5.5 of \cite{HL:24} establishes that $\reg(G)$ is a 2-lca-relevant (which, in
    the notation of \cite{HL:24}, would be denoted $\{1,2\}$-lca-relevant) DAG on $X$. This in
    particular shows that each vertex of $\reg(G)$ is the unique LCA of some subset $A\subseteq X$
    of arbitrary size, equivalent to the notion of being $\{1,2,\ldots,|X|\}$-lca-relevant or,
    simply, lca-relevant. Since $\reg(G)$, by definition, is shortcut-free, the latter and
    Theorem~4.10 of \cite{HL:24} implies that $\reg(G)$ is regular. Observation~7.6 of \cite{HL:24}
    provides that $\reg(G)$ can be constructed from $G$ in polynomial time in $|V(G)|+|E(G)|$. 

    The canonical DAG $\cG\coloneqq\cG_{\rel_G}$ is a phylogenetic and 2-lca-relevant DAG on $X$
    according to Proposition~28 of \cite{LAMSH:25}, which is applicable since $(ab,xx)\notin\rel_G$
    for all $ab\neq xx$ (a property called \textbf{(X1)} in \cite{LAMSH:25}). Since $\cG$ is, by
    definition, shortcut-free, analogous arguments as for $\reg(G)$ ensures that $\cG$ is regular.
    Theorem~38 of \cite{LAMSH:25} shows that $\cG$ can be constructed from $\rel_G$ in polynomial
    time in $|X|$.
\end{proof}

Proposition~\ref{prop:some-properties} collects structural properties of the two DAGs $\reg(G)$ and
$\cG_{\rel_G}$. We now relate these two constructions more explicitly. In particular, the next
theorem shows that the canonical DAG associated with $\rel_G$ and the 2-regularization of $G$ are
not merely both regular DAGs on the same leaf set, but are in fact isomorphic. This theorem also
characterizes the clusters of $\cG_{\rel_G}$ in terms of the 2-lca-relevant vertices of the original
DAG $G$.

\begin{theorem}\label{thm:cluster-iso-of-canG_NEW}
    Let $G$ be a DAG on $X$ and $\mathfrak{C}^{\lcaV}(G) = \{\CC_G(v)\mid v\in \lcaV(G)\}$. 
    Then,
    \[\mathfrak{C}(\cG_{\rel_G})=\mathfrak{C}^{\lcaV}(G) \quad\text{and}\quad  \reg(G) \simeq \cG_{\rel_G}  \approx \Hasse(\mathfrak{C}^{\lcaV}(G),\subseteq).\]
		In particular, $\reg(G)$ can be reconstructed, up to isomorphism, from
		$\rel_G$ in polynomial time in $|X|$.
\end{theorem}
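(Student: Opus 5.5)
The plan is to reduce everything to a comparison of clustering systems and then invoke regularity. By Proposition~\ref{prop:some-properties}, both $\reg(G)$ and $\cG_{\rel_G}$ are regular DAGs on $X$. By Observation~\ref{obs:regular-same-clusters=>iso}, it therefore suffices to show that
\[
\mathfrak{C}(\reg(G))=\mathfrak{C}^{\lcaV}(G)=\mathfrak{C}(\cG_{\rel_G}).
\]
Once these equalities hold, regularity of $\cG_{\rel_G}$ gives $\cG_{\rel_G}\approx\Hasse(\mathfrak{C}(\cG_{\rel_G}),\subseteq)=\Hasse(\mathfrak{C}^{\lcaV}(G),\subseteq)$. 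The final algorithmic claim then follows from $\reg(G)\simeq\cG_{\rel_G}$ together with the polynomial-time construction of $\cG_{\rel_G}$ from $\rel_G$ in Proposition~\ref{prop:some-properties}.

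\emph{Clusters of $\reg(G)$.} The vertex set of $\reg(G)$ is $V(G)\setminus\notlcaV(G)=\lcaV(G)$, since leaves lie in $\lcaV(G)$ and are never removed. Lemma~\ref{lem:ominus-basics} shows that $\ominus$ preserves the clusters of the remaining vertices. Lemma~\ref{lem:properties-SF-G_NEW} shows that removing shortcuts preserves $\preceq$, and hence clusters. Therefore $\CC_{\reg(G)}(v)=\CC_G(v)$ for every $v\in\lcaV(G)$, so $\mathfrak{C}(\reg(G))=\mathfrak{C}^{\lcaV}(G)$.

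\emph{Clusters of $\cG_{\rel_G}$.} Define $\psi\colon Q\to\lcaV(G)$ by $[ab]\mapsto\lca_G(ab)$. This map is well defined and injective, because $ab\doteqdot xy$ holds if and only if $\lca_G(ab)=\lca_G(xy)$. It is surjective by the definition of $\lcaV(G)$, noting that every $ab$ with well-defined LCA lies in $\support_{\rel_G}$ via $(ab,ab)$. By definition of $\sqsubseteq$ it is an order isomorphism between $(Q,\sqsubseteq)$ and $(\lcaV(G),\preceq_G)$. Hence $\cG_{\rel_G}$ is, up to relabeling, the Hasse diagram of $\preceq_G$ restricted to $\lcaV(G)$, with each $[aa]$ relabeled by $a$, consistent with $\psi([aa])=a$. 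In a Hasse diagram, reachability coincides with the order, so $[xx]\preceq_{\cG}[ab]$ holds if and only if $x\preceq_G\lca_G(ab)$. It remains to check that the leaves of $\cG_{\rel_G}$ are exactly the vertices $[xx]$. A class $[ab]$ with $a\neq b$ and $\lca_G(ab)\notin X$ has $[aa]$ strictly below it. A leaf $x$ of $G$ has nothing in $\lcaV(G)$ strictly below it, since every such vertex has a nonempty cluster of leaves. Consequently $\CC_{\cG}([ab])=\CC_G(\lca_G(ab))$, and this gives the equality of clustering systems.

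The main obstacle is this last bookkeeping step: confirming that the leaves of $\cG_{\rel_G}$ correspond exactly to $X$, and that the relabeling $[aa]\mapsto a$ matches clusters. One also has to handle the degenerate case where $\lca_G(ab)$ is itself a leaf, so that $[ab]=[aa]$. If this step turns out to be delicate, I would sidestep it by working directly with $\psi$. The map $\psi$ is an order isomorphism that fixes leaves, and $\reg(G)$ is the Hasse diagram of $(\lcaV(G),\preceq_G)$, since $\ominus$ preserves the order and shortcut removal leaves exactly the cover relations. So $\psi$ is already a leaf-fixing graph isomorphism $\cG_{\rel_G}\to\reg(G)$, which gives $\reg(G)\simeq\cG_{\rel_G}$ directly.
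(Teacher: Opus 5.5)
Your proposal is correct and follows the paper's proof. Both compute $\mathfrak{C}(\reg(G))$ and $\mathfrak{C}(\cG_{\rel_G})$, show each equals $\mathfrak{C}^{\lcaV}(G)$, and conclude via regularity and Observation~\ref{obs:regular-same-clusters=>iso}; your $\psi$-argument is just the paper's chain $x\in\CC_G(\lca_G(ab))\iff(xx,ab)\in\rel_G\iff[xx]\sqsubseteq[ab]$ phrased as an order isomorphism. Your fallback, identifying both DAGs as Hasse diagrams of posets isomorphic via $\psi$, is also valid and yields $\reg(G)\simeq\cG_{\rel_G}$ without regularity, and the ``degenerate case'' you flag is vacuous, since $\lca_G(ab)$ can be a leaf only if $a=b$.
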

\begin{proof}
    Let $G$ be a DAG on $X$ and put $\cG\coloneqq\cG_{\rel_G}$ to be the canonical DAG of $\rel_G$.
    Let $(Q,\sqsubseteq )$ be the poset of $\rel_G$ as specified in Definition~\ref{def:qset}. By
    Proposition~\ref{prop:some-properties}, $\cG$ is a DAG on $X$. Hence, by construction of the
    canonical DAG, $\cG$ consist of the leaves in $X$ and the non-leaves $[ab]\in Q$ for which
    $a\neq b$. Clearly, $\CC_{\cG}(x)=\{x\} = \CC_G(x)$ for all $x\in X$, since both $G$ and $\cG$
    are DAGs on $X$. 
    
    Now, let $[ab]$ be some vertex of $\cG$ for which $a,b\in X$ and $a\neq b$. Note that by
    definition of $\rel_G$, $[ab]\in Q$ implies that $\lca_G(ab)$ is well-defined. We will now
    proceed to show that $\CC_G(\lca_G(ab)) = \CC_{\cG}([ab])$. To see that
    $\CC_G(\lca_G(ab))\subseteq\CC_{\cG}([ab])$, let $x\in\CC_G(\lca_G(ab))$. By definition,
    $x\preceq_G\lca_G(ab)$. Since $a\neq b$ and $x\in X$ is a leaf, $x=\lca_G(xx)\prec_G\lca_G(ab)$.
    By definition, $(xx,ab)\in\rel_G$ and $(ab,xx)\notin\rel_G$. Hence, $[xx] \sqsubseteq [ab]$ and
    $[xx]\neq[ab]$. It is now easy to verify that there is a path from $[ab]$ to $[xx]$ in
    $\Hasse(Q,\sqsubseteq)$. Thus after relabeling $[xx]$ by $x$, there is a path from $[ab]$ to $x$
    in $\cG$. In particular, $x\preceq_{\cG}[ab]$, which implies that $x\in \CC_{\cG}([ab])$. To see
    that $\CC_{\cG}([ab])\subseteq\CC_G(\lca_G(ab))$, let $x\in \CC_{\cG}([ab])$. By definition,
    $x\in X$ is a leaf and $x\preceq_{\cG}[ab]$. By construction of $\cG$ and definition of
    $(Q,\sqsubseteq )$, it holds that $[xx] \sqsubseteq [ab]$ and $(xx,ab)\in\rel_G$. Therefore,
    $x=\lca_G(xx)\preceq_G\lca_G(ab)$ and thus, $x\in\CC_G(\lca_G(ab))$. Hence,
    $\CC_{\cG}([ab])\subseteq\CC_G(\lca_G(ab))$. In summary, $\CC_G(\lca_G(ab)) = \CC_{\cG}([ab])$
    holds.

    We proceed with showing that $\mathfrak{C}(\cG)=\{\CC_G(v)\mid v\in \lcaV(G)\}$. By the
    preceding arguments, any $C\in \mathfrak{C}(\cG)$ satisfies $C=\CC_{G}(\lca_G(ab))$ for some
    $a,b\in X$. On the one hand, this means that every $C\in\mathfrak{C}(\cG)$ satisfies
    $C=\CC_G(v)$ for some $v\in \lcaV(G)$, i.e., that $\mathfrak{C}(\cG)\subseteq\{\CC_G(v)\mid v\in
    \lcaV(G)\}$. On the other hand, for every vertex $v\in \lcaV(G)$ there exists, by definition,
    either some $a,b\in X$ with $a\neq b$ for which $v=\lca_G(ab)$, or $v=x$ for some $x\in X$. In
    the first case, $[ab]$ is a vertex of $\cG$ and $\CC_G(v)=\CC_G(\lca_G(ab))=\CC_{\cG}([ab])\in
    \mathfrak{C}(\cG)$. In the second case, $\CC_G(v)=xx = \{x\}= \CC_{\cG}(x)\in \mathfrak{C}(\cG)$
    also holds. Thus, $\mathfrak{C}(\cG)=\{\CC_G(v)\mid v\in \lcaV(G)\}$.
 
    It remains to compare $\cG$ with $\reg(G)$. By Proposition~\ref{prop:some-properties}, both
    $\reg(G)$ and $\cG$ are regular DAGs. We show first that
    $\mathfrak{C}(\reg(G))=\mathfrak{C}(\cG)$ to conclude that $\reg(G)\simeq\cG$. To this end, note
    that by definition, we have $V(\reg(G))=V(G\ominus \notlcaV(G))=\lcaV(G)$ and thus
    $\mathfrak{C}(G\ominus \notlcaV(G))=\{\CC_{G\ominus \notlcaV(G)}(v)\mid v\in \lcaV(G)\}$. This
    together with $\mathfrak{C}(\cG)=\{\CC_G(v)\mid v\in \lcaV(G)\}$ and
    Lemma~\ref{lem:ominus-basics} implies that $\mathfrak{C}(G\ominus
    \notlcaV(G))=\mathfrak{C}(\cG)$. Moreover, by Lemma~\ref{lem:properties-SF-G_NEW}
    and since $\reg(G) = (G\ominus \notlcaV(G))^-$,
    $u
    \preceq_{G\ominus \notlcaV(G)} v$ if and only if $u \preceq_{\reg(G)} v$ and, therefore,
    $\mathfrak{C}(\reg(G))=\mathfrak{C}(G\ominus \notlcaV(G))$. Hence, we obtain
    $\mathfrak{C}(\reg(G))=\mathfrak{C}(\cG)$ which, together with
    Observation~\ref{obs:regular-same-clusters=>iso} and the fact that these two DAG are regular,
    implies that $\reg(G)\simeq\cG$. Finally, since $\cG$ is regular and
    $\mathfrak{C}(\cG)=\mathfrak{C}^{\lcaV}(G)$ it follows that $\cG \approx
    \Hasse(\mathfrak{C}^{\lcaV}(G),\subseteq)$.
    
    To see that the last statement in the theorem holds, note that since $\reg(G) \simeq \cG$ and
    since, by Proposition~\ref{prop:some-properties}, $\cG=\cG_{\rel_G}$ can be constructed from
    $\rel_G$ in polynomial time in $|X|$, it follows that $\reg(G)$ can be reconstructed, up to
    isomorphism, from $\rel_G$ by constructing $\cG_{\rel_G}$ in polynomial time in $|X|$.
\end{proof}

As stated Proposition~\ref{prop:some-properties},
$\reg(G)$ can be constructed directly from $G$ in polynomial time. However, the significance of
Theorem~\ref{thm:cluster-iso-of-canG_NEW} is that the structure of $G$ is not required: the relation
$\rel_G$ alone suffices to reconstruct $\reg(G)$, up to isomorphism, in polynomial time in $|X|$.
In particular,
since  $\reg(G) \simeq \cG_{\rel_G}  \approx \Hasse(\mathfrak{C}^{\lcaV}(G),\subseteq)$, 
the 2-regularization $\reg(G)$ and, equivalently the canonical DAG $\cG_{\rel_G}$,
is determined, up to isomorphism, by the
clusters in $\mathfrak C^{\lcaV}(G)$. Hence, whenever a class of DAGs or
networks admits a characterization in terms of its clusters, the preceding
theorem can be used to recognize when $\reg(G)$ belongs to that class. We
now give examples of two such results.

\begin{corollary}\label{cor:canon-trees}
    For every DAG $G$ the following statements are equivalent.
    \begin{enumerate}
        \item $\reg(G)$ is a tree.
        \item $\cG_{\rel_G}$ is a tree.        
        \item $\mathfrak{C}^{\lcaV}(G)$ is a hierarchy. 
    \end{enumerate}
\end{corollary}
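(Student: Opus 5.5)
(1)$\Leftrightarrow$(2) follows at once from Theorem~\ref{thm:cluster-iso-of-canG_NEW}. Since $\reg(G)\simeq\cG_{\rel_G}$, the two DAGs are graph isomorphic, and being a tree (a network with a unique root and no hybrid vertices) is invariant under graph isomorphism.

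For (2)$\Leftrightarrow$(3) I would use that $\cG\coloneqq\cG_{\rel_G}$ is regular (Proposition~\ref{prop:some-properties}) and that $\mathfrak{C}(\cG)=\mathfrak{C}^{\lcaV}(G)$ (Theorem~\ref{thm:cluster-iso-of-canG_NEW}). Hence $\cG\approx\Hasse(\mathfrak{C}^{\lcaV}(G),\subseteq)$, and it suffices to show that, for a set system $\mathfrak{C}$ of non-empty sets with the structure at hand, the Hasse diagram $\Hasse(\mathfrak{C},\subseteq)$ is a tree if and only if $\mathfrak{C}$ is a hierarchy.

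\emph{(3)$\Rightarrow$(2).} Assume $\mathfrak{C}=\mathfrak{C}^{\lcaV}(G)$ is a hierarchy.
\begin{itemize}
\item \emph{No hybrids.} Suppose a vertex $C$ had two distinct parents $A,B$ in the Hasse diagram. Then $C\subseteq A\cap B$, so $A\cap B\neq\emptyset$. The hierarchy property gives $A\subseteq B$ or $B\subseteq A$, say $A\subsetneq B$. Then $C\subsetneq A\subsetneq B$, contradicting that $(B,C)$ is a Hasse arc.
\item \emph{Unique root.} Two distinct maximal clusters $A,B$ must be disjoint. I would show this is impossible. If $|X|=1$ the claim is trivial. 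Otherwise, pick $a\in A$ and $b\in B$. Then $\lca_G(ab)$ is well-defined, since $[ab]\in Q$ is forced by $ab$ lying in $\support_{\rel_G}$; its cluster contains $a$ and $b$, so it is a cluster meeting both $A$ and $B$. By the hierarchy property it is comparable with each, and by maximality it equals both, so $A=B$.
\end{itemize}

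The main obstacle is exactly the existence of that LCA. $\lca_G(ab)$ need not be well-defined in an arbitrary DAG: if $G$ has several roots or lacks the 2-lca-property, $\support_{\rel_G}$ may omit $ab$. In that case I would first try to show the hierarchy still forces a unique maximal cluster. Failing that, I would read ``tree'' here in the sense used in the paper, with the forest case handled as degenerate. So I would prove hybrid-freeness in general and treat root uniqueness separately.

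\emph{(2)$\Rightarrow$(3).} Assume $\cG$ is a tree and let $C_u=\CC_{\cG}(u)$ and $C_v=\CC_{\cG}(v)$ be two clusters with $x\in C_u\cap C_v$. Then $u$ and $v$ both lie on the unique root-to-$x$ path of the tree, so they are comparable, say $v\preceq_{\cG}u$. Hence $C_v\subseteq C_u$. Thus no two clusters overlap, and $\mathfrak{C}^{\lcaV}(G)=\mathfrak{C}(\cG)$ is a hierarchy.
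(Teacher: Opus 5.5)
\paragraph{Root uniqueness is a genuine gap.} Your arguments for (1)$\Leftrightarrow$(2), for the absence of hybrids in (3)$\Rightarrow$(2), and for the absence of overlaps in (2)$\Rightarrow$(3) are correct. Proving the tree/hierarchy correspondence for $\Hasse(\mathfrak{C},\subseteq)$ directly is a legitimate, more self-contained alternative to the paper, which instead cites that phylogenetic trees have hierarchical cluster sets and that regular networks with hierarchical cluster sets are trees.

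The problem is root uniqueness. Your first attempt is unfounded: $ab\in\support_{\rel_G}$ holds precisely when $\lca_G(ab)$ is well-defined, and nothing forces this in an arbitrary DAG. Your fallback cannot work either. If ``hierarchy'' only meant ``no two members overlap'', then (3)$\Rightarrow$(2) would be false. Take the DAG $G$ consisting of two isolated leaves $x,y$. Then $\mathfrak{C}^{\lcaV}(G)=\{\{x\},\{y\}\}$ has no overlapping members. Yet $\cG_{\rel_G}\simeq\reg(G)=G$ has two roots, so it is not a network and hence not a tree. Declaring the forest case ``degenerate'' therefore changes the statement rather than proving it.

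\paragraph{The missing idea.} In this paper a hierarchy is, by definition, a clustering system on $X$. So (3) contains the condition $X\in\mathfrak{C}^{\lcaV}(G)=\mathfrak{C}(\cG)$. In $\Hasse(\mathfrak{C}^{\lcaV}(G),\subseteq)$ the set $X$ is then the maximum. Every other cluster $C\subsetneq X$ is not maximal, so it has an upper cover, i.e.\ a parent. Hence $X$ is the unique root, and no appeal to $\lca_G(ab)$ is needed.

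Symmetrically, in (2)$\Rightarrow$(3) you must also show that $\mathfrak{C}^{\lcaV}(G)$ is a clustering system, which you omit. Every vertex of the tree $\cG$ is a descendant of its root $\rho$, so $\CC_{\cG}(\rho)=X$ and $X\in\mathfrak{C}^{\lcaV}(G)$. Singletons are present and $\emptyset$ is absent automatically. With these two additions your direct proof is complete.
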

\begin{proof}
By Theorem~\ref{thm:cluster-iso-of-canG_NEW}, $\cG_{\rel_G} \simeq \reg(G)$ and 
it immediately follows that Statements (1) and (2) are equivalent. 
We show now that Statements (2) and (3) are equivalent. 
Suppose that $\cG_{\rel_G}$ is a tree. By Proposition~\ref{prop:some-properties}
$\cG_{\rel_G}$ is phylogenetic. By \cite[Thm~3.5.2]{sem-ste-03a}, 
$\mathfrak{C}(\cG_{\rel_G})$ is a hierarchy. 
Theorem~\ref{thm:cluster-iso-of-canG_NEW}
implies that 
$\mathfrak{C}^{\lcaV}(G)$ is a hierarchy.  
Suppose now that $\mathfrak{C}^{\lcaV}(G)$ is a hierarchy. 
By Theorem~\ref{thm:cluster-iso-of-canG_NEW}
$\mathfrak{C}(\cG_{\rel_G})$ is a hierarchy. 
By Proposition~\ref{prop:some-properties}, 
$\cG_{\rel_G}$ is regular, which together
with \cite[Cor~9]{Hellmuth2023} implies that $\cG_{\rel_G}$ is a tree.
\end{proof}

\begin{corollary}\label{cor:canon-lvl1}
    For every DAG $G$ the following statements are equivalent.
    \begin{enumerate}
         \item $\reg(G)$ is a phylogenetic  level-1 network.
        \item $\cG_{\rel_G}$ is a phylogenetic level-1 network.
       \item $\mathfrak{C}^{\lcaV}(G)$ is  closed and satisfies property
                \begin{description}
                    \item[\textnormal{(L):}]  For all $C_1,C_2,C_3\in\mathfrak C^{\lcaV}(G)$,
                if $C_1$ overlaps both $C_2$ and $C_3$, then     
    $C_1\cap C_2 = C_1\cap C_3$.
                \end{description}
    \end{enumerate}
\end{corollary}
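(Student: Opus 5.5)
We follow the template of the proof of Corollary~\ref{cor:canon-trees}. The equivalence of (1) and (2) is immediate from Theorem~\ref{thm:cluster-iso-of-canG_NEW}: $\reg(G)\simeq\cG_{\rel_G}$, and being a phylogenetic level-1 network is invariant under isomorphism. It therefore suffices to show that (2) and (3) are equivalent. Here the key inputs are the identity $\mathfrak{C}(\cG_{\rel_G})=\mathfrak{C}^{\lcaV}(G)$ from Theorem~\ref{thm:cluster-iso-of-canG_NEW}, together with Proposition~\ref{prop:some-properties}, which says that $\cG_{\rel_G}$ is always a phylogenetic, regular, 2-lca-relevant DAG on $X$.

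For (2)$\Rightarrow$(3), assume $\cG_{\rel_G}$ is a phylogenetic level-1 network. We invoke the cluster-theoretic characterization of level-1 networks from \cite{Hellmuth2023}. There it is shown that the clustering system of a level-1 network satisfies property (L) and that every cluster of such a network is closed under intersection. If the cited closedness statement needs an extra hypothesis, such as lca-relevance, this is supplied by the 2-lca-relevance and regularity of $\cG_{\rel_G}$. Transferring these properties via $\mathfrak{C}(\cG_{\rel_G})=\mathfrak{C}^{\lcaV}(G)$ gives (3).

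For (3)$\Rightarrow$(2), assume $\mathfrak{C}^{\lcaV}(G)$ is closed and satisfies (L). Then $\mathfrak{C}(\cG_{\rel_G})$ is closed and satisfies (L). First, $\cG_{\rel_G}$ must be a network, meaning it has a unique root. This should follow because, for every $a,b\in X$, the intersection of all clusters containing $a,b$ is a cluster by closedness. Together with 2-lca-relevance, this should force the inclusion-maximal cluster to be unique. If this argument fails, we instead show that two maximal clusters cannot overlap, by applying (L) and closedness to the clusters they contain. Next, since $\cG_{\rel_G}$ is regular, it is isomorphic to $\Hasse(\mathfrak{C}(\cG_{\rel_G}),\subseteq)$. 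The result of \cite{Hellmuth2023} stating that the Hasse diagram of a closed clustering system satisfying (L) is a level-1 network then shows that $\cG_{\rel_G}$ is level-1. Phylogeneticity comes for free from Proposition~\ref{prop:some-properties}.

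The main obstacle is matching our hypotheses exactly to the cited level-1 characterization in \cite{Hellmuth2023}. That characterization is stated for clustering systems, which must contain $X$, so the delicate points are whether $X\in\mathfrak{C}^{\lcaV}(G)$ and whether the root is unique. To handle this, we first check directly that in the regular, 2-lca-relevant DAG $\cG_{\rel_G}$, closedness implies that $X$ is a cluster, or more precisely that a unique maximal element exists. Only then do we apply the cited equivalence between regular level-1 networks and closed clustering systems with property (L).
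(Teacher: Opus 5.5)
Your overall route is the paper's: (1)$\Leftrightarrow$(2) from $\reg(G)\simeq\cG_{\rel_G}$, and (2)$\Leftrightarrow$(3) by transferring properties along $\mathfrak{C}(\cG_{\rel_G})=\mathfrak{C}^{\lcaV}(G)$. This uses regularity of $\cG_{\rel_G}$ together with \cite[Thm~8]{Hellmuth2023} for one direction and \cite[Prop~18]{Hellmuth2023} for the other.

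\textbf{The gap is your plan for the unique root.} You propose to show that closedness, (L), regularity and 2-lca-relevance force a unique maximal cluster, i.e.\ $X\in\mathfrak{C}(\cG_{\rel_G})$. This is false. Let $G$ consist of two disjoint cherries $r_1\to x$, $r_1\to y$ and $r_2\to z$, $r_2\to w$. Then
$\mathfrak{C}^{\lcaV}(G)=\{\{x\},\{y\},\{z\},\{w\},\{x,y\},\{z,w\}\}$
satisfies the intersection condition and (L) vacuously. Moreover, $\cG_{\rel_G}\simeq G$ is regular, phylogenetic and 2-lca-relevant, yet it has two roots, so (1) and (2) fail.

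Your first argument breaks because $x$ and $z$ lie in no common cluster, so there is no ``intersection of all clusters containing them''. Your fallback does not help either. Non-overlapping maximal clusters can still be disjoint, so this gives no uniqueness. The claim is in fact also false: $r_1\to x,y$ and $r_2\to y,z$ give overlapping maximal clusters $\{x,y\},\{y,z\}$ in a closed system satisfying (L).

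\textbf{The fix.} The property $X\in\mathfrak{C}^{\lcaV}(G)$ cannot be derived; it must be part of hypothesis (3). In the paper, closedness is defined only for clustering systems, which by definition contain $X$. So (3) asserts that $\mathfrak{C}^{\lcaV}(G)$ is a closed clustering system. This is exactly what makes \cite[Prop~18]{Hellmuth2023} applicable, and the paper's proof uses it implicitly.

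With that reading, the unique root is immediate. $X$ is the unique $\subseteq$-maximal element of $\mathfrak{C}(\cG_{\rel_G})$, and $\cG_{\rel_G}\approx\Hasse(\mathfrak{C}(\cG_{\rel_G}),\subseteq)$. In the direction (2)$\Rightarrow$(3), you get $X\in\mathfrak{C}(\cG_{\rel_G})$ for free as the cluster of the root of the network.

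So replace your ``check directly'' step by this observation. The rest of your argument then coincides with the paper's.
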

\begin{proof}
By Theorem~\ref{thm:cluster-iso-of-canG_NEW}, $\cG_{\rel_G} \simeq \reg(G)$ and 
it immediately follows that Statements (1) and (2) are equivalent. 
We show now that Statements (2) and (3) are equivalent. 
Suppose that $\cG_{\rel_G}$ is a phylogenetic level-1 network. 
By \cite[Thm~8]{Hellmuth2023}, 
$\mathfrak{C}(\cG_{\rel_G})$ is closed and satisfies (L). 
Suppose now that $\mathfrak{C}^{\lcaV}(G)$ is 
is closed and satisfies (L). 
By Theorem~\ref{thm:cluster-iso-of-canG_NEW}
$\mathfrak{C}(\cG_{\rel_G}) = \mathfrak{C}^{\lcaV}(G)$ 
is closed and satisfies (L). 
By Proposition~\ref{prop:some-properties}, 
$\cG_{\rel_G}$ is regular, which together
with \cite[Prop~18]{Hellmuth2023}
implies that $\cG_{\rel_G}$ is a phylogenetic level-1 network
\end{proof}

We conclude this section by noting that 
further results along the lines of Corollary~\ref{cor:canon-trees}
and \ref{cor:canon-lvl1} can be established by referring to \cite{Hellmuth2023}
in which several types of networks are characterized in terms of their clustering  systems. 

\section{A general encoding principle}
\label{sec:encoding-principle}

We now use the connection between $\cG_{\rel_G}$ and $\reg(G)$ given
in the last section to derive a
general criterion for when a class of DAGs is encoded by $\rel$ (see Corollary~\ref{cor:encoding-via-prune}). 
As we shall see in this section, this principle immediately yields 
various classes of well-known networks that are encoded by their LCA relation.

First, observe that neither the canonical DAG nor the 2-regularization alone is,
in general, sufficient to recover the original DAG. Indeed,
Figure~\ref{fig:canonG-relG} shows that
$\cG_{\rel_G}\simeq \cG_{\rel_H}$ does not necessarily imply
$G\simeq H$ or $\rel_G=\rel_H$. Since
$\cG_{\rel_G}\simeq\reg(G)$ for every DAG $G$, the same example also shows
that $\reg(G)\simeq\reg(H)$ need not imply $G\simeq H$ or
$\rel_G=\rel_H$. Nevertheless, equality of the LCA-relations is stronger: 
$\rel_G=\rel_H$ implies $\reg(G)\simeq\reg(H)$, as shown next. 

\begin{figure}[t]
  \centering
  \includegraphics[width=0.8\textwidth]{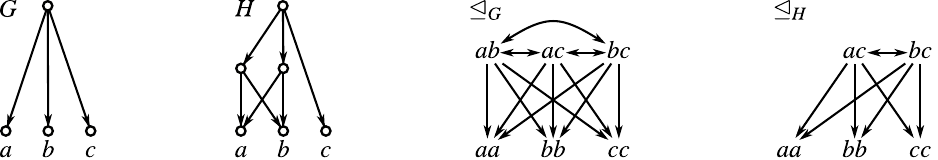}
	 \caption{Two non-isomorphic DAGs $G$ and $H$ with
  $\rel_G\neq \rel_H$ but $\cG_{\rel_G}\simeq \cG_{\rel_H} \simeq G$.
  Here the relations $\rel_G$ and $\rel_H$ are shown graphically, where $(p,q)\in\rel_G$ (resp. $(p,q)\in\rel_H$) is indicated by the arc $q\to p$, although loops $p\to p$ are omitted.
  Note that, $G=\reg(G)\simeq\reg(H)\simeq\cG_{\rel_G}\simeq\cG_{\rel_H}$ although $G\not\simeq H$.}
  \label{fig:canonG-relG}
\end{figure}

\begin{theorem}\label{thm:GsimH=>rel=rel=>ominus-sim_NEW}
Let $G$ and $H$ be DAGs. If $G\simeq H$, then $\rel_G=\rel_H$. Moreover,
if $\rel_G=\rel_H$, then $\reg(G)\simeq \reg(H)$.
\end{theorem}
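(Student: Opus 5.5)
The plan is to prove the two implications separately. The first one is a direct transport of structure along a leaf-fixing isomorphism. The second one follows from Theorem~\ref{thm:cluster-iso-of-canG_NEW} and the fact that the canonical DAG depends only on the relation.

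For the first implication, assume $G\simeq H$ via a graph isomorphism $\varphi\colon V(G)\to V(H)$ with $\varphi(x)=x$ for all $x\in X$. Since $\varphi$ maps arcs to arcs and non-arcs to non-arcs, it maps directed paths to directed paths in both directions. Hence $u\preceq_G v$ if and only if $\varphi(u)\preceq_H\varphi(v)$. In particular, the leaf sets of $G$ and $H$ agree and are fixed pointwise.

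Next we check common ancestors. For leaves $a,b$, a vertex $v$ is a common ancestor of $a,b$ in $G$ if and only if $\varphi(v)$ is one in $H$, because $a=\varphi(a)$ and $b=\varphi(b)$. Likewise, $\preceq$-minimality is preserved. Therefore $v$ is the unique LCA of $ab$ in $G$ if and only if $\varphi(v)$ is the unique LCA of $ab$ in $H$. So $\lca_G(ab)$ is well-defined precisely when $\lca_H(ab)$ is, and then $\varphi(\lca_G(ab))=\lca_H(ab)$.

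Combining these facts with the order-preservation of $\varphi$ gives, for all $ab,xy\in\pairs(X)$,
\[
(ab,xy)\in\rel_G \iff (ab,xy)\in\rel_H .
\]
Thus $\rel_G=\rel_H$. The only point needing care is that leaf-fixing is what makes the pair labels $ab$ transfer unchanged.

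For the second implication, assume $\rel_G=\rel_H$. Definition~\ref{def:qset} builds $\cG_{\rel}$ purely from the relation $\rel$: the equivalence $\doteqdot$, the quotient poset $(Q,\sqsubseteq)$, its Hasse diagram, and the relabeling of the classes $[aa]$ by $a$. Hence $\cG_{\rel_G}=\cG_{\rel_H}$ literally. Theorem~\ref{thm:cluster-iso-of-canG_NEW} then gives
\[
\reg(G)\simeq\cG_{\rel_G}=\cG_{\rel_H}\simeq\reg(H).
\]
Since $\simeq$ is transitive, because leaf-fixing graph isomorphisms compose, we conclude $\reg(G)\simeq\reg(H)$.

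The second implication also needs $G$ and $H$ to share a leaf set. This holds automatically, since $X=\{x\mid (xx,xx)\in\rel_G\}$ as noted in Section~\ref{sec:lca-rel-basics}.

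There is no real obstacle in this argument. The step demanding the most care is the bookkeeping in the first part, namely verifying that uniqueness of the LCA is preserved under $\varphi$. That verification is routine once order-preservation of $\varphi$ is established.
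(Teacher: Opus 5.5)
Your proposal is correct and follows essentially the same route as the paper. For the first implication you transport the ancestor order and the unique LCAs along the leaf-fixing isomorphism $\varphi$. For the second you use that $\cG_{\rel_G}=\cG_{\rel_H}$ because the canonical DAG depends only on the relation, and then chain the isomorphisms from Theorem~\ref{thm:cluster-iso-of-canG_NEW}. Your extra remark that $G$ and $H$ automatically share the leaf set $X=\{x\mid (xx,xx)\in\rel_G\}$ makes explicit a detail the paper leaves implicit.
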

\begin{proof}
Let $G$ and $H$ be DAGs. Suppose first that $G\simeq H$. Then $G$ and $H$
are DAGs on a common leaf set $X$, and there is a graph isomorphism
$\varphi\colon V(G)\to V(H)$ with $\varphi(x)=x$ for all $x\in X$.

We first note that $\varphi$ preserves the ancestor relation, i.e., for all $u,v\in V(G)$, it holds
that $ u\preceq_G v \iff \varphi(u)\preceq_H \varphi(v)$. Indeed, directed
paths in $G$ are mapped by $\varphi$ to directed paths in $H$, and conversely by $\varphi^{-1}$.
Consequently, for all $x,y\in X$, the LCA $\lca_G(xy)$ is well-defined if
and only if $\lca_H(\varphi(x)\varphi(y))=\lca_H(xy)$ is well-defined. 
Now, if $\lca_G(xy)$ is well-defined, then 
$\varphi(\lca_G(xy))$ is a common ancestor of $x$ and $y$ in $H$.
Moreover, if $u$ is any common ancestor of $x$ and $y$ in $H$, then
$\varphi^{-1}(u)$ is a common ancestor of $x$ and $y$ in $G$.
Thus,  $\lca_G(xy)\preceq_G \varphi^{-1}(u)$ and it follows
that $\varphi(\lca_G(xy))\preceq_H u$. In other words, 
$\varphi(\lca_G(xy))$ is the unique LCA of $x$ and $y$ in $H$.
By analogous arguments, $\varphi^{-1}(\lca_H(xy))$ is
the unique LCA of $x$ and $y$ in $G$.
Hence, $   \varphi(\lca_G(xy))=\lca_H(xy)$ holds. 

Now let $a,b,x,y\in X$. By definition, $(ab,xy)\in\rel_G$ if and only if $\lca_G(ab)$ and
$\lca_G(xy)$ are well-defined and satisfy $ \lca_G(ab)\preceq_G \lca_G(xy)$. By the previous
paragraph and preservation of the ancestor relation, this is equivalent to $
\lca_H(ab)=\varphi(\lca_G(ab)) \preceq_H \varphi(\lca_G(xy))=\lca_H(xy), $ with both LCAs
well-defined. This is precisely $(ab,xy)\in\rel_H$. Hence $\rel_G=\rel_H$.

Now assume that $\rel_G=\rel_H$. By the uniqueness of the canonical DAG constructed from a relation,
we have $ \cG_{\rel_G}=\cG_{\rel_H}. $ Together with Theorem~\ref{thm:cluster-iso-of-canG_NEW}, this
implies $ \reg(G)\simeq \cG_{\rel_G} = \cG_{\rel_H} \simeq \reg(H)$.
\end{proof}

The obstruction to being encoded by $\rel$ is the existence of two
non-isomorphic DAGs $G,H\in\mathcal C$ with $\rel_G=\rel_H$
and thus, by Theorem~\ref{thm:GsimH=>rel=rel=>ominus-sim_NEW}, 
with $ \reg(G)\simeq \reg(H)$.
This yields the following simple sufficient condition for a class of DAGs
to be encoded by its $\rel$-relation.

\begin{corollary}\label{cor:encoding-via-prune}
Let $\mathcal C$ be a class of DAGs such that, for all
$G,H\in\mathcal C$ it holds that
$\reg(G)\simeq \reg(H) $ implies $ G\simeq H.$
Then, $\mathcal C$ is encoded by $\rel$.
\end{corollary}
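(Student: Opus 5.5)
We must show that for all $G,H\in\mathcal C$ we have $G\simeq H \iff \rel_G=\rel_H$. The argument splits into the two implications, and both follow almost immediately from Theorem~\ref{thm:GsimH=>rel=rel=>ominus-sim_NEW} together with the hypothesis on $\mathcal C$.

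For the forward direction, let $G,H\in\mathcal C$ with $G\simeq H$. The first statement of Theorem~\ref{thm:GsimH=>rel=rel=>ominus-sim_NEW} gives $\rel_G=\rel_H$ directly; the hypothesis on $\mathcal C$ plays no role here.

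For the converse, let $G,H\in\mathcal C$ with $\rel_G=\rel_H$. The second statement of Theorem~\ref{thm:GsimH=>rel=rel=>ominus-sim_NEW} yields $\reg(G)\simeq\reg(H)$. The assumption on $\mathcal C$ states that $\reg(G)\simeq\reg(H)$ implies $G\simeq H$, so $G\simeq H$. Combining the two directions, $\mathcal C$ is encoded by $\rel$.

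There is essentially no obstacle, since the substantive work is contained in Theorem~\ref{thm:cluster-iso-of-canG_NEW}, via Theorem~\ref{thm:GsimH=>rel=rel=>ominus-sim_NEW}. The one point to check is that no restriction to a common leaf set is needed. If $\rel_G=\rel_H$, then $G$ and $H$ have the same leaf set $X=\{x\mid (xx,xx)\in\rel_G\}$, so $\reg(G)\simeq\reg(H)$ is a leaf-fixing isomorphism on that common set. Conversely, $\simeq$ presupposes a common leaf set. Hence the statement holds for classes with arbitrary leaf sets.
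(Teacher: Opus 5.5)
Your proposal is correct and is the paper's argument: the forward direction comes from the first part of Theorem~\ref{thm:GsimH=>rel=rel=>ominus-sim_NEW}, and the converse chains $\rel_G=\rel_H \Rightarrow \reg(G)\simeq\reg(H) \Rightarrow G\simeq H$ via the second part and the hypothesis on $\mathcal C$. Your remark that $\rel_G$ determines the leaf set is consistent with the paper's discussion in Section~\ref{sec:lca-rel-basics}; it is an extra check, not a needed step.
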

\begin{proof}
Let $\mathcal C$ be as stated and $G,H\in \mathcal C$. 
If $G\simeq H$, then $\rel_G=\rel_H$ by
Theorem~\ref{thm:GsimH=>rel=rel=>ominus-sim_NEW}. Conversely, assume that
$\rel_G=\rel_H$. Then Theorem~\ref{thm:GsimH=>rel=rel=>ominus-sim_NEW}
implies $\reg(G)\simeq \reg(H)$. By definition of $\mathcal C$, it
follows that $G\simeq H$.
\end{proof}

Note that if $G \simeq \reg(G)$ holds for every DAG $G$ in some class $\mathcal C$,
i.e. no information is lost under 2-regularization, then
for any $G,H\in\mathcal C$ with $\reg(G)\simeq\reg(H)$, we have $G\simeq H$.
Hence Corollary~\ref{cor:encoding-via-prune} immediately implies that 
$\mathcal C$ is encoded by $\rel$. We summarise this in the following:

\begin{corollary}\label{cor:encoding-reg2-fixes}
Let $\mathcal C$ be a class of DAGs such that, for all
$G \in \mathcal C$ we have $\reg(G)\simeq G$.
Then, $\mathcal C$ is encoded by $\rel$.
\end{corollary}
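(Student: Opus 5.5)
The plan is to reduce the statement directly to Corollary~\ref{cor:encoding-via-prune}. That corollary only asks us to check that, for all $G,H\in\mathcal C$, the isomorphism $\reg(G)\simeq\reg(H)$ forces $G\simeq H$. So I fix $G,H\in\mathcal C$ and assume $\reg(G)\simeq\reg(H)$.

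By hypothesis, $G\simeq\reg(G)$ and $H\simeq\reg(H)$. Each of these is a leaf-fixing graph isomorphism, and in particular $G$ and $\reg(G)$ are DAGs on the same leaf set. The same holds for $H$ and $\reg(H)$, and then for $\reg(G)$ and $\reg(H)$.

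The only step that needs any care is that $\simeq$ is an equivalence relation on DAGs with a common leaf set. This is routine. The inverse of a bijective leaf-fixing isomorphism is again a leaf-fixing isomorphism. The composition of two leaf-fixing isomorphisms preserves arcs in both directions and fixes every $x\in X$. Composing the three isomorphisms gives
\[
G\simeq\reg(G)\simeq\reg(H)\simeq H,
\]
so $G\simeq H$.

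Corollary~\ref{cor:encoding-via-prune} then yields that $\mathcal C$ is encoded by $\rel$. The direction $G\simeq H\Rightarrow\rel_G=\rel_H$ is already supplied by Theorem~\ref{thm:GsimH=>rel=rel=>ominus-sim_NEW} inside that corollary, so nothing further is needed.
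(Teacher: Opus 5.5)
Your proposal is correct and is the paper's argument. The paper also notes that $G\simeq\reg(G)\simeq\reg(H)\simeq H$ for $G,H\in\mathcal C$ and then invokes Corollary~\ref{cor:encoding-via-prune}; your check that $\simeq$ is symmetric and transitive on DAGs with a common leaf set spells out what the paper leaves implicit.
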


We now exploit this last result to
show that some well-known classes of DAGs and network 
are encoded by their LCA relation.
By Lemma~\ref{lem:scfree-2lcarel}, for all 
2-lca-relevant and shortcut-free DAGs $G$
it holds that $G=\reg(G)$. This together with 
Corollary~\ref{cor:encoding-reg2-fixes} implies:

\begin{theorem}\label{thm:scfree-2lcarel=>encoded_NEW}  
    The class $\mathcal{C}$ of shortcut-free and 2-lca-relevant DAGs is 
    encoded by $\rel$.     
\end{theorem}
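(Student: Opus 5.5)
The plan is to apply Corollary~\ref{cor:encoding-reg2-fixes} directly, so the only thing to verify is its hypothesis: $\reg(G)\simeq G$ for every $G\in\mathcal C$. Let $G$ be a shortcut-free, 2-lca-relevant DAG on $X$. By Statement~(2) of Lemma~\ref{lem:scfree-2lcarel}, $G=\reg(G)$ holds as an actual equality of DAGs on $X$. The identity map on $V(G)$ is then a graph isomorphism $G\to\reg(G)$ that fixes every leaf in $X$, so $G\simeq\reg(G)$.

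With this hypothesis in place, Corollary~\ref{cor:encoding-reg2-fixes} yields that $\mathcal C$ is encoded by $\rel$. If one prefers to argue without the corollary, the argument runs as follows. For $G,H\in\mathcal C$, the implication $G\simeq H\Rightarrow \rel_G=\rel_H$ is the first part of Theorem~\ref{thm:GsimH=>rel=rel=>ominus-sim_NEW}. Conversely, if $\rel_G=\rel_H$, then the second part of that theorem gives $\reg(G)\simeq\reg(H)$. Combining this with $G=\reg(G)$ and $H=\reg(H)$ gives $G\simeq H$, since $\simeq$ composes: leaf-fixing isomorphisms compose to leaf-fixing isomorphisms.

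There is no real obstacle, since all the substantive work sits in Theorem~\ref{thm:cluster-iso-of-canG_NEW} and Lemma~\ref{lem:scfree-2lcarel}. The one point worth checking is that ``encoded'' does not presuppose a common leaf set. This causes no trouble, because $\rel_G=\rel_H$ forces $G$ and $H$ to have the same leaf set, namely $X=\{x\mid (xx,xx)\in\rel_G\}$, and Theorem~\ref{thm:GsimH=>rel=rel=>ominus-sim_NEW} is already stated for arbitrary DAGs.
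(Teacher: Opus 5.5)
Your proposal is correct and matches the paper's argument. The paper likewise uses Statement~(2) of Lemma~\ref{lem:scfree-2lcarel} to get $G=\reg(G)$ and then applies Corollary~\ref{cor:encoding-reg2-fixes}; your unpacked version via Theorem~\ref{thm:GsimH=>rel=rel=>ominus-sim_NEW} just writes out the proof of that corollary, and your leaf-set remark agrees with the paper's observation that $X$ is recoverable from $\rel_G$.
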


We now consider some well-known networks.

\begin{corollary}\label{cor:treeLvl1-relG}
Let $\mathcal{C}$ be either of the following classes:
    \begin{itemize}
        \item The class of phylogenetic trees. 
        \item  The class of regular level-1 networks (or, equivalently, 
        the class of strong-phylogenetic and shortcut-free level-1 networks).
    \end{itemize}    
    Then every $N\in \mathcal{C}$ is 2-lca-relevant and $N=\reg(N)$. 
    In particular,  $\mathcal{C}$ is encoded by $\rel$.
\end{corollary}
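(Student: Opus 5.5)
By Lemma~\ref{lem:scfree-2lcarel}(2) and Theorem~\ref{thm:scfree-2lcarel=>encoded_NEW}, it suffices to show that every $N\in\mathcal C$ is shortcut-free and 2-lca-relevant. Then $N=\reg(N)$, and encoding by $\rel$ follows from Corollary~\ref{cor:encoding-reg2-fixes}.

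\emph{Phylogenetic trees.} A tree has no hybrids. Any arc $(u,w)$ together with a second directed $uw$-path would force some vertex on that path to have two parents, so trees are shortcut-free. For 2-lca-relevance, note that in a tree every pair of leaves has a unique LCA, because the ancestors of a vertex form a chain. A leaf $x$ is $\lca_N(xx)$. An inner vertex $v$ of a phylogenetic tree has $\outdeg(v)\ge 2$: a vertex with out-degree one has in-degree at most one in a tree, and is therefore excluded. Choose two distinct children $c_1,c_2$ of $v$ and leaves $x\preceq c_1$, $y\preceq c_2$. Every common ancestor of $x$ and $y$ is comparable with $v$, since the ancestors of $x$ form a chain. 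No proper descendant of $v$ is an ancestor of both $x$ and $y$, because the descendant sets of $c_1$ and $c_2$ are disjoint in a tree. Hence $v=\lca_N(xy)$.

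\emph{Regular level-1 networks.} A regular network is isomorphic to $\Hasse(\mathfrak C(N))$, which is shortcut-free by definition. Regularity also gives that $v\mapsto \CC_N(v)$ is injective and order-preserving in both directions. For 2-lca-relevance, the plan is to rely on the cluster characterization: by \cite[Thm~8]{Hellmuth2023}, $\mathfrak C(N)$ is closed and satisfies (L). Let $v$ be a non-leaf vertex and $C=\CC_N(v)$. I want leaves $x,y\in C$ such that $C$ is the unique inclusion-minimal cluster containing $\{x,y\}$. Regularity then makes $v$ the unique minimal common ancestor of $x$ and $y$, since common ancestors correspond exactly to clusters containing $\{x,y\}$.

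\textbf{Main obstacle: finding $x,y$.} The maximal proper subclusters $C_1,\dots,C_k$ of $C$ correspond to the children of $v$. Since $N$ is a Hasse diagram with no vertex of out-degree one, $k\ge 2$; I expect strong-phylogeneticity to come from the cited equivalence, or more directly, a single maximal subcluster would have to equal $C$. The union of the $C_i$ is $C$, because each leaf below $v$ lies below some child. I first try to find $i\neq j$ and $x\in C_i\setminus C_j$, $y\in C_j\setminus C_i$ such that no cluster $D\subsetneq C$ contains both $x$ and $y$.
\begin{itemize}
\item If such a $D$ existed, it would lie in some maximal $C_l$. If $C_i\cap C_j=\emptyset$ the pair can be chosen disjointly.
\item Otherwise $C_i$ and $C_j$ overlap. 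Property (L) then forces all overlaps with $C_i$ to share the same intersection. Closedness together with the level-1 structure, namely at most one hybrid per block, gives the usual picture: the children of $v$ are arranged so that at most two of them overlap, and they meet in the unique hybrid's cluster.
\end{itemize}
Choosing $x$ and $y$ in the parts of $C_i$ and $C_j$ away from that common intersection should work. If the cluster argument becomes messy, I would instead cite the known result that regular (equivalently lca-relevant) level-1 networks are 2-lca-relevant, as in \cite{HL:24,Hellmuth2023}.
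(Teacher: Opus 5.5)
Your reduction is exactly the paper's: show that every $N\in\mathcal C$ is shortcut-free and 2-lca-relevant, then apply Lemma~\ref{lem:scfree-2lcarel} and Corollary~\ref{cor:encoding-reg2-fixes}.

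\textbf{Trees.} Here you take a different, self-contained route. The paper only cites \cite[Cor~7.10]{HL:24} for 2-lca-relevance. Your argument (ancestors of a leaf form a chain, an inner vertex has two children, and the descendant sets of those children are disjoint) proves it directly, and it is correct.

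\textbf{Regular level-1 networks.} The paper also just cites here: \cite[Prop~15]{Hellmuth2023} for the parenthetical equivalence and \cite[Prop~20]{Hellmuth2023} for 2-lca-relevance. Your fallback is therefore the paper's proof. You only prove that regular implies strong-phylogenetic and shortcut-free, not the converse. So you also need Prop~15 for the ``equivalently'' claim.

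Your direct cluster argument does not close as written, for three reasons.
(i) Taking $x\in C_i$, $y\in C_j$ with $C_i\cap C_j=\emptyset$ does not settle anything by itself. You still have to exclude a third maximal subcluster $C_l$ that meets both and contains $\{x,y\}$, and that is exactly the point at issue.
(ii) The claimed picture that at most two children of $v$ have overlapping clusters is false for non-binary level-1 networks. Take a root $v$ with children $c_1,c_2,c_3$. Give each $c_i$ a leaf child $x_i$ and an arc to a common hybrid $h$, and give $h$ leaf children $y,z$. This network is regular and level-1, yet the three clusters $\{x_i,y,z\}$ pairwise overlap.
(iii) Showing that no cluster $D\subsetneq C$ contains $\{x,y\}$ only makes $C$ \emph{an} inclusion-minimal cluster containing $\{x,y\}$. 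Uniqueness needs closedness. If $D'\supseteq\{x,y\}$ is any cluster, then $D'\cap C$ is a cluster inside $C$ containing $\{x,y\}$. Hence $D'\cap C=C$, i.e.\ $C\subseteq D'$.

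All three points are fixed by using (L) directly, with no appeal to block structure. Let $C_i,C_j$ be any two distinct maximal proper subclusters of $C$. They are incomparable, so both differences are nonempty; take $x\in C_i\setminus C_j$ and $y\in C_j\setminus C_i$. Neither $C_i$ nor $C_j$ contains both leaves. A third $C_l\supseteq\{x,y\}$ would overlap both $C_i$ and $C_j$. Then (L) gives $C_l\cap C_i=C_l\cap C_j$, which is contradicted by $x\in (C_l\cap C_i)\setminus C_j$.

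Combined with (iii) and regularity (ancestor order equals cluster inclusion), this gives $v=\lca_N(xy)$. With this repair your argument is a self-contained alternative to \cite[Prop~20]{Hellmuth2023}, resting only on \cite[Thm~8]{Hellmuth2023}. Without it, the level-1 case rests entirely on your fallback citation.
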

\begin{proof}
If $\mathcal C$ is the class of phylogenetic trees, 
then every $N\in \mathcal C$ is shortcut-free and,
by \cite[Cor~7.10]{HL:24}),  2-lca-relevant.

Now suppose that $\mathcal{C}$ is the class of regular level-1 networks. 
By \cite[Prop~15]{Hellmuth2023},  $\mathcal{C}$ 
coincides with the class $\mathcal{C}'$
 of shortcut-free level-1 networks
such that no network in $\mathcal{C}'$ contains vertices with 
out-degree one,  which is precisely the class of 
strong-phylogenetic and shortcut-free level-1 networks.
This together with \cite[Prop~20]{Hellmuth2023} implies that 
any $N$ in $\mathcal{C}$ is 2-lca-relevant.

Theorem~\ref{thm:scfree-2lcarel=>encoded_NEW} 
and Lemma~\ref{lem:scfree-2lcarel} now imply that 
$\mathcal{C}$ is encoded by $\rel$ and $N=\reg(N)$ for all $N\in \mathcal C$,  
for either choice of $\mathcal C$.
\end{proof}

We next apply our encoding approach to networks defined by certain cluster-theoretic conditions. 
For regular networks, binary clustering systems translate directly into
LCA information: they force all pairwise LCAs to be well-defined and ensure
that every vertex occurs as the LCA of one or two leaves.
The same conclusion applies to regular
networks whose clustering systems are closed weak hierarchies, as shown next.

\begin{theorem} \label{thm:regular-binary-is-2lca}
Let $N$ be a regular network such that 
$\mathfrak C_N$ is binary or such that $\mathfrak C_N$ is a closed weak hierarchy. Then, 
$N$ is 2-lca-relevant and has the 2-lca-property 
\end{theorem}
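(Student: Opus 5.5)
The plan is to exploit regularity, which identifies $N$ with $\Hasse(\mathfrak C_N,\subseteq)$ via $v\mapsto \CC_N(v)$. In particular $u\preceq_N v$ holds if and only if $\CC_N(u)\subseteq\CC_N(v)$, and distinct vertices have distinct clusters. So for leaves $x,y$, the common ancestors of $x,y$ are exactly the vertices whose cluster contains $\{x,y\}$. Their $\preceq_N$-minimal elements are exactly the vertices whose clusters are inclusion-minimal among clusters containing $\{x,y\}$. Hence $\lca_N(xy)$ is well-defined precisely when there is a unique inclusion-minimal cluster containing $\{x,y\}$, and in that case $\lca_N(xy)$ is the vertex with that cluster. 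The first step is to record this translation, which reduces both claimed properties to statements about $\mathfrak C_N$:
\begin{itemize}
\item the 2-lca-property says that every pair $\{x,y\}$ lies in a unique inclusion-minimal cluster;
\item 2-lca-relevance says that every $C\in\mathfrak C_N$ is the unique inclusion-minimal cluster containing some $\{x,y\}$ with $x,y\in C$.
\end{itemize}

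For binary $\mathfrak C_N$, these are verbatim conditions (1) and (2) in the definition of binary clustering systems, so this case is immediate once the translation is in place. The case $x=y$ is covered, since $\{x\}$ is a cluster.

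For a closed weak hierarchy, I would show that $\mathfrak C_N$ is binary and then reduce to the previous case. For condition (1), fix $x,y$. Since $X\in\mathfrak C_N$, some cluster contains $\{x,y\}$. If $C,C'$ are two inclusion-minimal ones, then $C\cap C'\supseteq\{x,y\}$ is nonempty, hence a cluster by closedness, so minimality forces $C=C\cap C'=C'$. For condition (2), I would use the known fact that a closed clustering system is a weak hierarchy exactly when every cluster $C$ equals the intersection of all clusters containing some pair from $C$. More precisely, for weak hierarchies each $C$ is of the form $\bigcap\{C'\in\mathfrak C_N\mid x,y\in C'\}$ for suitable $x,y\in C$; this is Bandelt--Dress \cite{Bandelt:89}. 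Closedness then makes that intersection the unique minimal cluster containing $\{x,y\}$, which is $C$.

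The main obstacle is condition (2) for weak hierarchies if I have to prove it directly rather than cite it. To do so, I would take $C$ with $|C|\ge2$ and suppose that for every pair $x,y\in C$ the minimal cluster $C_{xy}$ containing $\{x,y\}$ is a proper subset of $C$. Choosing $x,y$ with $C_{xy}$ inclusion-maximal among these, I would pick $z\in C\setminus C_{xy}$ and apply the weak-hierarchy condition to $C_{xy},C_{xz},C_{yz}$. The triple intersection is $C_{xy}\cap C_{xz}\cap C_{yz}$. It is nonempty only in special cases, but the pairwise intersections contain $x$, $y$ and $z$ respectively, so the condition forces one pairwise intersection to be contained in the third set. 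This should lead to a chain whose union yields a cluster containing $\{x,y,z\}$ inside $C$, contradicting the choice of $x,y$. I would try to make this chain argument work, and cite \cite{Bandelt:89} if it gets messy.
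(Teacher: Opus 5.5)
Your proof is correct. It uses the same decomposition as the paper: a binary $\mathfrak C_N$ yields both properties, and a closed weak hierarchy is binary. The paper, however, proves nothing directly here. It cites \cite[Thm~3.7]{HL:26} for the binary case and \cite[Prop.~3.1]{Barthelemy:2008} for the implication from closed weak hierarchies to binary systems, whereas you prove both facts yourself.

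Your regularity translation (ancestor order equals cluster inclusion, and the cluster map is injective) turns the binary case into a verbatim restatement of conditions (1) and (2). It actually gives more: for regular networks, $\mathfrak C_N$ is binary if and only if $N$ is 2-lca-relevant and has the 2-lca-property. It also shows that, for the 2-lca-property, closedness alone suffices. The paper's version is shorter; yours is self-contained and exposes the equivalence.

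There are two small points to fix.

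\textbf{The cited equivalence is overstated.} The ``exactly when'' you attribute to Bandelt--Dress is false as an equivalence. Take
\[
\{\{1\},\{2\},\{3\},\{4\},\{1,2\},\{1,3\},\{2,3\},\{1,2,3,4\}\}.
\]
This system is closed, and every cluster is the hull of a pair from it (for instance $X$ is the hull of $\{1,4\}$). Yet it is not a weak hierarchy, as the triple $\{1,2\},\{1,3\},\{2,3\}$ shows. You only use the forward direction, which is true, so the proof is unaffected.

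\textbf{Your direct argument for (2) closes in one step.} No chain or union is needed. Also, the triple intersection is never empty: it equals a pairwise intersection, and each of those contains one of $x,y,z$. The three cases go as follows.
\begin{itemize}
\item If it equals $C_{xy}\cap C_{xz}$, then $x\in C_{yz}$. So $C_{yz}$ contains $\{x,y\}$, hence contains $C_{xy}$ by (1), and it also contains $z$. This gives $C_{xy}\subsetneq C_{yz}\subseteq C$, contradicting the maximality of $C_{xy}$.
\item If it equals $C_{xy}\cap C_{yz}$, the same argument works with $C_{xz}$ in place of $C_{yz}$.
\item If it equals $C_{xz}\cap C_{yz}$, then $z\in C_{xy}$, contradicting the choice of $z$.
\end{itemize}
With this written out, you do not need the citation at all.
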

\begin{proof}
    If $N$ is a regular network such that $\mathfrak C_N$ is binary, then 
    \cite[Thm~3.7]{HL:26} implies that $N$ is 2-lca-relevant and has the 2-lca-property. 
    If $N$ is a regular network such that
    $\mathfrak C_N$ is a closed weak hierarchy, then 
    \cite[Prop.~3.1]{Barthelemy:2008} implies that $\mathfrak C_N$ is
    a binary clustering system.  
    Now the assertion follows from the arguments used for regular networks with binary clustering systems. 
\end{proof}

Combining this result with the fact that regular networks are shortcut-free gives
the following  encoding result.

\begin{corollary}\label{cor:regular-binary-encoded}
Let $\mathcal{C}$ be either of the following classes:
    \begin{itemize}
        \item The class of regular networks with binary clustering systems. 
        \item The class of regular networks whose clustering systems are closed weak hierarchies.
    \end{itemize}    
Then for all $N\in \mathcal C$ we have $N=\reg(N)$
and $\mathcal{C}$ is encoded by $\rel$.
\end{corollary}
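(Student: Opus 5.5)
The plan is to reduce the statement to Theorem~\ref{thm:scfree-2lcarel=>encoded_NEW} together with Lemma~\ref{lem:scfree-2lcarel}. For this it suffices to show that every network $N$ in either class is both shortcut-free and 2-lca-relevant. Once both properties hold, Statement (2) of Lemma~\ref{lem:scfree-2lcarel} gives $N=\reg(N)$. Encoding then follows from Corollary~\ref{cor:encoding-reg2-fixes}, or directly from Theorem~\ref{thm:scfree-2lcarel=>encoded_NEW} together with Observation~\ref{obs:subclass-inherits-encoding}, since each class is then a subclass of the class of shortcut-free 2-lca-relevant DAGs.

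2-lca-relevance is supplied by Theorem~\ref{thm:regular-binary-is-2lca}. That theorem states that any regular network whose clustering system is binary, or a closed weak hierarchy, is 2-lca-relevant (and also has the 2-lca-property, which is not needed here). So in both cases $\notlcaV(N)=\emptyset$.

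It remains to check that a regular network $N$ is shortcut-free, and this is the only step that needs an argument. By Definition~\ref{def:regular-N}, the map $v\mapsto \CC_N(v)$ is a graph isomorphism $N\approx\Hasse(\mathfrak C(N),\subseteq)$. A Hasse diagram is shortcut-free by construction, as noted where Hasse diagrams are introduced. Graph isomorphisms map arcs to arcs and directed paths to directed paths. Hence a shortcut $(u,w)$ in $N$, together with its bypassing $uw$-path, would be mapped to a shortcut in the Hasse diagram, which is impossible. Therefore $N$ is shortcut-free, so $N=N^-$.

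Combining these steps, for every $N\in\mathcal C$ we have $\notlcaV(N)=\emptyset$ and $N^-=N$. Thus $\reg(N)=(N\ominus\emptyset)^-=N^-=N$, and encoding by $\rel$ follows as described above. I do not expect any real obstacle: the substantive content (binary or closed-weak-hierarchy clusters implying 2-lca-relevance) has already been established in Theorem~\ref{thm:regular-binary-is-2lca}. The only point to state carefully is that shortcut-freeness is invariant under $\approx$, even though $\approx$ does not fix leaves.
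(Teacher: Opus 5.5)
Your proposal is correct and matches the paper's proof: 2-lca-relevance comes from Theorem~\ref{thm:regular-binary-is-2lca}, regular networks are shortcut-free, and then Lemma~\ref{lem:scfree-2lcarel}(2) together with Corollary~\ref{cor:encoding-reg2-fixes} gives $N=\reg(N)$ and encoding by $\rel$. The only difference is that you justify shortcut-freeness of regular networks via the graph isomorphism with the shortcut-free Hasse diagram, whereas the paper simply asserts it.
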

\begin{proof}
Let $\mathcal{C}$ be the class of regular networks with binary clustering systems 
or the the class of regular networks whose clustering systems are closed weak hierarchies.
By Theorem~\ref{thm:regular-binary-is-2lca}, 
every network in
$\mathcal C$ is 2-lca-relevant and has the 2-lca-property. Moreover, regular networks are
shortcut-free. Lemma~\ref{lem:scfree-2lcarel} implies that $N=\reg(N)$ for all
$N\in \mathcal{C}$ which together with Corollary~\ref{cor:encoding-reg2-fixes}
implies that $\mathcal{C}$ is encoded by $\rel$.
\end{proof}

We conclude this section by noting that, by
Proposition~\ref{prop:some-properties}, $\cG_{\rel_G}$ can be constructed
in polynomial time from $\rel_G$. Moreover, Theorem~\ref{thm:cluster-iso-of-canG_NEW}
implies that $\reg(G) \simeq \cG_{\rel_G}$ for all DAGs $G$. 
This together with the latter results showing when $G=\reg(G)$ holds immediately implies
the following result.

\begin{theorem}\label{thm:poly-from-rel}
Any DAG, resp., network $G$ within the classes considered in
Theorem~\ref{thm:scfree-2lcarel=>encoded_NEW}, Corollary~\ref{cor:treeLvl1-relG} and
\ref{cor:regular-binary-encoded} can be uniquely, up to isomorphism, reconstructed from $\rel_G$ in
polynomial time. 
\end{theorem}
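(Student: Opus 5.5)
The plan is to combine the isomorphism $\reg(G)\simeq\cG_{\rel_G}$ with the fact that every class named in the statement consists of DAGs that are fixed by 2-regularization. Fix $G$ in one of these classes. Recall what was shown for each class:
\begin{itemize}
\item shortcut-free 2-lca-relevant DAGs satisfy $G=\reg(G)$ by Lemma~\ref{lem:scfree-2lcarel}(2);
\item phylogenetic trees and regular level-1 networks satisfy $N=\reg(N)$ by Corollary~\ref{cor:treeLvl1-relG};
\item regular networks whose clustering systems are binary or closed weak hierarchies satisfy $N=\reg(N)$ by Corollary~\ref{cor:regular-binary-encoded}.
\end{itemize}
Hence in every case $G=\reg(G)$.

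Next, Theorem~\ref{thm:cluster-iso-of-canG_NEW} gives $\reg(G)\simeq\cG_{\rel_G}$, so $G\simeq\cG_{\rel_G}$. The reconstruction algorithm therefore takes $\rel_G$ as input and outputs $\cG_{\rel_G}$, following Definition~\ref{def:qset}:
\begin{enumerate}
\item Read off the leaf set as $X=\{x\mid (xx,xx)\in\rel_G\}$.
\item Form the support of $\rel_G$ and its classes under $\doteqdot$.
\item Build the Hasse diagram of the induced order $\sqsubseteq$.
\item Relabel each class $[aa]$ by $a$.
\end{enumerate}
By Proposition~\ref{prop:some-properties}, this takes time polynomial in $|X|$. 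The output is isomorphic to $G$ via an isomorphism fixing every leaf, which is exactly reconstruction up to $\simeq$.

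For uniqueness, suppose $H$ lies in the same class and $\rel_H=\rel_G$. Then $\cG_{\rel_H}=\cG_{\rel_G}$, so $H\simeq G$. This is also the encoding statement already proved for each class, via Corollary~\ref{cor:encoding-reg2-fixes}. So the object produced from $\rel_G$ does not depend on which member of the isomorphism class produced the relation.

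The only point needing care is the measure of input size. $\rel_G$ lives on $\pairs(X)\times\pairs(X)$, so $|\rel_G|=O(|X|^4)$, and "polynomial in $|X|$" is the same as polynomial in the input size. Beyond that I expect no real obstacle: the argument is a direct assembly of Theorem~\ref{thm:cluster-iso-of-canG_NEW}, Proposition~\ref{prop:some-properties}, and the identities $G=\reg(G)$ recorded for each class.
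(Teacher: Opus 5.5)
Your proposal is correct and follows essentially the same route as the paper. The paper likewise combines $G=\reg(G)$ for each class (Lemma~\ref{lem:scfree-2lcarel}, Corollaries~\ref{cor:treeLvl1-relG} and~\ref{cor:regular-binary-encoded}) with $\reg(G)\simeq\cG_{\rel_G}$ from Theorem~\ref{thm:cluster-iso-of-canG_NEW} and the polynomial-time construction of $\cG_{\rel_G}$ from Proposition~\ref{prop:some-properties}.
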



\section{Normal networks}
\label{sec:normal}

In view of Corollary~\ref{cor:encoding-reg2-fixes}, the obstruction
to a DAG $G$ being encoded by $\rel_G$ can be considered to be 
the information that is lost when
passing from $G$ to $\reg(G)$. Hence, to prove that a class $\mathcal C$ is
encoded by $\rel$, it suffices to show either that no information is lost,
that is, $G=\reg(G)$, or that the lost information
can be recovered from $\reg(G)$ within the class
for all $G\in \mathcal C$. In this section, we consider an
example of the latter case. In particular, we show that
certain subclasses of the  class of \emph{normal} networks
are encoded by $\rel$, which also allows us to show 
that some other subclasses of level-1 networks are also encoded by $\rel$.
Note that it was recently shown
that binary normal networks are encoded 
by so-called caterpillar trees on three or four leaves \cite{linz2020caterpillars},
and that a restricted class of binary normal networks is encoded \cite{francis:25b}
by their triplets.
 Our approach can therefore be viewed as a complementary way of encoding such networks.
In contrast to these earlier results, however, we do not need to restrict normal networks to be binary, although we also derive consequences for this important special case later in this section.

\begin{figure}[t]
  \centering
  \includegraphics[width=0.8\textwidth]{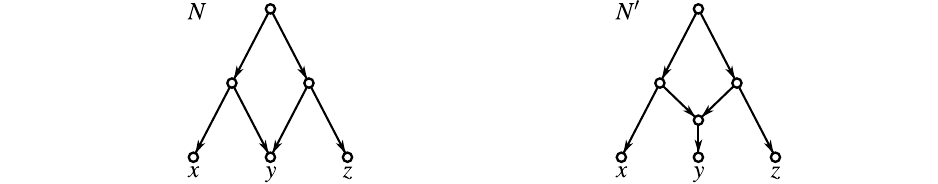}
  \caption{Two non-isomorphic normal networks $N$ and $N'$ on
				  $X=\{x,y,z\}$ with $\rel_N=\rel_{N'}$. Thus, the class of 
                  normal networks is not encoded by $\rel$.
                  The network $N$ is strong-phylogenetic, 2-lca-relevant and normal
                  while $N'$ is separated and normal but not 2-lca-relevant.
                  Both $N$ and $N'$ have the 2-lca-property and it holds that $N=\reg(N)=\reg(N')\neq N'$.
                  }
  \label{fig:normal}
\end{figure}

The class of normal networks  is, in general, not encoded by $\rel$; see
Figure~\ref{fig:normal}. The obstruction is caused by vertices that do not
contribute to any pairwise LCA. In particular, certain vertices of
out-degree one can be removed by the $\ominus$-operator without changing the
underlying LCA-constraints. Thus, in order to obtain positive encoding
results for normal networks, one has to control how such vertices may
occur.
There are two natural ways to do this. One possibility is to exclude
vertices of out-degree one altogether, which leads to the class of strong-phylogenetic
normal networks, also known as \emph{strongly normal} networks \cite{HLM:26-RegNormArxiv}. Another common convention in the phylogenetic
literature is to require hybrid vertices to have a unique child; such
networks are called \emph{separated}. As we shall see, both of these
classes of normal networks are encoded by $\rel$.

We begin with strong-phylogenetic normal networks.

\begin{theorem}\label{thm:str-normal=>iso-to-canonical_NEW}
The class $\mathcal{C}$ of strong-phylogenetic  normal networks 
is encoded by $\rel$ and,  all $N\in \mathcal{C}$ 
are 2-lca-relevant and it holds that $\reg(N)=N$.
Moreover, all $N\in \mathcal{C}$ 
can be uniquely, up to isomorphism, reconstructed from $\rel_N$, 
in polynomial time. 
\end{theorem}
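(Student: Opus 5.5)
The plan is to reduce everything to Lemma~\ref{lem:scfree-2lcarel} and Corollary~\ref{cor:encoding-reg2-fixes}. Normal networks are shortcut-free by definition. Hence the only real content is to show that every strong-phylogenetic normal network $N$ on $X$ is 2-lca-relevant. Once this is established, Lemma~\ref{lem:scfree-2lcarel}(2) gives $N=\reg(N)$, and Corollary~\ref{cor:encoding-reg2-fixes} gives that $\mathcal C$ is encoded by $\rel$. The reconstruction statement then follows exactly as in Theorem~\ref{thm:poly-from-rel}: by Theorem~\ref{thm:cluster-iso-of-canG_NEW} we have $N=\reg(N)\simeq\cG_{\rel_N}$, and $\cG_{\rel_N}$ can be built from $\rel_N$ in polynomial time in $|X|$ by Proposition~\ref{prop:some-properties}.

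To prove 2-lca-relevance, fix $v\in V(N)\setminus X$; leaves are trivially $\lca_N(xx)$. Since $N$ is strong-phylogenetic, $\outdeg_N(v)\ge 2$, so $v$ has two distinct children $u_1,u_2$. I would use the tree-child property to choose, for each child $u$ of $v$, a leaf $x_u$ reachable from $u$ along a \emph{tree path}, i.e. a path in which every vertex after $u$ has in-degree one. Starting from $u$, repeatedly step to a child of in-degree one, which exists because $N$ is tree-child. The key fact is that any ancestor of $x_u$ either lies on this tree path or is an ancestor of $u$.

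The care goes into the first step from $v$. I would take $u_1$ to be a child of $v$ with in-degree one (tree-child), and $u_2$ any other child. If $u_2$ is a hybrid, I cannot ensure that all ancestors of $x_{u_2}$ pass through $v$. So I let $x=x_{u_1}$ and $y=x_{u_2}$, where the path from $u_2$ continues through in-degree-one vertices, and I claim $v=\lca_N(xy)$.

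Suppose $w$ is a common ancestor of $x$ and $y$. Every ancestor of $x$ is either on the tree path $v\to u_1\to\dots\to x$ or an ancestor of $v$, because all vertices on that path below $v$ have in-degree one. Thus $w$ is either a vertex strictly below $v$ on the path to $x$, or satisfies $v\preceq_N w$. In the first case, $w$ is a descendant of $u_1$ that is an ancestor of $y$. I must rule this out: the path $u_1\rightsquigarrow w\rightsquigarrow y$ would reach the tree path below $u_2$, so it must enter at $u_2$ itself. If $u_2$ had in-degree one, its only parent is $v$, which is impossible since $w\prec v$ and $N$ is acyclic. If $u_2$ is a hybrid, then $v\to u_2$ together with $v\to u_1\rightsquigarrow w\rightsquigarrow u_2$ makes $(v,u_2)$ a shortcut, contradicting normality. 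Hence every common ancestor $w$ satisfies $v\preceq_N w$, so $v$ is the unique LCA.

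The main obstacle is this last case analysis, namely correctly handling the hybrid child $u_2$ and the interaction of tree paths. Shortcut-freeness is exactly what rules out the bad configuration, so the details there must be checked carefully.
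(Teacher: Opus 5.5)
Your proposal is correct, and its overall structure matches the paper's. Normality gives shortcut-freeness, 2-lca-relevance then yields $N=\reg(N)$ by Lemma~\ref{lem:scfree-2lcarel}, and encoding and polynomial-time reconstruction follow from Theorem~\ref{thm:scfree-2lcarel=>encoded_NEW} (itself a consequence of Corollary~\ref{cor:encoding-reg2-fixes}) and Theorem~\ref{thm:poly-from-rel}.

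The only real difference is the key step. The paper does not prove 2-lca-relevance; it imports it from \cite[Prop~4.5]{HLM:26-RegNormArxiv}. You prove it directly with tree paths, and the argument is sound. Two small remarks on it:
\begin{itemize}
\item You should state explicitly that $u_1$ cannot lie on the tree path strictly below $u_2$. Its unique parent is $v$, and $v\not\preceq_N u_2$. This is what forces $u_2\prec_N u_1$ once $y\preceq_N w\preceq_N u_1$.
\item After that, the case split on $\indeg_N(u_2)$ is unnecessary. In either case the path $v\to u_1\rightsquigarrow u_2$ together with the arc $(v,u_2)$ is a shortcut.
\end{itemize}

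The citation buys brevity. Your route gives a self-contained proof that shows where each hypothesis enters: strong-phylogenetic for a second child, tree-child for the tree paths, and shortcut-freeness to exclude $u_2\prec_N u_1$. It also produces explicit witnesses $x,y$ with $v=\lca_N(xy)$, and it never uses that $N$ has a unique root.
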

\begin{proof}
Let $N$ be strong-phylogenetic and normal. By \cite[Prop~4.5]{HLM:26-RegNormArxiv},
$N$ is 2-lca-relevant. Since $N$ is normal, it is shortcut-free.
The assertion follows now from Lemma~\ref{lem:scfree-2lcarel}, Theorem~\ref{thm:scfree-2lcarel=>encoded_NEW}
and \ref{thm:poly-from-rel}.
\end{proof} 

The preceding theorem covers normal networks in which no
vertices of out-degree one are present, which
can be conveniently dealt with by observing that for
such networks $N$, $\reg(N)=N$ holds. However, for the 
class of separated normal
networks this is no longer the case; see Figure~\ref{fig:normal}.
However, as we shall now show, the lost information
can be recovered from $\reg(N)$ in case $N$ is a 
separated phylogenetic normal network.

We begin with some useful observations concerning such networks.
First, we note that for phylogenetic normal networks,
the vertices removed by 2-regularization are precisely the hybrid vertices
with a unique child. We shall also use the following facts.

\begin{lemma}[{\cite[L~4.6, Prop~4.7, Cor~4.8, 4.9]{HLM:26-RegNormArxiv}}]\label{lem:ominus-outdeg1-normal}
    Let $N$ be a normal network on $X$. 
    Then, $N\ominus \notlcaV(N)$ is strong-phylogenetic and normal.
    In particular, $\notlcaV(N)=\{v\in V(N)\mid \outdeg_N(v)=1\}$
    and
    $\reg(N) = N\ominus \notlcaV(N)$. If $N$ is, in addition, 
    phylogenetic, then $\notlcaV(N)$ is precisely the set of hybrid vertices in $N$ that have a unique child.
\end{lemma}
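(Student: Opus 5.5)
The plan is to first characterize $\notlcaV(N)$ for a normal network $N$ on $X$, then remove these vertices one at a time with the $\ominus$-operator while checking that normality is preserved at each step. Two facts about tree-child networks will be used throughout. First, from every vertex $v$ there is a \emph{tree path} to some leaf, obtained by repeatedly moving to a child of in-degree one. Second, if $P$ is a tree path from $v$ to a leaf $x$, then every ancestor of $x$ is either a vertex of $P$ or an ancestor of $v$. The second fact holds because every vertex of $P$ other than $v$ has in-degree one, so any path ending at $x$ must enter $P$ at its top vertex $v$ or start on $P$.

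\emph{Step 1: $\notlcaV(N)=\{v\mid \outdeg_N(v)=1\}$.}
\begin{itemize}
\item If $\outdeg_N(v)=1$ with unique child $c$, then every common ancestor relation of leaves below $v$ passes through $c$. So for any $x,y\preceq_N v$, the vertex $c\prec_N v$ is a common ancestor of $x$ and $y$, and $v$ is not an LCA of any pair. Hence $v\in\notlcaV(N)$.
\item Leaves satisfy $x=\lca_N(xx)$.
\item Let $\outdeg_N(v)\ge 2$. By tree-child, $v$ has a child $u_1$ with $\indeg_N(u_1)=1$. Let $u_2\neq u_1$ be another child. Take a tree path $P_1$ from $v$ through $u_1$ to a leaf $x$, and a tree path $P_2$ from $u_2$ to a leaf $y$.
\item Let $w$ be a common ancestor of $x$ and $y$. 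By the second fact, $w$ lies on $P_1$ or is an ancestor of $v$. Suppose $w\prec_N v$ lies on $P_1$. Being an ancestor of $y$, $w$ must be an ancestor of $u_2$ or lie on $P_2$. Either way there is a $v u_2$-path through $u_1$ (or $u_2\in P_1$, which is impossible since $u_1$ heads $P_1$ below $v$ and $u_2\neq u_1$ would then have two parents on $P_1$). That path together with the arc $(v,u_2)$ makes $(v,u_2)$ a shortcut, contradicting normality.
\item Hence every common ancestor of $x$ and $y$ is $\succeq_N v$, so $v=\lca_N(xy)$.
\end{itemize}
I expect this step to be the main obstacle, mainly in handling the case where $u_2$ is a hybrid carefully.

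\emph{Step 2: removal preserves normality and strong phylogeny.} Let $v$ have out-degree one and unique child $c$. By tree-child, $\indeg_N(c)=1$, so $v$ is the only parent of $c$.
\begin{itemize}
\item In $N\ominus v$, each parent $p$ of $v$ gets the arc $(p,c)$ in place of $(p,v)$.
\item This creates no shortcut. Every $pc$-path must pass through the unique parent of $c$, which was $v$ and now is $p$ alone. Also, $(p,c)$ was not already an arc, since otherwise $(p,c)$ would have been a shortcut in $N$.
\item Out-degrees of the remaining vertices are therefore unchanged.
\item Tree-child is preserved. If $v$ was the in-degree-one child witnessing tree-child for $p$, then $c$, now with the single parent $p$, takes over that role.
\end{itemize}
Since $\ominus$ is order-independent, induction on $|\notlcaV(N)|$ gives that $N\ominus\notlcaV(N)$ is normal. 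All out-degree-one vertices are removed and no new ones arise, so it is also strong-phylogenetic. Being shortcut-free, it equals its shortcut-free version, so $\reg(N)=N\ominus\notlcaV(N)$.

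\emph{Step 3: the phylogenetic case.} If $N$ is in addition phylogenetic, no vertex has out-degree one and in-degree at most one. Thus the out-degree-one vertices are exactly the hybrids with a unique child, and Step 1 gives the last claim.
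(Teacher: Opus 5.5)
Your Steps 1 and 3 are sound. In Step 1, the subcase where $w$ lies on $P_2$ strictly below $u_2$ needs one more sentence: then $u_2$ is an ancestor of $x$. By your second fact applied to $P_1$, $u_2$ lies on $P_1$ strictly below $u_1$, and this gives the shortcut $(v,u_2)$.

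The gap is in Step 2, in the claim that $N\ominus v$ has no shortcuts. You argue that every $pc$-path passes through the unique parent of $c$, ``which was $v$ and now is $p$ alone''. That is only true when $\indeg_N(v)=1$. By the definition of $\ominus$, the parents of $c$ in $N\ominus v$ are \emph{all} vertices of $\parent_N(v)$. So when $v$ is a hybrid, $c$ becomes a hybrid of $N\ominus v$. This is not a corner case: if $N$ is phylogenetic, every out-degree-one vertex is a hybrid, so your argument covers none of the vertices actually removed. For a hybrid $v$ with parents $p\neq p'$, nothing you say rules out a $pc$-path in $N\ominus v$ ending with the new arc $(p',c)$, which would make $(p,c)$ a shortcut. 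Your remark that $(p,c)$ was not already an arc is true (it follows directly from $\indeg_N(c)=1$), but it does not address this.

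The repair is to use shortcut-freeness of $N$ at the arc $(p,v)$ rather than at $(p,c)$.
\begin{itemize}
\item A $pc$-path of length at least two in $N\ominus v$ must enter $c$ through some $p'\in\parent_N(v)\setminus\{p\}$. Hence $p'\prec_{N\ominus v}p$, so $p'\prec_N p$ by Lemma~\ref{lem:ominus-basics}.
\item A $pp'$-path in $N$ followed by the arc $(p',v)$ is then a $pv$-path avoiding $(p,v)$. So $(p,v)$ would be a shortcut of $N$, a contradiction.
\item You also need to say why an old arc $(a,b)$ with $a,b\neq v$ cannot become a shortcut. An $ab$-path of length at least two in $N\ominus v$ contains at most one new arc, since all new arcs end in $c$. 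Replacing that arc $(q,c)$ by $q\to v\to c$ gives an $ab$-path of length at least two in $N$.
\end{itemize}
With these two points added, your out-degree and tree-child bookkeeping and the induction go through unchanged, and the proof is complete.
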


We now show that separated phylogenetic normal networks are also encoded
by $\rel$.  
In particular, we show that the information lost by 
2-regularization of such a network can be recovered
by re-inserting one separated hybrid vertex above each
hybrid vertex of the regularized network.

\begin{theorem}\label{thm:sep-normal=>encoded_NEW}
The class of separated and phylogenetic normal networks is encoded by $\rel$. 
\end{theorem}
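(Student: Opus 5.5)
We will use Corollary~\ref{cor:encoding-via-prune}. It therefore suffices to show that, for separated phylogenetic normal networks $N$ and $N'$, $\reg(N)\simeq\reg(N')$ implies $N\simeq N'$. The strategy is to show that $N$ is canonically determined, up to isomorphism, by $M\coloneqq\reg(N)$ via an explicit ``re-separation'' operation $\sigma$. Define $\sigma(M)$ as follows: for every hybrid $h$ of $M$, replace $h$ by two vertices $h'$ and $h$. All parents of $h$ in $M$ become parents of $h'$, the single arc $(h',h)$ is added, and $h$ keeps its children. Since $\sigma$ is defined purely graph-theoretically, any leaf-fixing isomorphism $M\to M'$ extends to a leaf-fixing isomorphism $\sigma(M)\to\sigma(M')$. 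So the whole proof reduces to showing $N\simeq\sigma(\reg(N))$.

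By Lemma~\ref{lem:ominus-outdeg1-normal}, $\reg(N)=N\ominus\notlcaV(N)$, and $\notlcaV(N)$ is exactly the set $W$ of hybrids of $N$ with a unique child. Since $N$ is separated, $W$ is the set of all hybrids of $N$. Let $w\in W$ with unique child $c$.
\begin{enumerate}
\item We claim $\indeg_N(c)=1$. Otherwise $c$ would be a hybrid, so $c$ would itself have out-degree one. Its parent $w$ then has no child of in-degree one, which contradicts the tree-child property.
\item Each non-leaf vertex $c$ has at most one hybrid parent in $W$, since $\indeg_N(c)=1$ whenever $c$ has such a parent.
\item No two vertices of $W$ are adjacent, because children of vertices in $W$ are not hybrids.
\end{enumerate}
Thus the operations $\ominus w$ for $w\in W$ act on disjoint local configurations. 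Removing $w$ simply contracts the arc $(w,c)$: all parents of $w$ become parents of $c$, and $c$ becomes a hybrid of $\reg(N)$ with the same parent set that $w$ had. Conversely, every hybrid $h$ of $\reg(N)$ arises this way. Arcs not incident with $W$ are unchanged, so a vertex other than a child of some $w$ keeps its parents, and hence a vertex that is not a hybrid in $N$ cannot become one unless it is such a child $c$. This gives a bijection between $W$ and the hybrids of $\reg(N)$, namely $w\mapsto c$. Under this bijection, $\sigma$ exactly undoes the contractions: map $h'\mapsto w$ and keep all other vertices. We also need that no parallel arcs or other collapses occur when $\ominus$ is applied. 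This holds because a parent $p$ of $w$ is not already a parent of $c$, since $\indeg_N(c)=1$. The resulting map is a leaf-fixing isomorphism $\sigma(\reg(N))\to N$, and the leaves are untouched because $W\cap X=\emptyset$ (every vertex of $W$ has out-degree one).

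Combining these, if $\rel_N=\rel_{N'}$, then Theorem~\ref{thm:GsimH=>rel=rel=>ominus-sim_NEW} gives $\reg(N)\simeq\reg(N')$. Hence
\[
N\simeq\sigma(\reg(N))\simeq\sigma(\reg(N'))\simeq N'.
\]
The converse direction is immediate from Theorem~\ref{thm:GsimH=>rel=rel=>ominus-sim_NEW}.

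The main obstacle is the careful bookkeeping in the middle step. We must verify that $\ominus$ on the whole set $W$ produces exactly the contraction described, creates no new hybrids, and identifies no arcs. We must also check the degenerate cases: a hybrid whose unique child is a leaf, where leaf hybrids appear in $\reg(N)$ and $\sigma$ must still reinsert a vertex above them, and the case where two hybrids share parents. I expect the tree-child argument in the first item above to handle all of these.
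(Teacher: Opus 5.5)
Your proposal is correct and follows essentially the same route as the paper: you identify $\notlcaV(N)$ with the set of all hybrids via Lemma~\ref{lem:ominus-outdeg1-normal} and separatedness, use tree-child to get $\indeg_N(c)=1$ for the unique child $c$ of each such hybrid, set up the bijection $w\mapsto c$ onto the hybrids of $\reg(N)$, and recover $N$ by reinserting one vertex above each hybrid of $\reg(N)$; this yields $\rel_G=\rel_H\Rightarrow\reg(G)\simeq\reg(H)\Rightarrow G\simeq H$. Your additional observation that $W$ is an independent set, so the $\ominus$-operations do not interact and all parents of each $w$ survive, makes explicit a step the paper leaves implicit.
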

\begin{proof}
    Let $G$ and $H$ be separated and phylogenetic normal networks on $X$. 
    If $G\simeq H$, then 
    $\rel_G=\rel_H$ follows from Theorem~\ref{thm:GsimH=>rel=rel=>ominus-sim_NEW}. 
    
    Conversely, assume that $\rel_G=\rel_H$. In order to prove that $G\simeq H$, we show first that
    every separated phylogenetic normal networks $N$ is uniquely determined, up to isomorphism, by
    its 2-regularization $\reg(N)$. Let $N$ be a separated phylogenetic normal network on $X$. By
    Lemma~\ref{lem:ominus-outdeg1-normal}, we have $ \notlcaV(N)=\{v\in V(N)\mid \outdeg_N(v)=1\}$
    and that $\reg(N)=N\ominus \notlcaV(N)$  is strong-phylogenetic and normal. Moreover, since $N$ is
    phylogenetic, the set $\notlcaV(N)$ is precisely the set of hybrid vertices of $N$ with a unique
    child. Since $N$ is separated, this is precisely the set of all hybrid vertices of $N$.
		
	We now describe how the deleted vertices in $\notlcaV(N)$ can be recovered from $\reg(N)$. Let
	$u\in\notlcaV(N)$, and let $c$ be the unique child of $u$ in $N$. Since $N$ is tree-child and $u$
	has only the child $c$, the vertex $c$ is a tree-child of $u$. Hence $\indeg_N(c)=1$. When $u$ is
	removed by the operation $N\ominus\notlcaV(N)$, all arcs $p\to u$ in $N$ are replaced by arcs
	$p\to c$. Since $u$ is a hybrid vertex, $|\parent_N(u)|\geq 2$. Consequently, $c$ becomes a hybrid
	vertex in $\reg(N)$. Thus, every vertex $u\in\notlcaV(N)$ gives rise to a hybrid vertex $c$ of
	$\reg(N)$, namely its unique child. We next show that this correspondence is bijective. First, two
	distinct vertices of $\notlcaV(N)$ cannot have the same child since, otherwise, if
	$u_1,u_2\in\notlcaV(N)$ are distinct and have the same child $c$, then $\indeg_N(c)>1$ would yield
	a contradiction. Hence each hybrid vertex of $\reg(N)$ can arise from at most one vertex of
	$\notlcaV(N)$. Conversely, let $c$ be a hybrid vertex of $\reg(N)$. Since $
	V(\reg(N))=V(N)\setminus\notlcaV(N)$, the vertex $c$ belongs to $V(N)$ and is not contained in
	$\notlcaV(N)$. In particular, $c$ is not a hybrid vertex of $N$, because all hybrid vertices of
	$N$ belong to $\notlcaV(N)$. Thus, $\indeg_N(c)\leq 1$. However, since $\indeg_{\reg(N)}(c)> 1$,
	some new parents of $c$ in $\reg(N)$ must have been created by the $\ominus$-operation. This is
	only possible if there exists a vertex $u\in\notlcaV(N)$ that has $c$ as its unique child. By the
	previous arguments, this vertex $u$ is unique. Thus, the vertices of $\notlcaV(N)$ are in
	bijection with the hybrid vertices of $\reg(N)$. Under this bijection, a deleted vertex $u$ is
	represented by its unique child $c$, which is a hybrid vertex of $\reg(N)$. It follows that $N$
	can be reconstructed from $\reg(N)$ by the following canonical operation. For every hybrid vertex
	$c$ of $\reg(N)$, insert a new vertex $u_c$, delete the arcs $ p\to c$ for all
	$p\in\parent_{\reg(N)}(c)$ and add the arcs $p\to u_c$ for all $p\in\parent_{\reg(N)}(c)$ and the
	arc $ u_c\to c $. This operation depends only on $\reg(N)$ and reconstructs $N$ uniquely up to
	isomorphism. Applying this reconstruction to $N=G$ and $N=H$, any isomorphism 
	$\reg(G)\simeq\reg(H)$ maps hybrid vertices of $\reg(G)$ to hybrid vertices of $\reg(H)$ and
	therefore extends uniquely to an isomorphism between the reconstructed networks. Since by
	assumption, $\rel_G=\rel_H$, Theorem~\ref{thm:GsimH=>rel=rel=>ominus-sim_NEW} implies that
	$\reg(G)\simeq \reg(H)$. The latter arguments now imply that $G\simeq H$.
\end{proof}

We conclude this section by considering level-1 networks some of which, as we
now observe, form a subclass of normal networks.

\begin{proposition}\label{prop:phyloSFlvl1=>normal}
Every phylogenetic shortcut-free level-1 network is normal.     
\end{proposition}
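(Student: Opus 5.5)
Since shortcut-freeness is assumed, it suffices to show that every phylogenetic shortcut-free level-1 network $N$ is tree-child. That is, every non-leaf vertex $v$ of $N$ must have a child $u$ with $\indeg_N(u)=1$. I argue by contradiction. Suppose $v\in V(N)\setminus X$ has only hybrid children. Since $N$ is phylogenetic, either $\outdeg_N(v)\geq 2$, or $\outdeg_N(v)=1$ and $\indeg_N(v)\geq 2$.

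\emph{Case $\outdeg_N(v)\ge 2$.} Let $h_1\neq h_2$ be two hybrid children of $v$. Each $h_i$ has a second parent $p_i\neq v$. Since $N$ has a unique root, there are paths from the root to $v$ and to each $p_i$. Hence the arc $v\to h_i$ together with $p_i\to h_i$ closes an undirected cycle through $v$ and $h_i$: take the paths up from $v$ and from $p_i$ to their first common ancestor. The arcs $v\to h_1$ and $v\to h_2$ therefore both lie in biconnected components. These components contain $v$. The plan is to show that $v\to h_1$ and $v\to h_2$ lie in the same biconnected component $B$. I would argue this using the undirected cycles: both cycles pass through the ancestors of $v$, or through $v$ together with $h_i$ and paths that meet. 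If needed, I would use the fact that in a level-1 network each block has a unique maximal vertex and at most one hybrid other than it.

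Once both arcs lie in a common block $B$, the hybrids $h_1,h_2$ are both non-maximal in $B$, because $v\succ h_i$ and $v\in B$. This contradicts the level-1 property. The delicate part is the case where the two cycles lie in different blocks meeting only at $v$. In that case $v$ is a cut vertex, and $v$ is the maximal vertex of the block containing $v\to h_i$ only if the other parent $p_i$ of $h_i$ is a descendant of $v$. I plan to handle this by noting the following. If $v$ is the maximal vertex of a block $B_i$ containing $h_i$ as its unique hybrid, then $B_i$ consists of two internally disjoint $v$--$h_i$ paths, one of which is the arc $v\to h_i$. The other path has length at least $2$, so $v\to h_i$ is a shortcut. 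This contradicts shortcut-freeness. If $v$ is not maximal in the block of $v\to h_i$, then the block of $v\to h_i$ is the block containing $v$ and an arc entering $v$. That block is unique if $\indeg_N(v)=1$, and in that case both $h_1$ and $h_2$ lie in it, giving the contradiction above.

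\emph{Case $\outdeg_N(v)=1$, $\indeg_N(v)\geq2$.} Here $v$ is itself a hybrid with a single hybrid child $h$. Consider the block $B$ containing the arcs into $v$. Its maximal vertex $r$ is not $v$. If $h$ also lies in $B$, then $B$ contains two non-maximal hybrids, a contradiction. Otherwise $v$ is a cut vertex and $h$ lies in a block $B'$ below $v$ with maximal vertex $v$. As before, $B'$ has the shape of two $v$--$h$ paths with unique hybrid $h$. Since $v$ has only one child, two internally disjoint paths from $v$ are impossible, a contradiction.

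The main obstacle I expect is the block bookkeeping. Concretely, one must justify that a block whose maximal vertex is $v$ and whose only hybrid is $h$ is exactly a pair of internally disjoint directed $v$--$h$ paths. The argument is: every vertex of the block other than $v$ has an in-arc in the block, and following in-arcs within the block leads up to $v$. One must also justify that arcs into a non-maximal vertex of a block stay in that block. I would invoke the standard structure of level-1 blocks from \cite{Hellmuth2023} for this.
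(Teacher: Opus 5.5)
The paper does not argue about blocks here. It cites \cite[Prop.~9]{Hellmuth2023}, which says that every phylogenetic level-1 network is tree-child, and normality then follows at once from shortcut-freeness. Your direct block argument is a genuinely different route that makes the statement self-contained, and your case split is the right one: outdegree $\ge 2$ versus outdegree $1$ with indegree $\ge 2$, and then $v$ maximal or non-maximal in the block $B_i$ of $v\to h_i$. Done carefully, it also shows that shortcut-freeness is not needed for tree-childness, which matches the cited result. If $v$ is the maximal vertex of $B_i$, then $B_i$ is non-trivial, since you showed $v\to h_i$ lies on a cycle. So $v$ has degree at least $2$ in $B_i$, and having no in-arcs there, it has at least two children in $B_i$. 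These children are hybrids by assumption and are non-maximal, which contradicts level-1 directly.

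However, the lemma you single out as the main obstacle is false as stated, so do not try to prove it. A block with maximal vertex $v$ and unique hybrid $h$ need not be a pair of internally disjoint $v$--$h$ paths, because in a non-binary level-1 network $h$ can have in-degree $3$ or more. For example, take the block with arcs $v\to a$, $v\to b$, $v\to c$, $a\to h$, $b\to h$, $c\to h$, with leaves hung below $a,b,c,h$; this is a phylogenetic, shortcut-free level-1 network. The block is two paths only when $\indeg_N(h)=2$.

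What your argument actually needs are two true facts.
\begin{itemize}
\item[(i)] If $w$ is a non-maximal vertex of a block $B$, then every arc entering $w$ lies in $B$. Otherwise, for an arc $(u,w)\notin B$ we have $u\notin B$. The root-paths to $u$ and to the maximal vertex of $B$ both avoid $w$, so they give an ear from $w$ back into $B-w$, contradicting the maximality of $B$.
\item[(ii)] The maximal vertex of a non-trivial block has at least two children in it.
\end{itemize}
Fact (i) gives $p_i\in B_i$, hence $p_i\prec_N v$ when $v$ is maximal. That yields your shortcut; in the outdegree-$1$ case it forces $p\preceq_N h$, a directed cycle. Fact (ii) replaces your step ``two internally disjoint paths from $v$ are impossible''. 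Fact (i) also closes the sub-case you only settle for $\indeg_N(v)=1$: all in-arcs of $v$ lie in a single block whatever $\indeg_N(v)$ is. So if $v$ is non-maximal in both $B_1$ and $B_2$, then $B_1=B_2$, and this block contains two non-maximal hybrids.
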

\begin{proof}
    By \cite[Prop~9]{Hellmuth2023} every phylogenetic level-1 network is tree-child. 
    Hence, all shortcut-free phylogenetic level-1 networks are normal.     
\end{proof}

\begin{corollary}\label{cor:normal-rel}
The following classes of networks are encoded by $\rel$.
\begin{itemize}
    \item The class of binary normal networks.
    \item The class of separated, phylogenetic and shortcut-free level-1 networks. 
    \item The class of shortcut-free binary level-1 networks. 
\end{itemize}
\end{corollary}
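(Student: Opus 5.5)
The plan is to reduce each of the three classes to a subclass of the class of separated phylogenetic normal networks, which is encoded by $\rel$ by Theorem~\ref{thm:sep-normal=>encoded_NEW}. Observation~\ref{obs:subclass-inherits-encoding} then gives the claim in each case. So for each bullet it suffices to check three properties of every network $N$ in the class: $N$ is normal, $N$ is phylogenetic, and $N$ is separated.

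\emph{Binary normal networks.} Let $N$ be a binary normal network. By definition of binary, every hybrid vertex has in-degree $2$ and out-degree $1$, so $N$ is separated. Every tree vertex is either a leaf or has out-degree $2$. Hence a vertex $v$ with $\outdeg_N(v)=1$ must be a hybrid, and so $\indeg_N(v)=2>1$. Therefore there is no vertex with out-degree one and in-degree at most one, and $N$ is phylogenetic. Since $N$ is normal by assumption, it is a separated phylogenetic normal network.

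\emph{Separated, phylogenetic and shortcut-free level-1 networks.} By Proposition~\ref{prop:phyloSFlvl1=>normal}, each such network is normal. It is separated and phylogenetic by assumption, so the class is contained in the class of Theorem~\ref{thm:sep-normal=>encoded_NEW}.

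\emph{Shortcut-free binary level-1 networks.} The argument from the first bullet shows that a binary network is phylogenetic and separated. Proposition~\ref{prop:phyloSFlvl1=>normal} then gives that such a network is normal. So this class is a subclass of the second class, and indeed also of the first.

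The only step needing care is the phylogenetic check for binary networks. One must rule out out-degree-one vertices with in-degree at most one. A hybrid has in-degree at least $2$, and a non-leaf tree vertex has out-degree $2$; leaves have out-degree $0$ and so are harmless. I expect no real obstacle: everything follows from earlier results by inclusion of classes.
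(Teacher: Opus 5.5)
Your proposal is correct and follows the paper's proof. In both, each class is placed inside the class of separated phylogenetic normal networks from Theorem~\ref{thm:sep-normal=>encoded_NEW}, using that binary networks are phylogenetic and separated, and Proposition~\ref{prop:phyloSFlvl1=>normal} for normality of the level-1 classes; your degree check for binary networks just spells out what the paper calls immediate "by definition."
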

\begin{proof}
By definition, binary networks are phylogenetic and separated.    
Theorem~\ref{thm:sep-normal=>encoded_NEW} thus implies that
the class of binary normal networks is encoded by $\rel$. 

For level-1 networks, observe first that, by Proposition~\ref{prop:phyloSFlvl1=>normal}, 
every shortcut-free phylogenetic level-1 networks is normal. 
Moreover, binary networks are separated and phylogenetic. 
It follows now from Theorem~\ref{thm:sep-normal=>encoded_NEW}
that the class of  separated, phylogenetic and shortcut-free level-1 networks
as well as the class of  shortcut-free binary level-1 networks are encoded by  $\rel$. 
\end{proof}

\begin{remark}
    We could establish here a similar algorithmic result to Theorem~\ref{thm:poly-from-rel} 
    or Theorem~\ref{thm:str-normal=>iso-to-canonical_NEW}
    for the classes considered in Theorem~\ref{thm:sep-normal=>encoded_NEW} and Corollary~\ref{cor:normal-rel}. 
    In fact, all networks $N$ within these classes   can be uniquely, up to isomorphism, reconstructed from $\rel_G$ in  polynomial time.     
    However, we defer this to the next section,
    where we shall prove that these networks are 
    polynomial-time constructable from a subrelation of $\rel$, which will also imply these results.
\end{remark}

\section{Triple-like encodings}
\label{sec:triple}

The preceding encoding results are concerned with the full relation $\rel_G$ of a DAG $G$. 
This relation may contain substantial redundancy: it records all ancestor comparisons among
well-defined pairwise LCAs, including reflexive comparisons such as $(ab,ab)$ or comparisons that can often be
inferred from others. 
It is therefore
natural to ask how much of the information provided by $\rel_G$ is actually needed
to encode a given class of DAGs or networks. By
Corollary~\ref{cor:relIFFsrel}, a class $\mathcal C$ of DAGs on a fixed
leaf set $X$ is encoded by $\rel$ if and only if it is encoded by the
strict subrelation $\srel\subseteq \rel$.

We now consider a different restriction of $\rel_G$, namely the part consisting only of comparisons
of the form $(ab,ac)$, where $a,b,c$ are pairwise distinct leaves. Since such a comparison involves
exactly three leaves, it is natural to view it as analogous to a rooted triple, that is, a binary
phylogenetic tree on three leaves \cite{sem-ste-03a}. However, the information provided by these two objects is
different. A rooted triple specifies which of the three possible binary rooted trees on the leaves
$a,b,c$ is displayed or embedded, whereas a comparison $(ab,ac)\in\rel_G$ only states that $\lca_G(ab)$ is a
descendant of $\lca_G(ac)$.

\begin{definition}
For a DAG $G$ on $X$, we define 
\[
    \rel_G^3
    \coloneqq
    \{(ab,ac)\in\rel_G \colon a,b,c\in X
    \text{ are pairwise distinct}\}.
\]
\end{definition}

We now aim at characterizing classes of DAGs that are encoded by the
$\rel^3$-relation, i.e., those classes $\mathcal{C}$ for which, 
for all 
    $G,H\in \mathcal{C}$, we have 
    \[G\simeq H \iff \rel^3_G=\rel^3_H.\]

\begin{figure}[t]
  \centering
  \includegraphics[width=0.8\textwidth]{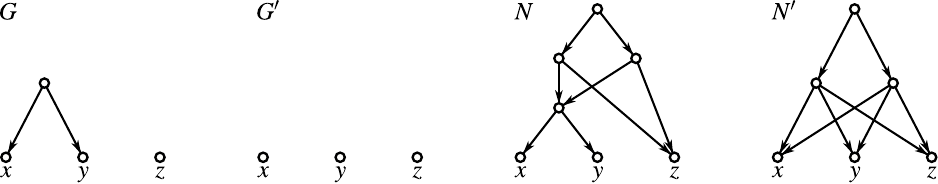}
	\caption{Shown are several DAGs and networks on $X = \{x,y,z\}$.
    For the two DAGs $G$ and $G'$ it holds that $(xx,xy)\in \rel_G$ but 
    $(xx,xy)\notin \rel_{G'}$. Thus, $\rel_G\neq\rel_{G'}$. 
    Since $G\not\simeq G'$ it follows
    that the class  $\mathcal C=\{G,G'\}$ is encoded by $\rel$. 
    However, $\rel_G^3=\emptyset=\rel_{G'}^3$, and hence $\mathcal C$ is not encoded by $\rel^3$. Both $G$ and $G'$ are 2-lca-relevant, but they do not have the 2-lca-property. 
    For the two networks $N$ and $N'$ it holds that $\lca_N(xy)$ is well-defined 
    while $\lca_{N'}(xy)$ is not. Hence, $\rel_N\neq\rel_{N'}$ and, since $N\not\simeq N'$, 
    the class $\mathcal C'=\{N,N'\}$ is encoded by $\rel$. Again, $\rel_N^3=\emptyset=\rel_{N'}^3$
    holds and $\mathcal C'$ is not encoded by $\rel^3$.
    Neither $N$ nor $N'$ is 2-lca-relevant and neither network has the 2-lca-property.
    }
  \label{fig:rel3}
\end{figure}

Figure~\ref{fig:rel3} shows that $\rel^3$ is too weak in general, even
for classes that are encoded by the full relation $\rel$. In particular,
2-lca-relevance alone does not suffice to replace $\rel$ by $\rel^3$. The
problem is that $\rel^3$ only records comparisons involving three distinct
leaves and may therefore fail to record whether the LCA of a given pair of
leaves is well-defined at all. This obstruction disappears under the
stronger 2-lca-property, where all pairwise LCAs are required to be
well-defined.

\begin{theorem}\label{thm:3-rel-encodes-under-2lca}
Let $\mathcal C$ be a class of DAGs on $X$ such that every $G\in\mathcal C$
has the 2-lca-property. If $\mathcal C$ is encoded by 
$\rel$, then $\mathcal C$ is encoded by $\rel^3$. 
\end{theorem}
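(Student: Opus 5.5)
The plan is to show that, for DAGs on $X$ with the 2-lca-property, the full relation $\rel_G$ is determined by $\rel^3_G$ together with $X$. Then for $G,H\in\mathcal C$ we argue as follows. If $G\simeq H$, then $\rel_G=\rel_H$ by Theorem~\ref{thm:GsimH=>rel=rel=>ominus-sim_NEW}, and hence $\rel^3_G=\rel^3_H$, since $\rel^3$ is a restriction of $\rel$. Conversely, if $\rel^3_G=\rel^3_H$, then $\rel_G=\rel_H$ by the determination result, and so $G\simeq H$ because $\mathcal C$ is encoded by $\rel$.

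The first key step is a basic fact about a DAG $G$ on $X$: whenever $\lca_G(ab)$ is well-defined, every common ancestor $w$ of $a$ and $b$ satisfies $\lca_G(ab)\preceq_G w$. To see this, note that the set of common ancestors below $w$ is finite and non-empty, so it contains a $\preceq_G$-minimal element. This element is a least common ancestor of $a$ and $b$, and therefore equals $\lca_G(ab)$ by uniqueness. I expect this to be the only point needing care, since for DAGs one cannot take for granted that all common ancestors lie above the LCA. As a consequence, since all pairwise LCAs are well-defined under the 2-lca-property, we get
\[
(ab,xy)\in\rel_G \iff \{a,b\}\subseteq \CC_G(\lca_G(xy)).
\]
For the forward direction, $a,b\preceq_G\lca_G(ab)\preceq_G\lca_G(xy)$. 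For the reverse direction, $\lca_G(xy)$ is a common ancestor of $a$ and $b$. Hence $\rel_G$ is determined by the clusters $\CC_G(\lca_G(xy))$ for $x,y\in X$.

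The second step recovers these clusters from $\rel^3_G$. If $x=y$, then $\CC_G(\lca_G(xx))=\{x\}$. If $x\neq y$, I claim that for $z\in X$,
\[
z\in \CC_G(\lca_G(xy)) \iff z\in\{x,y\} \text{ or } (xz,xy)\in\rel^3_G.
\]
For the reverse direction with $z\notin\{x,y\}$, we have $z\preceq_G\lca_G(xz)\preceq_G\lca_G(xy)$. For the forward direction with $z\notin\{x,y\}$, the vertex $\lca_G(xy)$ is a common ancestor of $x$ and $z$. By the first step, $\lca_G(xz)\preceq_G\lca_G(xy)$, and since $x,y,z$ are pairwise distinct, $(xz,xy)\in\rel^3_G$.

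Combining the two steps expresses $\rel_G$ explicitly as a function of $X$ and $\rel^3_G$ alone. In particular, $\rel^3_G=\rel^3_H$ implies $\rel_G=\rel_H$ for DAGs $G,H$ on $X$ with the 2-lca-property, which completes the argument.
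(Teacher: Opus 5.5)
Your proof is correct. Its skeleton matches the paper's. Both arguments recover each cluster $C_{xy}=\CC_G(\lca_G(xy))$ from $\rel_G^3$ via the criterion $z\in C_{xy}\iff (xz,xy)\in\rel_G^3$ for $z\notin\{x,y\}$, and then recover $\rel_G$ from these clusters.

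The difference is in the second half. The paper proves $(ab,cd)\in\rel_G\iff C_{ab}\subseteq C_{cd}$ by passing to $\reg(G)$:
\begin{enumerate}
\item the well-defined LCAs survive 2-regularization with unchanged clusters (Lemmas~\ref{lem:ominus-basics} and~\ref{lem:properties-SF-G_NEW});
\item $\reg(G)$ is regular (Proposition~\ref{prop:some-properties}), so its ancestor order coincides with cluster inclusion;
\item the resulting comparison is transferred back to $G$.
\end{enumerate}
You instead prove directly that a unique LCA lies below every common ancestor. This gives $(ab,xy)\in\rel_G\iff\{a,b\}\subseteq C_{xy}$ at once.

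Your route is shorter and self-contained, and it avoids the external result on regular DAGs. It also makes explicit a fact the paper uses without comment, both in the step ``it follows that $\lca_G(ax)\preceq_G\lca_G(ab)$'' and in the assertion that 2-regularization preserves well-defined LCAs. The paper's detour gains nothing logically here; it mainly keeps the argument inside the regularization framework used throughout the paper.
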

\begin{proof}
Let $\mathcal C$ be a class  that contain only DAGs on $X$ having
the 2-lca-property. Suppose $\mathcal C$ is encoded by 
$\rel$. 
To show that $\mathcal C$ is encoded by $\rel^3$ we must verify that, for all $G,H\in\mathcal C$ 
we have $ G\simeq H \iff
\rel_G^3=\rel_H^3$. Let $G,H\in\mathcal C$. 
If $G\simeq H$, then $\rel_G=\rel_H$ by Theorem~\ref{thm:GsimH=>rel=rel=>ominus-sim_NEW}, and hence
$\rel_G^3=\rel_H^3$.

Conversely, suppose that $\rel_G^3=\rel_H^3$.
We show first that this implies $\rel_G=\rel_H$. Since every DAG in
$\mathcal C$ has the 2-lca-property and is a DAG on $X$,
the LCA $\lca_G(ab)$ is well-defined for all $a,b\in X$, and the same holds for $H$. 
We will use the latter without explicit mention.
We show now that for all distinct
$a,b\in X$, the cluster $C_{ab}\coloneqq \CC_G(\lca_G(ab))$ is determined by $\rel_G^3$. Clearly,
$a,b\in C_{ab}$. Let $x\in X\setminus\{a,b\}$. We claim that 
\[x\in C_{ab} \iff (ax,ab)\in\rel_G^3.\] 

Suppose that $x\in C_{ab}$. Then $x\preceq_G \lca_G(ab)$ must hold. Since $a\preceq_G
\lca_G(ab)$, it follows that $ \lca_G(ax)\preceq_G \lca_G(ab)$.
Thus $(ax,ab)\in\rel_G$. Since $a,b,x$ are pairwise distinct, it follows that 
$(ax,ab) \in \rel_G^3$. Conversely, assume that $(ax,ab) \in \rel_G^3 \subseteq \rel_G$. 
Therefore,  $\lca_G(ax)\preceq_G\lca_G(ab)$. 
Hence, $a,x \preceq \lca_G(ax)\preceq_G\lca_G(ab)$ and it follows that 
$ x\in C_{ab}$. In summary, for all distinct
$a,b\in X$, the cluster $C_{ab}\coloneqq \CC_G(\lca_G(ab))$ is determined by $\rel_G^3$
and, in particular, 
\[ C_{ab} = \{a,b\}\cup \{x\in X\setminus\{a,b\}\mid (ax,ab)\in\rel_G^3\}. \]
Moreover, it always holds that $C_{aa}\coloneqq \CC_G(\lca_G(aa)) = \{a\}$ for all $a\in X$. 
The same arguments hold for $H$. Since $G$ and $H$ are DAGs on $X$
and since $\rel_G^3=\rel_H^3$, the latter arguments imply that 
$C_{ab} =\CC_G(\lca_G(ab)) = \CC_H(\lca_H(ab))$  for all $a,b\in X$, 
including the case $a=b$.

Now, let $a,b,c,d$ be, not necessarily distinct, elements of $X$.
We show now that
\begin{equation}
  (ab,cd)\in\rel_G \iff    C_{ab} \subseteq C_{cd}. \label{eq:iff1}
\end{equation}
 If $(ab,cd)\in\rel_G$, then $\lca_G(ab)\preceq_G \lca_G(cd)$ and, therefore,
$C_{ab}\subseteq C_{cd}$ holds. Now, assume that $C_{ab}\subseteq C_{cd}$.
We now list some properties of $G'\coloneqq \reg(G)$. 
Since $\lca_G(ab)$ and $\lca_G(cd)$ are well-defined, both vertices belong to 
$\lcaV(G)$ and therefore remain in $G\ominus \notlcaV(G)$. By
Lemma~\ref{lem:ominus-basics} and Lemma~\ref{lem:properties-SF-G_NEW}, passing from $G$ to
$G'$ preserves the ancestor relation among vertices that remain, and preserves all LCAs that
are already well-defined in $G$. In particular, $\CC_{G'}(\lca_{G'}(ab)) = C_{ab}$ and
$\CC_{G'}(\lca_{G'}(cd)) = C_{cd}$ holds. Thus, $
\CC_{G'}(\lca_{G'}(ab)) \subseteq \CC_{G'}(\lca_{G'}(cd))$.
Moreover, $G'$ is regular by Proposition~\ref{prop:some-properties} and, thus, 
the ancestor order in $G'$ is equivalent to inclusion of clusters (c.f. \cite[Prop.~3]{Hellmuth2023}).
Hence, it follows that $ \lca_{G'}(ab) \preceq_{G'} \lca_{G'}(cd)$.
Again using the preservation of the ancestor relation between  $G$ and $G'$ on the 
vertices in  $\lcaV(G)$,  we obtain   $\lca_G(ab)\preceq_G \lca_G(cd)$.
Hence, $(ab,cd)\in\rel_G$.
In summary, Equation~\ref{eq:iff1} holds. 
By the same arguments, we obtain
\begin{equation}
  (ab,cd)\in\rel_H \iff    \CC_H(\lca_H(ab)) \subseteq \CC_H(\lca_H(cd)) \label{eq:iff2}
\end{equation}
for  all $a,b,c,d\in X$. 
As argued above, before Equation~\ref{eq:iff1}, we have $\CC_G(\lca_G(ab)) = \CC_H(\lca_H(ab))$ for all $a,b\in X$. 
This together with Equation~\ref{eq:iff1} and \ref{eq:iff2} implies that 
$(ab,cd)\in\rel_G$ if and only if 
$(ab,cd)\in\rel_H$ for all $a,b,c,d\in X$. 
Thus $\rel_G=\rel_H$.

In summary, we have shown that $\rel_G^3=\rel_H^3$ implies $\rel_G=\rel_H$. 
Since $\mathcal C$ is encoded $\rel$, we
conclude that $G\simeq H$. This completes the proof of the converse direction and therefore the proof.
\end{proof}

\begin{figure}[t]
  \centering
  \includegraphics[width=0.8\textwidth]{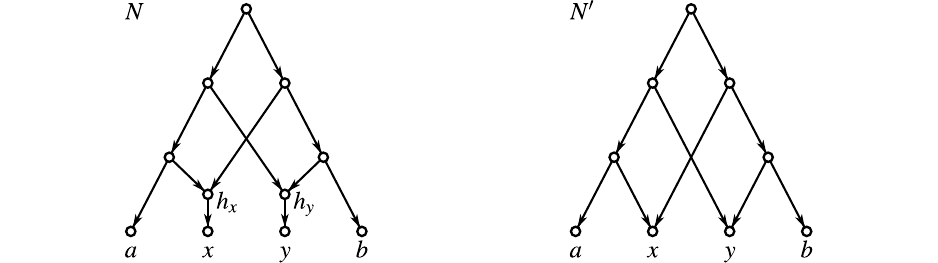}
	\caption{The  network $N$ is binary and normal, and also separated, phylogenetic and normal. 
					The  network $N'$ is strong-phylogenetic and normal. 
						Neither of the networks have the 2-lca-property
					since the LCA of $x$ and $y$ is  not well-defined.
					While $N'$ is 2-lca-relevant, $N$ is not since, for example $h_x \neq \lca_N(cd) $
                    for all $c,d\in X$.
                    }
  \label{fig:binNormal}
\end{figure}

Note that the 2-lca-property does not hold for all of the network
classes that we have considered so far. For example, not every binary normal network
has the 2-lca-property; see Figure~\ref{fig:binNormal}.
For level-1 networks, the situation is different.

\begin{proposition}[{\cite[L~49]{Hellmuth2023}}]\label{prop:lvl1=>2-lca-prop}
    Every level-1 network has the 2-lca-property. 
\end{proposition}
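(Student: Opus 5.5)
The plan is to show directly that in a level-1 network $N$ on $X$, any two leaves $x,y$ have exactly one least common ancestor. Fix $x,y\in X$. Since $N$ has a unique root $\rho$, the vertex $\rho$ is a common ancestor of $x$ and $y$, so the set $A(x,y)$ of common ancestors is non-empty and has at least one $\preceq_N$-minimal element. The claim therefore reduces to excluding two distinct minimal common ancestors. I will argue by contradiction: suppose $u\neq v$ are both $\preceq_N$-minimal in $A(x,y)$. Then $u$ and $v$ are $\preceq_N$-incomparable.

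Choose directed paths $P_{u,x}$, $P_{u,y}$, $P_{v,x}$, $P_{v,y}$. By minimality of $u$, the paths $P_{u,x}$ and $P_{u,y}$ meet only in $u$; otherwise a shared vertex $w$ with $w\prec_N u$ would be a common ancestor strictly below $u$. The same holds for $v$. Now take the first vertex $h_x$ where $P_{u,x}$ and $P_{v,x}$ meet, and the first vertex $h_y$ where $P_{u,y}$ and $P_{v,y}$ meet. Each of these is a hybrid, because it has distinct in-arcs coming from the two paths. Moreover, $h_x\neq h_y$, since otherwise that vertex would lie on both $P_{u,x}$ and $P_{u,y}$ and would be different from $u$. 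Next, add paths from $\rho$ to $u$ and from $\rho$ to $v$, and let $w$ be a vertex where these paths last merge. Then the union of the four path segments (from $u$ to $h_x$ and to $h_y$, and from $v$ to $h_x$ and to $h_y$) forms an undirected cycle. This places $u,v,h_x,h_y$ in a single biconnected component $B$. The cycle through $u$, $h_x$, $v$, $h_y$ is in fact 2-connected by itself, so all four vertices lie in the same block.

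Now apply the level-1 condition. $B$ contains at most one hybrid other than its $\preceq_N$-maximal vertex $m_B$. We have two distinct hybrids $h_x,h_y$ in $B$, so one of them, say $h_x$, must equal $m_B$. But $h_x\prec_N u$ and $u\in B$, which contradicts the maximality of $m_B$ in $B$. Hence the minimal common ancestor is unique, and $\lca_N(xy)$ is well-defined.

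The main obstacle is the cycle argument: I must check that the four segments really yield a cycle, that is, that $h_x$ and $h_y$ are distinct from $u$ and $v$ and that the segments are internally disjoint enough. Some case care is needed, since $h_x$ could equal $x$ itself (a leaf can be a hybrid), and the $u$-paths and $v$-paths could cross before reaching the chosen first meeting points. To handle this, I will pick $h_x$ as the first vertex of $P_{u,x}$ lying on $P_{v,x}\cup P_{v,y}$. I will then trim the paths accordingly, so that the $u$-side and $v$-side are internally disjoint, and obtain the cycle through $u$, $h$, $v$, $h'$.
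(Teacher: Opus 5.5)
Your argument is correct. It differs from the paper only in that the paper gives no argument of its own here: it imports the statement from \cite[L~49]{Hellmuth2023}. Your proof is self-contained and uses only the level-1 definition stated in this paper. It does not even need uniqueness of the maximal vertex $m_B$, only that $h_x$ and $h_y$ are not $\preceq_N$-maximal in $B$ because $u\in B$.

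The ``main obstacle'' you flag does not occur, and the paths from the root $\rho$ can be dropped. Suppose some $w$ lay on both $P_{u,x}$ and $P_{v,y}$. Then $w$ would be a common ancestor of $x$ and $y$ with $w\preceq_N u$ and $w\preceq_N v$. Minimality of $u$ forces $w=u$, hence $u\preceq_N v$, contradicting incomparability. So $P_{u,x}\cap P_{v,y}=\emptyset=P_{u,y}\cap P_{v,x}$. Combined with $P_{u,x}\cap P_{u,y}=\{u\}$, $P_{v,x}\cap P_{v,y}=\{v\}$ and your first-meeting choice of $h_x,h_y$, the four segments form a simple cycle through the distinct vertices $u,h_x,v,h_y$ with no trimming. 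For the same reason, your alternative choice via $P_{v,x}\cup P_{v,y}$ yields the same $h_x$ and $h_y$.

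Two details are worth writing out explicitly. First, $h_x\neq u$ and $h_x\neq v$, since otherwise $u\preceq_N v$ or $v\preceq_N u$. Second, $h_x$ is a hybrid because its predecessor on $P_{u,x}$ is not on $P_{v,x}$ by the first-meeting choice. It therefore differs from its predecessor on $P_{v,x}$, giving two distinct parents.

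Your route gives an elementary, checkable proof in the paper's own terminology. The citation buys brevity and links the fact to the treatment of level-1 networks in \cite{Hellmuth2023}.
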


\begin{figure}[t]
  \centering
  \includegraphics[width=0.8\textwidth]{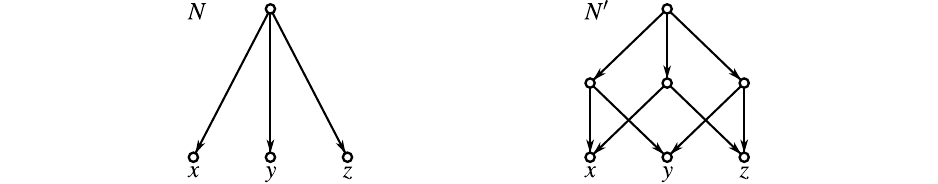}
	\caption{Two non-isomorphic regular networks $N$ and $N'$ 
                that both have the 2-lca-property and 
                for which
					$\rel^3_{N} =\{(ab,ac),(ab,bc),(ac,ab),(ac,bc),(bc,ab),(bc,ac)\}\neq \emptyset=\rel^3_{N'}$ holds. Hence, 
					the class $\mathcal C = \{N,N'\}$ is encoded by
					$\rel^3$. However, $\srel_G^3=\srel_H^3=\emptyset$
					and so $\mathcal C$ is not encoded by $\srel^3$. 
					}
  \label{fig:srel-3}
\end{figure}

In view of Proposition~\ref{prop:rel=srel}, i.e., that 
   for all DAGs $G$ and $H$ on $X$ we have $\rel_G=\rel_H$ if and only if $\srel_G=\srel_H$, 
   it might be tempting to assume that the latter results also generalize to 
 $  \srel_G^3    \coloneqq     \{(ab,ac)\in\srel_G \colon a,b,c\in X    \text{ are pairwise distinct}\}$.
 This however is not the case in general, as illustrated in Figure~\ref{fig:srel-3}.

Now, for a DAG $G$ on $X$ we ask to what extent the sparse relation $\rel_G^3$ determines $G$.
 This requires the closure operation from \cite{LAMSH:25}  that was used implicitly in the general construction of
canonical DAGs. We recall only the rules needed here.

\begin{definition}\label{def:rules}
Let $R \subseteq \pairs(X)\times \pairs(X)$. 
Then, we define
$\support_{R}^+ \coloneqq \support_{R}\ \cup\ \{xx\mid x\in X\}$ and 
the following rules
\begin{description}
  \item[\axiom{R1}] \emph{Reflexivity:} for all  $p\in \support^+_R$, add $(p,p)$ to $R$.
  \item[\axiom{R2}] \emph{Transitivity:} if $(p,q) \in R$ and $(q,r) \in R$, add $(p,r)$ to $R$.
  \item[\axiom{R3}] \emph{Cross-Consistency:} if $ab\in \support_R$ and $(ac,xy) \in R$ and $(bd,xy) \in R$ for some $c,d\in X$, add $(ab,xy)$ to $R$.
\end{description}
\end{definition}

We say that a relation $R$ is, reflexive, resp.,  transitive, resp., cross-consistent
if application of \axiom{R1}, resp., \axiom{R2}, resp., \axiom{R3} does not change $R$. 
The relation $R^+$ denotes the inclusion-minimal relation that contains $R$
and is reflexive, transitive and cross-consistent. As shown in \cite{LAMSH:25}, 
 $R^+$  is uniquely determined and can be computed in polynomial time
 by exhaustive application of \axiom{R1}, \axiom{R2}, and \axiom{R3}. 
 Moreover, the operator $^+ \colon R\, \mapsto\, R^+$ is a closure operator 
 i.e., it satisfies
\emph{extensivity} $R \subseteq R^+$; 
\emph{monotonicity}  $R_1 \subseteq R_2$ $\implies$ $R_1^+ \subseteq R_2^+$;
and \emph{idempotency} $(R^+)^+ = R^+$.

The relation $R=\rel_G^3$ is, in general, not closed under the rules
above, that is, one may have $R\neq R^+$. Nevertheless,  as we show next, 
if $G$ has the
2-lca-property, then $\rel_G^3$ determines the full
relation $\rel_G$ once the leaf set $X$ of $G$ is taken into account. 
In this case, the canonical DAG
$   \cG_{\rel_G}\simeq \reg(G)$
can be reconstructed from $\rel_G^3$. In particular, for classes satisfying
$G=\reg(G)$, this yields a polynomial-time method to recover $G$ uniquely
up to isomorphism from $\rel_G^3$.

\begin{proposition}\label{prop:rel3X-closure-recovers-rel}
Let $G$ be a DAG on $X$ with the 2-lca-property.
Put
\[ \rel_G^{3,X} \coloneqq \rel_G^3 \cup \{(xx,xy),(yy,xy)\mid  \text{ not nec. distinct }  x,y\in X\}. \]
Then
\[ \left(\rel_G^{3,X}\right)^+=\rel_G. \]
\end{proposition}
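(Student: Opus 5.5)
We show two inclusions. For $(\rel_G^{3,X})^+\subseteq\rel_G$: first $\rel_G^{3,X}\subseteq\rel_G$, since $\rel_G^3\subseteq\rel_G$, and by the 2-lca-property $\lca_G(xy)$ is well-defined with $x=\lca_G(xx)\preceq_G\lca_G(xy)$, so $(xx,xy)\in\rel_G$. Next, $\rel_G$ is closed under \axiom{R1}--\axiom{R3}. It is reflexive on $\support^+_{\rel_G}$, because every $xx$ and every support element has a well-defined LCA. It is transitive because $\preceq_G$ is. For \axiom{R3}, suppose $ab\in\support_{\rel_G}$, $(ac,xy)\in\rel_G$ and $(bd,xy)\in\rel_G$. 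Then $a,b\preceq_G\lca_G(xy)$, so $\lca_G(xy)$ is a common ancestor of $a$ and $b$. Since $\lca_G(ab)$ is the unique LCA, it lies below every common ancestor in a finite DAG, so $(ab,xy)\in\rel_G$. Minimality of the closure then gives $(\rel_G^{3,X})^+\subseteq\rel_G$ (citing \cite{LAMSH:25} for the closure being the least closed superset).

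For the reverse inclusion we fix $(ab,cd)\in\rel_G$, i.e.\ $\lca_G(ab)\preceq_G\lca_G(cd)$, and derive it by the rules. Put $R\coloneqq(\rel_G^{3,X})^+$; by the 2-lca-property every $pq\in\pairs(X)$ lies in $\support_R$. The key idea is to reduce to \axiom{R3}: it suffices to show $(aa',cd)\in R$ and $(bb',cd)\in R$ for some $a',b'$, and then \axiom{R3} yields $(ab,cd)$. Concretely, we aim to prove the claim
\[
(ac,cd)\in R \text{ and } (bc,cd)\in R
\]
whenever $a,b\preceq_G\lca_G(cd)$ (i.e.\ $a,b\in\CC_G(\lca_G(cd))$). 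Given the claim, \axiom{R3} applied with the pair $ab$ and the elements $ac,bc$ finishes.

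To prove the claim, take a leaf $a\in\CC_G(\lca_G(cd))$ and split into cases.
\begin{itemize}
\item If $c=d$, then $a=c$ and $(cc,cc)$ holds by \axiom{R1}.
\item If $a\in\{c,d\}$ with $c\neq d$: for $a=c$ we use $(cc,cd)$, available from the seeds together with \axiom{R2}. Likewise $(dd,cd)$ gives the case $a=d$, and then \axiom{R3} on the pair $ad$ with $(dd,cd),(aa,cd)$ yields $(ad,cd)$.
\item If $a,c,d$ are pairwise distinct, then $\lca_G(ca)\preceq_G\lca_G(cd)$ because $c,a\preceq_G\lca_G(cd)$. Hence $(ca,cd)\in\rel_G^3\subseteq R$.
\end{itemize}
A cleaner uniform route is first to establish $(aa,cd)\in R$ for every $a\in\CC_G(\lca_G(cd))$: this follows via $(aa,ac)$ combined with $(ac,cd)$ by \axiom{R2}, or directly from a seed if $a\in\{c,d\}$. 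Then \axiom{R3} with $aa$ and $bb$ gives $(ab,cd)$ directly. So the whole reverse inclusion reduces to showing $(aa,cd)\in R$ for all leaves $a$ below $\lca_G(cd)$, and this is where the triple information is essential.

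The main obstacle is bookkeeping in the degenerate cases, namely $a=b$, $c=d$, or $a\in\{c,d\}$, together with checking that the support hypothesis of \axiom{R3} is met. The support condition is guaranteed because the seed pairs $(xx,xy)$ put every $xy$ into $\support_R$. The case $c=d$ forces $a=b=c$ (a leaf has only itself below it), which is handled by reflexivity.
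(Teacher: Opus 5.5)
Your proposal is correct, and its ``uniform route'' is exactly the paper's proof: obtain $(aa,cd)$ from the seed $(aa,ac)$ and $(ac,cd)\in\rel_G^3$ via \axiom{R2}, then combine $(aa,cd)$ and $(bb,cd)$ by \axiom{R3}. The only differences are that you verify directly that $\rel_G$ is closed under \axiom{R1}--\axiom{R3}, where the paper cites $\rel_G=\rel_G^+$ from \cite{LAMSH:25}, and that your first variant feeds $(ac,cd)$ and $(bc,cd)$ straight into \axiom{R3} without using transitivity; the bullet for $a=d$ is garbled, since the needed pair is $(dc,cd)=(cd,cd)$, which is just an instance of \axiom{R1}, but this is harmless.
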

\begin{proof}
Since $G$ has the 2-lca-property, $\lca_G(xy)$ is well-defined for all $x,y\in X$. 
We will use the latter without explicit mention.
Moreover, $\rel_G^3 \subseteq \rel_G$. Note that in  $G$ it always holds that  
$\lca_G(xx)\preceq_G \lca_G(xy)$ and thus, $(xx,xy)\in \rel_G$ for all $x,y\in X$. 
The latter two arguments imply that $\rel_G^{3,X}\subseteq \rel_G$.
 Since  $\rel_G = \rel_G^+$ (cf.\ \cite[Prop~21]{LAMSH:25}) 
 and the $+$-closure is monotone,  $(\rel_G^{3,X})^+\subseteq \rel_G$.
Also observe that, by definition, $\support_{\rel_G^{3,X}}^+=\pairs(X)$.

We show now that $ \rel_G \subseteq (\rel_G^{3,X})^+$.
 Let $(ab,xy)\in\rel_G$. We consider the two cases of $x,y$ 
 being distinct or not. 
 Suppose first that  $x=y$. Then $(ab,xx)\in\rel_G$ implies
 $\lca_G(ab)\preceq_G \lca_G(xx)=x$. Since $x$ is a leaf, 
 we have $ab=xx$ and thus $(ab,xy)=(xx,xx)$, which belongs to
$\rel_G^{3,X} \subseteq (\rel_G^{3,X})^+$ by definition. 

Assume now that $x\neq y$.
Then $(ab,xy)\in\rel_G$ implies
$   \lca_G(ab)\preceq_G \lca_G(xy)$ and, therefore, $a,b \preceq_G  \lca_G(xy)$.
If $a\in\{x,y\}$, then either $ab=xy$, in which case
$(ab,xy)=(xy,xy)\in(\rel_G^{3,X})^+$ by reflexivity of the $+$-closure, or $a,b,x$ and $y$ involve
three distinct leaves and $(ab,xy)\in (ab,xy)\in \rel_G^{3} \subseteq (\rel_G^{3,X})^+$.

Suppose that $a\neq x$ and $a\neq y$. Then $a \preceq_G  \lca_G(xy)$
and $x \preceq_G  \lca_G(xy)$ implies  $ \lca_G(ax) \preceq_G  \lca_G(xy)$. 
Since, $a,x,y$ are pairwise distinct,  $(ax,xy)\in\rel_G^3$.
Moreover, by definition, $(aa,ax)\in \rel_G^{3,X}$.
Since $(\rel_G^{3,X})^+$ is transitive, the latter arguments imply
 $(aa,xy)\in(\rel_G^{3,X})^+$. 
By similar arguments, $(bb,xy)\in(\rel_G^{3,X})^+$. 
If  $a=b$, then  $(ab,xy)=(aa,xy)\in(\rel_G^{3,X})^+.$
If $a\neq b$, then since $(\rel_G^{3,X})^+ $ is cross-consistent, 
the fact that $(aa,xy), (bb,xy)\in(\rel_G^{3,X})^+$
and $ab\in\pairs(X)=\support_{\rel_G^{3,X}}^+$
implies that  $(ab,xy)\in(\rel_G^{3,X})^+$.

Hence, for all cases, $(ab,xy)\in\rel_G$ implies 
$(ab,xy)\in(\rel_G^{3,X})^+$. In summary, $(\rel_G^{3,X})^+=\rel_G$ holds. 
\end{proof}

Let $G$ be a DAG on $X$ with 2-lca-property. By Proposition~\ref{prop:rel3X-closure-recovers-rel},
we have $(\rel_G^{3,X})^+=\rel_G$. Hence the canonical DAG $\cG_{(\rel_G^{3,X})^+}$ coincides with
$\cG_{\rel_G}$. Combining this with Theorem~\ref{thm:cluster-iso-of-canG_NEW} yields the following.

\begin{proposition}\label{prop:rel3-recovers-reg}
    For all DAGs $G$ on $X$ with 2-lca-property we have
    \[\reg(G) \simeq \cG_{\rel_G} = \cG_{(\rel_G^{3,X})^+}.\]
    In particular, $\reg(G)$
    can be reconstructed, up to isomorphism, from $\rel_G^3$ in polynomial time in $|X|$ by first computing
    $\rel_G = (\rel_G^{3,X})^+$ and, afterwards, computing $\cG_{\rel_G} \simeq \reg(G)$.
\end{proposition}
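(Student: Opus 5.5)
The claim follows by chaining Proposition~\ref{prop:rel3X-closure-recovers-rel} with Theorem~\ref{thm:cluster-iso-of-canG_NEW}, and then checking that each step of the reconstruction is polynomial.

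\emph{The identity.} Since $G$ has the 2-lca-property, Proposition~\ref{prop:rel3X-closure-recovers-rel} gives $(\rel_G^{3,X})^+=\rel_G$ as relations on $\pairs(X)$. The canonical DAG of Definition~\ref{def:qset} depends only on the relation it is built from. It uses the support, the equivalence $\doteqdot$, the induced partial order on the classes, the Hasse diagram, and the relabeling of the classes $[aa]$. Hence $\cG_{\rel_G}=\cG_{(\rel_G^{3,X})^+}$ holds as an equality, not merely up to isomorphism. Theorem~\ref{thm:cluster-iso-of-canG_NEW} supplies $\reg(G)\simeq\cG_{\rel_G}$, and the displayed chain follows.

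\emph{The algorithm.} The input is $\rel_G^3$ together with $X$. I will take $X$ as part of the input, in line with the phrase ``once the leaf set $X$ of $G$ is taken into account''. If only $\rel_G^3$ were given, a leaf lying in no triple comparison could not be recovered, so $X$ must be supplied separately.

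\begin{enumerate}
\item Form $\rel_G^{3,X}$ by adding the $O(|X|^2)$ pairs $(xx,xy)$ and $(yy,xy)$.
\item Compute the closure $(\rel_G^{3,X})^+$ by exhaustive application of \axiom{R1}--\axiom{R3}. The paper states, citing \cite{LAMSH:25}, that this is polynomial.
\item Build $\cG_{\rel_G}$ from $\rel_G$. By Proposition~\ref{prop:some-properties} this takes time polynomial in $|X|$.
\end{enumerate}

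The only point that needs care is making the polynomial bound for the closure explicit in $|X|$. The relation lives on $\pairs(X)\times\pairs(X)$, which has $O(|X|^4)$ elements. Each rule application adds at least one new pair, so there are at most $O(|X|^4)$ productive rounds, and each round can be checked by scanning all candidate premises in time polynomial in $|X|$. The overall running time is therefore polynomial in $|X|$. I expect no real obstacle here; the main risk is carelessness about leaf-set bookkeeping rather than any mathematical difficulty.
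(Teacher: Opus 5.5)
Your proposal is correct and follows the paper's own proof. You use Proposition~\ref{prop:rel3X-closure-recovers-rel} to get equality of the relations, hence of the canonical DAGs, then Theorem~\ref{thm:cluster-iso-of-canG_NEW} for the isomorphism, and Proposition~\ref{prop:some-properties} for building the canonical DAG in polynomial time. The only difference is that you bound the closure step by counting productive rule applications over the $O(|X|^4)$ candidate pairs, whereas the paper cites \cite[Thm~17]{LAMSH:25}.
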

\begin{proof}
By Proposition~\ref{prop:rel3X-closure-recovers-rel}, we have \((\rel_G^{3,X})^+=\rel_G\). Hence the
canonical DAG constructed from \((\rel_G^{3,X})^+\) is exactly \(\cG_{\rel_G}\).
Theorem~\ref{thm:cluster-iso-of-canG_NEW} then gives $ \reg(G)\simeq
\cG_{\rel_G}=\cG_{(\rel_G^{3,X})^+}. $ The closure \((\rel_G^{3,X})^+\) is computed by exhaustive
application of \axiom{R1}--\axiom{R3}, which is polynomial in \(|X|\), since it is a relation on
\(\pairs(X)\), and \(|\pairs(X)|=O(|X|^2)\), see \cite[Thm~17]{LAMSH:25}. 
Constructing the canonical DAG from the closed relation
is polynomial in $|X|$, by Proposition~\ref{prop:some-properties}.
\end{proof}

If no information is lost under 2-regularization, then the previous results can be combined to provide a
 constructive method to recover a specific DAGs $G$ on $X$ from $\rel^3_G$.

\begin{corollary}\label{cor:sf-2lca+-rel3}
    Let $\mathcal{C}$ be the class of 
    shortcut-free, 2-lca-relevant DAGs on $X$ with the 2-lca-property. 
        Then, for all $G\in \mathcal{C}$ we have
     \[G \simeq \cG_{\rel_G} = \cG_{(\rel_G^{3,X})^+}.\]     
    In particular, every DAG $G\in \mathcal{C}$
    can be reconstructed, up to isomorphism, from $\rel_G^3$ in polynomial time in $|X|$.
\end{corollary}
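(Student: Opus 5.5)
The plan is to combine Lemma~\ref{lem:scfree-2lcarel} with Proposition~\ref{prop:rel3-recovers-reg}. Fix $G\in\mathcal{C}$. Since $G$ is shortcut-free and 2-lca-relevant, Lemma~\ref{lem:scfree-2lcarel}(2) gives $G=\reg(G)$. Since $G$ also has the 2-lca-property, Proposition~\ref{prop:rel3-recovers-reg} applies directly and yields
\[
\reg(G)\simeq \cG_{\rel_G}=\cG_{(\rel_G^{3,X})^+}.
\]
Substituting $G=\reg(G)$ then gives $G\simeq \cG_{\rel_G} = \cG_{(\rel_G^{3,X})^+}$, which is the displayed statement.

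For the algorithmic part, I will first observe that $\rel_G^{3,X}$ is obtained from $\rel_G^3$ and the leaf set $X$ alone. The additional pairs $(xx,xy)$ and $(yy,xy)$ depend only on $X$, which is given because $\mathcal{C}$ consists of DAGs on the fixed set $X$. This set has size $O(|X|^2)$, so forming $\rel_G^{3,X}$ is polynomial. I will then compute $(\rel_G^{3,X})^+$ by exhaustive application of \axiom{R1}--\axiom{R3}, which is polynomial in $|X|$ by \cite[Thm~17]{LAMSH:25}. Finally, I will construct $\cG_{\rel_G}$ from the closed relation, again in polynomial time by Proposition~\ref{prop:some-properties}. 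This is exactly the procedure already described in Proposition~\ref{prop:rel3-recovers-reg}. The only new ingredient is that its output is isomorphic to $G$ itself rather than just to $\reg(G)$.

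I do not expect a real obstacle here. The one point needing care is that reconstruction from $\rel_G^3$ silently uses knowledge of $X$: $\rel_G^3$ may be empty, as Figure~\ref{fig:rel3} shows. I will address this by noting that $\mathcal{C}$ is defined as a class of DAGs on the fixed set $X$, so $X$ is part of the input.
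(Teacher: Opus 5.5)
Your proof is correct and follows essentially the same route as the paper: Lemma~\ref{lem:scfree-2lcarel}(2) gives $G=\reg(G)$, and Proposition~\ref{prop:rel3-recovers-reg} then yields both the displayed isomorphism and the polynomial-time reconstruction. Your remark that the leaf set $X$ must be supplied alongside $\rel_G^3$, since $\rel_G^3$ may be empty, is a useful clarification; the paper leaves this implicit in the phrase ``DAGs on $X$''.
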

\begin{proof}
    Let $\mathcal C$ be as stated and $G\in\mathcal C$. 
    Since $G$ is shortcut-free and 2-lca-relevant, 
    Lemma~\ref{lem:scfree-2lcarel} implies that $G=\reg(G)$.
    Moreover, $G$ has the 2-lca-property by assumption. 
    Hence, Proposition~\ref{prop:rel3-recovers-reg} yields
    $G = \reg(G) \simeq \cG_{\rel_G} = \cG_{(\rel_G^{3,X})^+}$
    and thus that $G$ can be reconstructed, up to isomorphism, from $\rel_G^3$ in polynomial time in $|X|$.
\end{proof}

Corollary~\ref{cor:sf-2lca+-rel3} shows that, for shortcut-free and 2-lca-relevant DAGs with the
2-lca-property, the sparse relation $\rel_G^3$ already suffices to reconstruct the entire DAG. We
now apply the same reconstruction principle to the network classes considered above. In the classes
where $G=\reg(G)$ holds, reconstruction follows directly from
Proposition~\ref{prop:rel3-recovers-reg}. In the separated phylogenetic normal case, one first
reconstructs $\reg(G)$ from $\rel_G^3$ and then recovers the deleted vertices by the canonical
reinsertion procedure used in the proof of Theorem~\ref{thm:sep-normal=>encoded_NEW}.

\begin{theorem}\label{thm:classes-poly-recover-from-rel3}
DAGs, respectively networks, with a specified set of leaves $X$ in each of the following classes
are encoded by $\rel^3$ and can be reconstructed, up to isomorphism, from $\rel_G^3$ in
polynomial time in $|X|$. Indented classes are subclasses of the preceding class and are listed
separately only to make explicit that they are covered by this result.
\begin{itemize}
\item Shortcut-free, 2-lca-relevant DAGs with the 2-lca-property.
\begin{itemize}
\item Strong-phylogenetic  normal networks with the 2-lca-property.
\begin{itemize}
\item[--] Regular level-1 networks or, equivalently, strong-phylogenetic and shortcut-free level-1 networks.
\begin{itemize}
    \item[--] Phylogenetic trees.
\end{itemize}
\end{itemize}
\item Regular networks with binary clustering systems.
\begin{itemize}
    \item[--] Regular networks whose clustering systems are closed weak hierarchies.
\end{itemize}
\end{itemize}

\item Separated, phylogenetic normal networks with the 2-lca-property.
\begin{itemize}
    \item Binary normal networks with the 2-lca-property.
    \item Separated, phylogenetic and shortcut-free level-1 networks.
    \begin{itemize}
        \item[--] Shortcut-free binary level-1 networks.
    \end{itemize}
\end{itemize}

\end{itemize}
\end{theorem}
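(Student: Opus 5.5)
The plan has two stages. First, show that the two top-level classes have polynomial-time reconstruction from $\rel^3_G$. Second, check that every indented class really is a subclass of its parent, so that encoding and reconstruction pass down by Observation~\ref{obs:subclass-inherits-encoding}. For each top-level class, encoding by $\rel^3$ will follow from Theorem~\ref{thm:3-rel-encodes-under-2lca}. This needs two facts: that every member has the 2-lca-property, and that the class is encoded by $\rel$. Reconstruction will follow from Proposition~\ref{prop:rel3-recovers-reg}, which computes $\reg(G)$ up to isomorphism from $\rel^3_G$ in time polynomial in $|X|$.

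\emph{First top-level class.} For shortcut-free, 2-lca-relevant DAGs with the 2-lca-property, encoding by $\rel$ is Theorem~\ref{thm:scfree-2lcarel=>encoded_NEW}. Reconstruction is exactly Corollary~\ref{cor:sf-2lca+-rel3}. The inclusions below it are checked one by one.
\begin{itemize}
\item Strong-phylogenetic normal networks are shortcut-free and 2-lca-relevant (Theorem~\ref{thm:str-normal=>iso-to-canonical_NEW}). The 2-lca-property is assumed in the statement.
\item Regular level-1 networks are strong-phylogenetic and shortcut-free, and hence normal by Proposition~\ref{prop:phyloSFlvl1=>normal}. They have the 2-lca-property by Proposition~\ref{prop:lvl1=>2-lca-prop}.
\item Phylogenetic trees have no hybrids, so they are trivially shortcut-free level-1 networks. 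They are 2-lca-relevant by Corollary~\ref{cor:treeLvl1-relG}. Every tree vertex of out-degree one with in-degree at most one is excluded by phylogeneticity, and the root case is covered the same way. This makes trees strong-phylogenetic, hence regular level-1 by Corollary~\ref{cor:treeLvl1-relG}.
\item Regular networks with binary clustering systems are shortcut-free. They are 2-lca-relevant and have the 2-lca-property by Theorem~\ref{thm:regular-binary-is-2lca}.
\item Closed weak hierarchies are binary by \cite[Prop.~3.1]{Barthelemy:2008}, so that class sits inside the previous one.
\end{itemize}

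\emph{Second top-level class.} For separated phylogenetic normal networks $N$ with the 2-lca-property, encoding by $\rel$ is Theorem~\ref{thm:sep-normal=>encoded_NEW}. Adding the 2-lca-property and applying Theorem~\ref{thm:3-rel-encodes-under-2lca} gives encoding by $\rel^3$.

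Reconstruction is the step I expect to need the most care. I would first compute $\cG_{(\rel_N^{3,X})^+}\simeq\reg(N)$ by Proposition~\ref{prop:rel3-recovers-reg}. Then I would apply the canonical reinsertion from the proof of Theorem~\ref{thm:sep-normal=>encoded_NEW}: for each hybrid $c$ of $\reg(N)$, subdivide all incoming arcs of $c$ through a new vertex $u_c$. That proof shows this yields $N$ up to isomorphism. The operation is linear in the size of $\reg(N)$, and $\reg(N)$ has $O(|X|^2)$ vertices because $|\lcaV(N)|\le|\pairs(X)|$. So the total running time is polynomial in $|X|$.

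The subclasses of the second class are checked in the same way.
\begin{itemize}
\item Binary networks are phylogenetic and separated by definition, so binary normal networks with the 2-lca-property are included.
\item Separated, phylogenetic, shortcut-free level-1 networks are normal by Proposition~\ref{prop:phyloSFlvl1=>normal}. They have the 2-lca-property by Proposition~\ref{prop:lvl1=>2-lca-prop}.
\item Shortcut-free binary level-1 networks are therefore included as well.
\end{itemize}

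The only genuine obstacle is careful bookkeeping of the hypotheses, in particular the 2-lca-property. Where it is not automatic (the normal classes), the statement assumes it explicitly; for the level-1 classes it comes from Proposition~\ref{prop:lvl1=>2-lca-prop}. One further point needs confirming: the reinsertion step must be well defined on an arbitrary representative of the isomorphism class $\reg(N)$. This holds because it depends only on in-degrees.
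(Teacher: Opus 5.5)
Your proposal is correct and follows essentially the same route as the paper. It obtains $\rel^3$-encoding from Theorem~\ref{thm:3-rel-encodes-under-2lca} applied to the two top-level classes (via Theorems~\ref{thm:scfree-2lcarel=>encoded_NEW} and \ref{thm:sep-normal=>encoded_NEW}), reconstruction from Proposition~\ref{prop:rel3-recovers-reg} followed by the canonical hybrid reinsertion with the $O(|X|^2)$ vertex bound, and the subclass inclusions from the same cited results, with your explicit check that phylogenetic trees are strong-phylogenetic simply spelling out what the paper states in one line.
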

\begin{proof}
We start with showing that the subclass relations listed under the class of shortcut-free,
2-lca-relevant DAGs with the 2-lca-property are correct. Phylogenetic trees are phylogenetic level-1
networks without hybrids. In particular, phylogenetic trees are regular. The class of regular
level-1 networks coincides, by \cite[Prop.~15]{Hellmuth2023}, with the class of strong-phylogenetic
and shortcut-free level-1 networks. Moreover, these networks are 2-lca-relevant by
\cite[Prop.~20]{Hellmuth2023}.  By Proposition~\ref{prop:lvl1=>2-lca-prop}, every level-1 network has
the 2-lca-property. In addition, Proposition~\ref{prop:phyloSFlvl1=>normal} implies that every
shortcut-free phylogenetic level-1 network is normal. Moreover, normal networks are shortcut-free by
definition. Together with Theorem~\ref{thm:str-normal=>iso-to-canonical_NEW}, this implies that
strong-phylogenetic normal networks with the 2-lca-property form a subclass of the class $\mathcal
C$ of shortcut-free, 2-lca-relevant DAGs with the 2-lca-property. 
Moreover,
if $N$ is a regular network such that $\mathfrak C_N$ is a closed weak hierarchy, then $\mathfrak
C_N$ is binary by \cite[Prop.~3.1]{Barthelemy:2008}. Thus regular networks whose clustering systems
are closed weak hierarchies form a subclass of regular networks with binary clustering systems. Finally, by
Theorem~\ref{thm:regular-binary-is-2lca} and definition of regularity, every regular network with a binary clustering system is shortcut-free, 2-lca-relevant and has the 2-lca-property.
Hence all subclasses listed under
$\mathcal C$ are correctly stated and covered by $\mathcal C$. By
Theorem~\ref{thm:scfree-2lcarel=>encoded_NEW}, all DAGs in $\mathcal C$ are encoded by $\rel$. Since
every DAG in $\mathcal C$ has the 2-lca-property, Theorem~\ref{thm:3-rel-encodes-under-2lca} implies
that $\mathcal C$ is encoded by $\rel^3$. Similar to
Observation~\ref{obs:subclass-inherits-encoding}, all listed subclasses of $\mathcal C$ are
therefore also encoded by $\rel^3$. Now, let $G\in\mathcal C$. Since $G$ is shortcut-free and
2-lca-relevant, Lemma~\ref{lem:scfree-2lcarel} implies $G=\reg(G)$. By
Proposition~\ref{prop:rel3-recovers-reg}, the 2-regularization $\reg(G)=G$ can be reconstructed, up
to isomorphism, from $\rel_G^3$ in polynomial time in $|X|$. The same reconstruction applies to all
listed subclasses of $\mathcal C$.

Finally, consider the class $\mathcal C$ of separated and phylogenetic normal networks with the
2-lca-property. First observe that the listed subclass relations are correct. Binary networks are,
by definition, phylogenetic and separated. Hence binary normal networks with the 2-lca-property form
a subclass of $\mathcal C$. Moreover, by Proposition~\ref{prop:phyloSFlvl1=>normal}, every
shortcut-free phylogenetic level-1 network is normal, and by
Proposition~\ref{prop:lvl1=>2-lca-prop}, every level-1 network has the 2-lca-property. Thus
separated, phylogenetic and shortcut-free level-1 networks form a subclass of $\mathcal C$. Finally,
shortcut-free binary level-1 networks form a subclass of separated, phylogenetic and shortcut-free
level-1 networks. By Theorem~\ref{thm:sep-normal=>encoded_NEW}, the class $\mathcal C$ is encoded by
$\rel$. Since all networks in $\mathcal C$ have the 2-lca-property,
Theorem~\ref{thm:3-rel-encodes-under-2lca} implies that $\mathcal C$ is encoded by $\rel^3$. Hence,
similar to Observation~\ref{obs:subclass-inherits-encoding}, all listed subclasses of $\mathcal C$
are encoded by $\rel^3$ as well. It remains to prove the polynomial-time reconstruction for $\mathcal C$
and its subclasses. Let $G\in\mathcal C$. By Proposition~\ref{prop:rel3-recovers-reg}, the
2-regularization $\reg(G)$ can be reconstructed, up to isomorphism, from $\rel_G^3$ in polynomial
time in $|X|$. For separated phylogenetic normal networks, the information removed during
2-regularization can be recovered canonically by the construction used in the proof of
Theorem~\ref{thm:sep-normal=>encoded_NEW}: one re-inserts one separated hybrid vertex above each
hybrid vertex of $\reg(G)$. Therefore, once $\reg(G)$ has been reconstructed from $\rel_G^3$,
applying this canonical reinsertion step reconstructs $G$ up to isomorphism. It remains to note that
this reconstruction is polynomial in $|X|$. Since $\reg(G)$ is, by
Proposition~\ref{prop:some-properties}, 2-lca-relevant, every vertex of $\reg(G)$ is the unique LCA of one
or two leaves. Hence $ |V(\reg(G))|\leq |\pairs(X)|\in O(|X|^2)$. Thus the number of hybrid vertices
of $\reg(G)$ is polynomial in $|X|$, and the canonical reinsertion of one separated hybrid vertex
above each such hybrid vertex can be carried out in polynomial time in $|X|$. Hence every network in
$\mathcal C$, and therefore every network in the listed subclasses of $\mathcal C$, can be
reconstructed from $\rel_G^3$, up to isomorphism, in polynomial time in $|X|$.
\end{proof}

\section{Summary and Outlook}
\label{sec:outlook}

\begin{table}[t]
\centering
\small
\renewcommand{\arraystretch}{1.35}
\setlength{\tabcolsep}{3pt}
\begin{tabular}{p{5.cm}|c|c|c|c|c}
\hline
\textbf{Class $\mathcal C$}
& \textbf{2-lca-rel.}
& \textbf{2-lca-prop.}
& \textbf{$N=\reg(N)$}
& \textbf{$\rel$-enc.}
& \textbf{$\rel^3$-enc.}\\
\hline

\multicolumn{6}{c}{\textit{Level-1 classes, including trees}}\\
\hline

phylogenetic trees
& \cite[Cor~7.10]{HL:24}
& Prop~\ref{prop:lvl1=>2-lca-prop}
& Cor~\ref{cor:treeLvl1-relG}
& Cor~\ref{cor:treeLvl1-relG}
& Thm~\ref{thm:classes-poly-recover-from-rel3}
\\

regular level-1 networks (equ.\ strong-phylogenetic and shortcut-free level-1 networks)
& Cor~\ref{cor:treeLvl1-relG}
& Prop~\ref{prop:lvl1=>2-lca-prop}
& Cor~\ref{cor:treeLvl1-relG}
& Cor~\ref{cor:treeLvl1-relG}
& Thm~\ref{thm:classes-poly-recover-from-rel3}
\\

separated, phylogenetic and shortcut-free level-1 networks
& \large $\times$; \scriptsize Fig.~\ref{fig:normal}
& Prop~\ref{prop:lvl1=>2-lca-prop}
& \large $\times$; \scriptsize Fig.~\ref{fig:normal}
& Cor~\ref{cor:normal-rel}
& Thm~\ref{thm:classes-poly-recover-from-rel3}
\\

shortcut-free binary level-1 networks
& \large $\times$; \scriptsize Fig.~\ref{fig:normal}
& Prop~\ref{prop:lvl1=>2-lca-prop}
& \large $\times$; \scriptsize Fig.~\ref{fig:normal}
& Cor~\ref{cor:normal-rel}
& Thm~\ref{thm:classes-poly-recover-from-rel3}
\\
\hline

\multicolumn{6}{c}{\textit{Regular classes defined by clustering systems}}\\
\hline

regular networks with binary clustering systems
& Thm~\ref{thm:regular-binary-is-2lca}
& Thm~\ref{thm:regular-binary-is-2lca}
& Cor~\ref{cor:regular-binary-encoded}
& Cor~\ref{cor:regular-binary-encoded}
& Thm~\ref{thm:classes-poly-recover-from-rel3}
\\

regular networks whose clustering systems are closed weak hierarchies
& Thm~\ref{thm:regular-binary-is-2lca}
& Thm~\ref{thm:regular-binary-is-2lca}
& Cor~\ref{cor:regular-binary-encoded}
& Cor~\ref{cor:regular-binary-encoded}
& Thm~\ref{thm:classes-poly-recover-from-rel3}
\\
\hline

\multicolumn{6}{c}{\textit{Normal classes}}\\
\hline

strong-phylogenetic normal networks
& Thm~\ref{thm:str-normal=>iso-to-canonical_NEW}
& \large $\times$; \scriptsize Fig.~\ref{fig:binNormal}
& Thm~\ref{thm:str-normal=>iso-to-canonical_NEW}
& Thm~\ref{thm:str-normal=>iso-to-canonical_NEW}
& ?
\\

separated, phylogenetic normal networks
& \large $\times$; \scriptsize Fig.~\ref{fig:normal}
& \large $\times$; \scriptsize Fig.~\ref{fig:binNormal}
& \large $\times$; \scriptsize Fig.~\ref{fig:normal}
& Thm~\ref{thm:sep-normal=>encoded_NEW}
& ?
\\

binary normal networks
& \large $\times$; \scriptsize Fig.~\ref{fig:normal}
& \large $\times$; \scriptsize Fig.~\ref{fig:binNormal}
& \large $\times$; \scriptsize Fig.~\ref{fig:normal}
& Cor~\ref{cor:normal-rel}
& ?
\\

normal networks
& \large $\times$; \scriptsize Fig.~\ref{fig:normal}
& \large $\times$; \scriptsize Fig.~\ref{fig:binNormal}
& \large $\times$; \scriptsize Fig.~\ref{fig:normal}
& \large $\times$; \scriptsize Fig.~\ref{fig:normal}
& \large $\times$; \scriptsize Fig.~\ref{fig:normal}
\\
\hline

\multicolumn{6}{c}{\textit{Classes with 2-lca-property}}\\
\hline

shortcut-free, 2-lca-relevant DAGs with the 2-lca-property.
& by assumption
& by assumption
& L~\ref{lem:scfree-2lcarel}
& Thm~\ref{thm:scfree-2lcarel=>encoded_NEW} 
& Thm~\ref{thm:classes-poly-recover-from-rel3}
\\

strong-phylogenetic  normal networks  with 2-lca-property
& Thm~\ref{thm:str-normal=>iso-to-canonical_NEW}
& by assumption
& Thm~\ref{thm:str-normal=>iso-to-canonical_NEW}
& Thm~\ref{thm:str-normal=>iso-to-canonical_NEW}
& Thm~\ref{thm:classes-poly-recover-from-rel3}
\\

separated, phylogenetic normal networks with the 2-lca-property.
& \large $\times$; \scriptsize Fig.~\ref{fig:normal}
& by assumption
& \large $\times$; \scriptsize Fig.~\ref{fig:normal}
& Thm~\ref{thm:sep-normal=>encoded_NEW}
& Thm~\ref{thm:classes-poly-recover-from-rel3}
\\

binary normal networks with the 2-lca-property
& \large $\times$; \scriptsize Fig.~\ref{fig:normal}
& by assumption
& \large $\times$; \scriptsize Fig.~\ref{fig:normal}
& Cor~\ref{cor:normal-rel}
& Thm~\ref{thm:classes-poly-recover-from-rel3}
\\
\hline

\end{tabular}
\caption{Summary of the results that we obtain in this paper. A reference
in a column indicates where either a property holding for some class or an encoding result is established. 
The symbol $\times$ means that the property or encoding fails in general
and a reference to a figure providing a counterexample is given. The symbol $?$ means that no claim is made here.}
\label{tab:lca-encoding-summary}
\end{table}

We have shown that LCA-constraints provide enough information to encode and reconstruct several
natural classes of DAGs and phylogenetic networks. The central structural result is that the full
relation $\rel_G$ determines precisely the 2-regularization $\reg(G)$, or equivalently the canonical
DAG $\cG_{\rel_G}$. Thus, positive encoding results arise either when $G=\reg(G)$, or when the
information removed by 2-regularization can be recovered canonically within the network class under
consideration. In Table~\ref{tab:lca-encoding-summary} 
we provide a road-map of the results that we have been established here.

This paper opens many questions for future research.
For example, several further network classes remain to be investigated. For instance, one may ask
whether LCA-constraints encode larger subclasses of tree-child networks, networks with closed
clustering systems, semi-regular or cluster networks, or other classes characterized by restrictions
on their clustering systems. 
In particular, the question marks in
Table~\ref{tab:lca-encoding-summary} indicate classes for which we do not currently know whether the
corresponding $\rel^3$-encoding result holds; settling these cases by either positive reconstruction
results or counterexamples would further clarify the scope of LCA-based encodings. Since the
canonical DAG $\cG_{\rel_G}$ is regular and is determined by the clusters of the 2-lca-relevant
vertices, cluster-theoretic characterizations may provide a useful route towards further encoding and
reconstruction results.

More generally, it would be useful to characterize exactly which classes of
DAGs are encoded by $\rel$. Corollary~\ref{cor:encoding-via-prune} gives a sufficient condition in
terms of reconstructability from $\reg(G)$, but it does not provide a full characterization. 
Hence, a natural next step is to identify necessary and sufficient conditions under which a class $\mathcal
C$ is encoded by $\rel$. 

We have shown that, under the 2-lca-property, $\rel_G^3$ determines $\rel_G$ after a
natural closure operation. It remains open to determine whether the 2-lca-property can be weakened,
or replaced by a more intrinsic condition on the support of $\rel_G^3$. In particular, it would be
interesting to characterize those DAGs for which $\rel_G^3$, together with the leaf set, already
determines $\reg(G)$.

For the strict relations $\srel_G$ we observed that $\srel_G$ is equivalent to $\rel_G$ for encoding
questions on a fixed leaf set. However, the analogous statement fails for $\srel_G^3$. This suggests
that the reflexive and equality information contained in $\rel_G^3$ is essential in some cases. A
natural problem is to identify intermediate relations $R_G$ with $ \srel_G^3 \subseteq R_G \subseteq
\rel_G^3 $ that still encode the network classes considered here.

Finally, it would be interesting to compare LCA-constraint encodings more systematically with
other encodings of networks that use displayed subnetworks such as triples, caterpillars and trinets. 
The relation $\rel_G^3$ is triple-like in that it only involves three leaves, but it
records ancestor comparisons of pairwise LCAs rather than displayed topological substructures.
Understanding precisely when these two types of information determine one another could help clarify the
connection between LCA-based and display-based reconstruction methods.

\section*{Acknowledgment}

We thank the organizers of the Bielefeld meeting ``New Directions in Experimental Mathematics'' (June 30 - July
1, 2025), where MH proposed the open problem on LCA constraints for DAGs and networks, and the BIRS Workshop
``Novel Mathematical Paradigm for Phylogenomics'' (August 24-29, 2025), where further discussions helped refine
the ideas presented in this work.

 \section*{Declaration of generative AI in the manuscript preparation process.}

During the preparation of this work, the authors used ChatGPT by OpenAI to assist with the initial
generation of Table~\ref{tab:lca-encoding-summary} and Figure~\ref{fig:class-inclusions}, to help
identify counterexamples to previous conjectures, to support literature searches, and to improve
grammar, spelling, and clarity of the text. After using this tool, the authors reviewed and edited
the content as needed and take full responsibility for the content of the published article.

\bibliographystyle{spbasic}
\bibliography{common_Encode}

\end{document}